\newlength{\myskip}
\let\proof\@undefined                        
\let\endproof\@undefined                  
\renewenvironment{enumerate}%
  {\begin{list}{\arabic{enumi}.}%
     {\topsep=0in\itemsep=0in\parsep=0pt\partopsep=0in\usecounter{enumi}}%
   }{\end{list}}
\renewenvironment{itemize}%
  {\begin{list}{$\bullet$}%
     {\topsep=0in\itemsep=0pt\parsep=0pt\partopsep=0in\usecounter{enumi}}%
   }{\end{list}\addvspace{0pt}}
\let\corollary\@undefined
\let\c@corollary\@undefined
\let\endcorollary\@undefined
\let\definition\@undefined
\let\c@definition\@undefined
\let\enddefinition\@undefined
\let\proof\@undefined
\let\endproof\@undefined
\let\theorem\@undefined
\let\c@theorem\@undefined
\let\endtheorem\@undefined
\let\lemma\@undefined
\let\c@lemma\@undefined
\let\endlemma\@undefined
\newtheoremstyle{tightItalic}
  {0.5\myskip}
  {0.5\myskip}
  {}
  {}
  {\itshape}
  {.}
  { }
  {}
\newtheoremstyle{tightBf}
  {0.5\myskip}
  {0.5\myskip}
  {}
  {}
  {\bf}
  {.}
  {.5em}
  {}
\theoremstyle{tightBf}
\newtheorem{theorem}{Theorem}
\newtheorem{corollary}{Corollary}
\newtheorem{example}{Example}
\newtheorem{statement}{Statement}
\newtheorem{lemma}{Lemma}
\theoremstyle{tightItalic}
\newtheorem*{proof}{Proof}
\renewcommand\section{\@startsection {section}{1}{\z@}%
{2\myskip}%
{2\myskip}%
{\normalfont\Large\bfseries}}
\renewcommand\subsection{\@startsection {subsection}{2}{\z@}%
{\myskip}%
{\myskip}%
{\flushleft\normalfont\bfseries}}
\newcommand{\Mid}{\:\Big|\,}
\renewcommand{\mid}{\:|\,}
\newcommand{\Natural}{\mathbb{N}}
\newcommand{\bbN}{\mathbb{N}}
\newcommand{\bbZ}{\mathbb{Z}}
\newcommand{\tab}{{\hphantom{bla}}}
\newcommand{\B}{\mathcal{B}}
\newcommand{\st}{\mbox{s.t. }}
\newcounter{myRomanCounter}
\renewcommand{\iff}{{\textrm{iff} }}
\newcommand{\VH}{{V\'aclav Hlav\'a\v c}}
\newcommand{\AS}{Alexander Shekhovtsov}
\newcommand{\red}{\color[rgb]{1,0,0}}
\newcommand*{\mypar}[1]{\par{\bf #1.}}
\def\leftbb{\mathopen{\rlap{$[$}\hskip1.3pt[}}
\def\rightbb{\mathclose{\rlap{$]$}\hskip1.3pt]}}
\newcommand{\IF}{\mbox{ \rm if }}
\newcommand{\Algorithm}[1]{Alg.\,\ref{#1}}
\newcommand{\Figure}[1]{Fig.\,\ref{#1}}
\title{A Distributed Mincut/Maxflow Algorithm Combining Path Augmentation and Push-Relabel\vskip10mm}
\author{\AS \and \VH}
\newcommand{\fig}{fig/}
\begin{document}%

%
\begin{abstract}
  We develop a novel distributed algorithm for the minimum cut problem. We primarily aim at solving large sparse problems. Assuming vertices of the graph are partitioned into several regions, the algorithm performs path augmentations inside the regions and updates of the push-relabel style between the regions. The interaction between regions is considered expensive (regions are loaded into the memory one-by-one or located on separate machines in a network). The algorithm works in sweeps -- passes over all regions.
Let $\B$ be the set of vertices incident to inter-region edges of the graph. We present a sequential and parallel versions of the algorithm which terminate in at most $2|\B|^2+1$ sweeps. The competing algorithm by Delong and Boykov uses push-relabel updates inside regions.
In the case of a fixed partition we prove that this algorithm has a tight $O(n^2)$ bound on the number of sweeps, where $n$ is the number of vertices.
We tested sequential versions of the algorithms on instances of maxflow problems in computer vision. Experimentally, the number of sweeps required by the new algorithm is much lower than for the Delong and Boykov's variant. 
Large problems (up to $10^8$ vertices and $6\cdot 10^8$ edges) are solved using under 1GB of memory in about 10 sweeps.
\end{abstract}
\clearpage
\tableofcontents
\clearpage
\hfill
\begin{minipage}{0.4\linewidth}
\small
{\it
``Needless to say, their version not only has its own real beauty, but is somewhat ``sexy'' running depth first search on the layered network constructed by (extended) breadth first search''} --
\vskip2mm
Y. Dinitz, about Even and Itai's version of the maximum flow algorithm. Citation.
\end{minipage}
\vskip10mm
\section{Introduction}
Minimum $s$-$t$ cut ({\sc mincut}) is a classical combinatorial problem with applications in many areas of science and engineering. This research was motivated by wide use of {\sc mincut}/{\sc maxflow} problems in computer vision, where large sparse instances need to be solved. In some cases an applied problem is formulated directly as a {\sc mincut}. More often, however, {\sc mincut} problems in computer vision originate from the energy minimization framework (maximum a posteriori solution in a Markov random field model).
A large subclass of Energy minimization is formed by submodular problems, which reduce to {\sc mincut} ~\cite{Ishikawa03,DSchlesinger-K2}. Instances originating from submodular energies can be very large if the number of labels in the energy minimization is large. When the energy minimization is intractable, {\sc mincut} is employed in relaxation and local search methods. The linear relaxation of pairwise energy minimization with two labels reduces to {\sc mincut}~\cite{Boros:TR91-maxflow,Kolmogorov-Rother-07-QBPO-pami} as well as the relaxation of problems reformulated in two labels~\cite{kohli:icml08}. Expansion-move, swap-move~\cite{Boykov99} and fusion-move~\cite{Lempitsky-09-fusion} algorithms formulate a local improvement step as a {\sc mincut} problem.
\par
Many applications of {\sc mincut} in computer vision use graphs of a regular structure, with vertices arranged into an N-D grid and edges uniformly repeated (\eg 3D segmentation models~\cite{BJ01,BF06,BK03}, 3D reconstruction models~\cite{LB06,BL06,LB07}). Because of this regular structure, the graph itself need not be stored in the memory, only the edge capacities, allowing relatively large instances to be solved by such a specialized implementation. However, in many cases, it is advantageous to have a non-regular structure (\eg using an adaptive tetrahedral volume in 3D reconstruction~\cite{Labutit09,Jancosek11}). Such applications would benefit from a large-scale generic {\sc mincut} solver.
%
\par
The previous research mostly focused on speeding up {\sc mincut} by parallel computations. The following important distinction is to be made: the {\em parallel} computational model assumes that there are several computation units which share the same memory, whereas the {\em distributed} computational model assumes that the computation units have their own separate memory and exchanging the information between them is expensive. A distributed algorithm has therefore to divide the computation and the problem data between the units and keep the communication rate low. 
We will consider distributed algorithms, operating in the following two practical usage modes:
\begin{itemize}
\item Sequential (or {\em streaming}) mode, which uses a single computer with a limited memory and a disk storage, reading, processing and writing back a part of data at a time. Since it is easier for analysis and implementation, this mode will be the main focus of this work.
\item Parallel mode, in which the units are computers in a network. We show that sequential algorithms we consider admit full parallelization and prove the correctness and termination of the parallel versions. We also propose their experimental evaluation and comparison to two other state-of-the-art methods.
\end{itemize}
To represent the cost of information exchange between the units, we use a special related measure of complexity. We call a {\em sweep} the event when all units of a distributed algorithm recalculate their data once. The number of sweeps is roughly proportional to the amount of communication in the parallel mode or disk operations in the streaming mode.
\par
\mypar{Previous Work}
A variant of path augmentation algorithm was shown in~\cite{BK-maxflow} to have the best performance on computer vision problems among sequential solvers. There were several proposals how to parallelize it. Partially distributed implementation~\cite{Liu10} augments paths within disjoint regions first and then merges regions hierarchically. In the end, it still requires finding augmenting paths in the whole problem. Therefore it cannot be used to solve a large problem by distributing it over several computers or by using a limited memory and a disk storage. For the shared memory model there was reported~\cite{Liu10} a near-linear speed-up with up to 4 CPUs for 2D and 3D segmentation problems. 
\par
A distributed algorithm was obtained in~\cite{Strandmark10} using dual decomposition approach. The subproblems are {\sc mincut} instances on the parts of the graph (regions) and the master problem is solved using the subgradient method. This approach requires solving {\sc mincut} subproblems with real valued capacities and does not have a polynomial bound on the number of iterations. The integer algorithm proposed in~\cite{Strandmark10} is not guaranteed to terminate. Our experiments (Sect.~\ref{sec:exp_parallel}) showed that it did not terminate on some of the instances in 1000 sweeps. In Sect.~\ref{sec:DD} we relate dual variables in this method to flows. 
\par
The push-relabel algorithm~\cite{GT88} performs many local atomic operations, which makes it a good choice for a parallel or distributed implementation. A distributed version~\cite{Goldberg91} runs in $O(n^2)$ time using $O(n)$ processors and $O(n^2\sqrt{m})$ messages. However, it is crucial to implement gap relabel and global relabel heuristics for good practical performance~\cite{Cherkassky-94}. The global relabel heuristic can be parallelized~\cite{Anderson95}, but it is difficult to distribute. We should note however, that in the experiments with computer vision problems we made, the global relabel heuristic was not essential.
Delong and Boykov~\cite{Delong08} proposed a coarser granulation of push-relabel, associating a subset of vertices (a region) to each processor. Push and relabel operations inside a region are decoupled from the rest of the graph. This allows to process several non-interacting regions in parallel or run in a limited memory, processing few regions at a time. The gap and relabel heuristics, restricted to the regions~\cite{Delong08} are powerful and distributed at the same time.
\par
\mypar{Contribution}
We revisit the algorithm of Delong and Boykov~\cite{Delong08} in the case of a fixed partition into regions. We study a sequential variant and a novel parallel variant of their algorithm, which allows computation on neighboring interacting regions to run concurrently using a conflict resolution similar to the asynchronous parallel push-relabel~\cite{Goldberg91}.
We prove that both variants have a tight $O(n^2)$ bound on the number of sweeps. We then construct a new algorithm, which works with the same partition of the graph into regions but is guided by a different distance function than push-relabel.
%
%
%
\par
Given a fixed partition into regions, we introduce a distance function which counts the number of region boundaries crossed by a path to the sink. Intuitively, it corresponds to the amount of costly operations -- network communications or loads-unloads of the regions in the streaming mode. The algorithm maintains a labeling, which is a lower bound on the distance function. Within a region, we first augment paths to the sink and then paths to the boundary nodes of the region in the order of their increasing labels. Thus the flow is pushed out of the region in the direction given by the distance estimate.
We present a sequential and parallel versions of the algorithm which terminate in at most $2|\B|^2+1$ sweeps, where $\B$ is the set of all boundary nodes (incident to inter-region edges). 
\mypar{Other Related Work}
The following works do not consider a distributed implementation but are relevant to our design.
Partial Augment-Relabel algorithm (PAR)~\cite{Goldberg-PAR} in each step augments a path of length $k$. It may be viewed as a lazy variant of push-relabel, where actual pushes are delayed until it is known that a sequence of $k$ pushes can be executed. The algorithm~\cite{Goldberg-Rao-98} incorporates the notion of a length function and a valid labeling \wrt this length. It can be seen that the labeling maintained by our algorithm corresponds to the length function assigning 1 to boundary edges and 0 to intra-region edges. In~\cite{Goldberg-Rao-98} this generalized labeling is used in the context of blocking flow algorithm but not within push-relabel.

\section{Mincut and Push-Relabel}\label{sec:push-relabel}
We will be solving {\sc mincut} problem by finding a maximum preflow\footnote{A maximum preflow can be completed to a maximum flow using flow decomposition, in $O(m \log m)$ time. Because we are primarily interested in the minimum cut, we do not consider this step or whether it can be distributed.}. In this section, we give basic definitions and introduce the push-relabel framework~\cite{GT88}.
\par
By a {\em network} we call the tuple $G=(V,E,s,t,c,e)$, where $V$ is a set of vertices; $E\subset V\times V$, thus $(V,E)$ is a directed graph; $s,t\in V$, $s\neq t$, are {\em source} and {\em sink}, respectively; $c\colon E \to \bbN_0$ is a capacity function; and $e\colon V\backslash\{s,t\} \to \bbN_0$ is an {\em excess} function. Excess can be equivalently represented as additional edges from the source, but we prefer this explicit form. For convenience we let $e(s)=\infty$ and $e(t)=0$. We also denote $n=|V|$ and $m=|E|$.
\par
For $X,Y\subset V$ we will denote $(X,Y) = E\cap (X\times Y)$. 
For $C\subset V$ such that $s\in C$, $t\notin C$, the set of edges $(C,\bar C)$, with $\bar C = V\backslash C$ is called an $s$-$t$ {\em cut}. The 
{\sc mincut} problem is
\begin{equation}
\min \Big\{ \sum\limits_{(u,v)\in (C,\bar C)}c(u,v) + \sum_{v\in \bar C } e(v) \Mid C\subset V,\ s\in C,\ t\in \bar C \Big\}.
\end{equation}
The objective is called the {\em cost} of the cut. Without a loss of generality, we assume that $E$ is symmetric -- if not, the missing edges are added and assigned zero capacity. 
\par
A {\em preflow} in $G$ is a function $f\colon E\to \bbZ$ satisfying the following constraints:
\begin{subequations}
\label{preflow-constraints}
\begin{align}
\label{flow-1}
f(u,v) \leq c(u,v) \tab \forall (u,v)\in E &\tab \mbox{(capacity constraint),}\\
\label{flow-2}
f(u,v) = -f(u,v) \tab \forall (u,v)\in E &\tab \mbox{(antisymmetry),}\\
\label{preflow-3}
e(v)+\sum\limits_{u\mid (u,v)\in E} f(u,v) \geq 0 \tab \forall v\in V &\tab \mbox{(preflow constraint).}
\end{align}
\end{subequations}
\par
A {\em residual network} \wrt preflow $f$ is a network $G_f=(V,E,s,t,c_f,e_f)$ with the capacity and excess functions given by
\begin{subequations}
\begin{align}
c_f &= c-f,\\
e_f(v) &= e(v)+\sum_{u\mid(u,v)\in E} f(u,v), \tab \forall v\in V\backslash\{t\}.
\end{align}
\end{subequations}
By constraints~\eqref{preflow-constraints} it is $c_f\geq 0$ and $e_f \geq 0$. The costs of all $s$-$t$ cuts differ in $G$ and $G_f$ by a constant called the {\em flow value}, $|f| = \sum\limits_{u\mid(u,t)\in E} f(u,t)$. Network $G_f$ is thus up to a constant {\em equivalent} to network $G$ and $|f|$ is a trivial lower bound on the cost of a cut.
Dual to {\sc mincut} is the problem of maximizing this lower bound, \ie finding a maximum preflow:
\begin{equation}
\max\limits_{f} |f| \tab \st \mbox{ constraints~\eqref{preflow-constraints}}.
\end{equation}
\par
We say that $w\in V$ is {\em reachable} from $v\in V$ in network $G$ if there is a path (possibly of length 0) from $v$ to $w$ composed of edges with strictly positive capacities. This relation is denoted by $v\rightarrow w$. If $w$ is not reachable from $v$ we write $v\nrightarrow w$. For any $X,Y \subset V$, we write $X\rightarrow Y$ if there exist $x\in X$, $y\in Y$ such that $x\rightarrow y$. Otherwise we write $X\nrightarrow Y$.
\par
A preflow $f$ is maximum \iff $\{v\mid e(v)>0\} \nrightarrow t$ in $G_f$. In that case the cut $(\bar T, T)$ with $T=\{v\in V \mid v\rightarrow t\mbox{ in } G_f\}$ has value $0$ in $G_f$. Because all cuts are non-negative it is a minimum cut. 
\par
A {\em Distance} function $d^*\colon V \to \Natural_0$ in $G$ assigns to $v\in V$ the length of the shortest path from $v$ to $t$, or $n$ if no such path exists. A shortest path cannot have loops, thus its length is not greater than $n-1$. Let us denote $d^{\infty} = n$.
%
\par
A {\em labeling} $d\colon V \to \{0,\dots, d^{\infty}\}$ is {\em valid} in $G$ if $d(t)=0$ and $d(u) \leq d(v)+1$ for all $(u,v)\in E$ such that $c(u,v)>0$. Any valid labeling is a lower bound on the distance $d^*$ in $G$. Not every lower bound is a valid labeling. A vertex $v$ is called {\em active} \wrt $(f,d)$ if $e_f(v)>0$ and $d(v)<d^{\infty}$. 
\par
All algorithms in this paper will use the following common initialization.
\par\noindent
\begin{procedure}[H]
\SetKwFunction{Init}{Init}
\SetKwFunction{Push}{Push}
\SetKwFunction{Relabel}{Relabel}
\SetKwFunction{Discharge}{Discharge}
$f:=$ preflow saturating all $(\{s\},V)$ edges; $G := G_f$; $f:=0$\;
$d:=0$, $d(s):= d^{\infty}$\;
\caption{Init()}
\end{procedure}
The generic push-relabel algorithm~\cite{GT88} starts with \Init and applies the following \Push and \Relabel operations while possible:
\begin{itemize}
\item
\Push{$u,v$} is applicable if $u$ is active and $c_f(u,v)>0\ \mbox{and}\ d(u)=d(v)+1$. The operation increases $f(u,v)$ by $\Delta$ and decreases $f(v,u)$ by $\Delta$, where $\Delta = \min(e_f(u),c_f(u,v))$.
\item
\Relabel{$u$} is applicable if $u$ is active and $\forall v\mid (u,v)\in E$, $c_f(u,v)>0$  it is $d(u)\leq d(v)$. It sets $d(u):= \min\big(d^{\infty},\min\{ d(v)+1 \mid (u,v)\in E,\ c_f(u,v)>0\}\big)$.
\end{itemize}
If $u$ is active then either \Push or \Relabel operation is applicable to $u$. The algorithm preserves validity of labeling and stops when there are no active nodes. Then for any $u$ such that $e_f(u)>0$, we have $d(u)=d^{\infty}$ and therefore $d^*(u)=d^{\infty}$ and $u\nrightarrow t$ in $G_f$, so $f$ is a maximum preflow.
%
%
%
\section{Region Discharge Revisited}\label{sec:PRD}
We now review the approach of Delong and Boykov~\cite{Delong08} and reformulate it for the case of a fixed graph partition. We then describe generic sequential and parallel algorithms which can be applied with both push-relabel and augmenting path approaches.
\par
Delong and Boykov~\cite{Delong08} introduce the following operation.
The {\em discharge} of a {\em region} $R\subset V\backslash\{s,t\}$ applies \Push and \Relabel to $v\in R$ until there are no active vertices left in $R$. This localizes computations to $R$ and its {\em boundary}, defined as
\begin{equation}
B^R = \{w \mid \exists u\in R\ (u,w) \in E, w\notin R,\ w\neq s,t \}.
\end{equation}
 When a \Push is applied to an edge $(v,w)\in (R,B^R)$, the flow is sent out of the region.
We say that two regions $R_1, R_2\subset V\backslash\{s,t\}$ {\em interact} if $(R_1,R_2)\neq \emptyset$. Discharges of  non-interacting regions can be performed in parallel since the computations in them do not share the data. 
The algorithm proposed in~\cite{Delong08} 
repeats the following steps until there are no active vertices in $V$:
\par
\begin{compactenum}
\item Select several non-interacting regions, containing active vertices.
\item Discharge the selected regions in parallel, applying region-gap and region-relabel heuristics.
\item Apply global gap heuristic.
\end{compactenum}
\vskip0.5ex
\par
All heuristics (global-gap, region-gap, region-relabel) serve to improve the distance estimate. They are very important in practice, but do not affect theoretical properties and will be discussed in Section~\ref{sec:implementation}, devoted to the implementation.
\par 
While the regions in~\cite{Delong08} are selected dynamically in each iteration trying to divide the work evenly between CPUs and cover the most of the active nodes, we restrict ourselves to a fixed collection of regions $(R_k)_{k=1}^K$ forming a partition of $V\backslash\{s,t\}$ and let each region-discharge to work on its own separate subnetwork. 
We define a {\em region network} $G^R = (V^R, E^R,s,t, c^R, e^R)$, where $V^R=R\cup B^R\cup\{s,t\}$; $E^R = (R\cup \{s,t\}, R\cup \{s,t\})\cup(R,B^R)\cup (B^R,R)$; 
$c^R(u,v) = c(u,v)$ if $(u,v)\in E^R\backslash (B^R,R)$ and $0$ otherwise; $e^R=e|_{R\cup\{s,t\}}$ (the restriction of function $e$ to its subdomain ${R\cup\{s,t\}}$). This network is illustrated in \Figure{fig:region-network}(b).
\begin{figure}[!ht]
\centering
\begin{tabular}{ccc}
\begin{tabular}{c}
\includegraphics[width=0.25\linewidth]{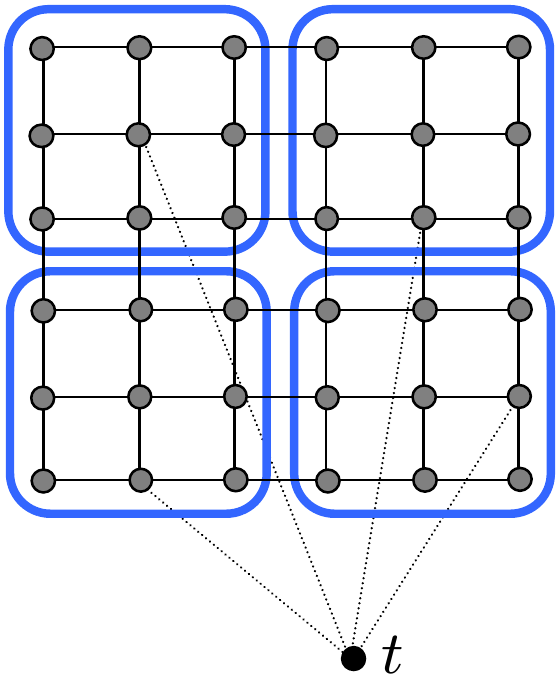}
\end{tabular} & \parbox{0.1\linewidth}{\ } &
\begin{tabular}{c}
\includegraphics[width=0.25\linewidth]{\fig/region_networka}
\end{tabular}\\
(a) & & (b)
\end{tabular}
\caption{(a) Partition of a network into 4 regions. (b) Region Network}
\label{fig:region-network}
\end{figure}
Note that the capacities of edges coming from the boundary, $(B^R,R)$, are set to zero. Indeed, these edges belong to a neighboring region network. The region discharge operation of~\cite{Delong08}, which we refer to as Push-relabel Region Discharge (PRD), can now be defined as follows.
\par
\begin{procedure}[H]
\DontPrintSemicolon
\SetKwFunction{PRD}{PRD}
\tcc{assume $d\colon V^R\to \{0,\dots, d^{\infty}\}$ valid in $G^R$}
\While{$\exists v\in R$ active}{
	apply \Push or \Relabel to $v$ \tcc*{changes $f$ and $d$}
	apply region gap heuristic (Section~\ref{sec:implementation}) \tcc*{optional}
}
\SetAlgoProcName{Procedure $(f,d)=$}{PRD}
\caption{PRD($G^R$,$d$)\label{PRD}}
\end{procedure}

\mypar{Generic Region Discharge Algorithms}
We now define generic sequential \Algorithm{alg:sequential} and parallel \Algorithm{alg:parallel}, which use a black-box \Discharge function as a subroutine. The sequential algorithm takes regions one-by-one from the partition and applies the \Discharge operation to them. 
For non-interacting regions, their \Discharge operations are independent and can be executed in parallel. The sequential algorithm can be implemented as several phases, where in each phase a subset of non-interacting regions from the partition is taken and processed in parallel. The number of phases required in a general case will correspond to the minimal coloring of the region interaction graph. Instead, our parallel algorithm calls \Discharge for all regions concurrently and then resolves conflicts in the flow similarly to the asynchronous parallel push-relabel~\cite{GT88}. A conflict occurs if two interacting regions increase their labels on the vertices of a boundary edge $(u,v)$ simultaneously and try pushing flow over it. In such a case, we accept the labels, but do not allow the flow to cross the boundary in one of the directions.
\par
In the case $\Discharge$ is \PRD the sequential and parallel algorithms are implementing the push-relabel approach and will be referred to as S-PRD and P-PRD respectively. S-PRD is a sequential variant of~\cite{Delong08} and P-PRD is a novel variant, based on results of~\cite{GT88} and~\cite{Delong08}.
\par
\begin{algorithm}[H]
\DontPrintSemicolon
\Init\;
\While(\tcc*[f]{a sweep}){there are active vertices}{
	\For{$k = 1,\dots K$}{
		Construct $G^{R_k}$ from $G$ \;
		$(f',d') := \Discharge(G^{R_k},d|_{V^{R_k}})$ \;
		$G := G_{f'}$ \tcc*{apply $f'$ to $G$}
		$d|_{R_k} := d'|_{R_k}$ \tcc*{update labels}
		apply global gap heuristic (Section~\ref{sec:implementation}) \tcc*{optional}
	}
}
\caption{Sequential Region Discharge}\label{alg:sequential}
\end{algorithm}
\vspace{-2\myskip}\vspace{-2pt}
\begin{algorithm}[H]
\DontPrintSemicolon
\Init\;
\While(\tcc*[f]{a sweep}){there are active vertices}{
		$(f'_k,d'_k) := \Discharge(G^{R_k},d|_{V^{R_k}})$ \ $\forall k$ \tcc*{in parallel}
		$d'|_{R_k} := d'_k|_{R_k}$ $\forall k$ \tcc*{fuse labels}
		$\alpha(u,v) := \leftbb d'(u) \leq d'(v) +1\rightbb$ \ $\forall (u,v)\in (\B,\B)$\label{fuse-flow-construction} \tcc*{valid pairs}
		\tcc{fuse flow}
		$f'(u,v) := \begin{cases}
			\alpha(v,u) f'_k(u,v)+\alpha(u,v) f'_j(u,v) & \IF (u,v)\in (R_k,R_j)\\
			f'_k(u,v) & \IF (u,v)\in (R_k,R_k)\\
		\end{cases}
		$ \;
		$G := G_{f'}$ \tcc*{apply $f'$ to $G$}
		$d := d'$ \tcc*{update labels}
		global gap heuristic (Section~\ref{sec:implementation}) \tcc*{optional}
}
\caption{Parallel Region Discharge}\label{alg:parallel}
\end{algorithm}
\par

\par
We prove below that both S-PRD and P-PRD terminate with a valid labeling in at most $2n^2$ sweeps. Parallel variants of push-relabel~\cite{Goldberg-PhD} have the same bound on the number of sweeps. However, they perform much simpler sweeps, processing every node only once, compared to S/P-PRD. A natural question is whether the $O(n^2)$ bound is not too loose for S/P-PRD. 
In Appendix A we give an example of a graph, its partition into two regions and a sequence of valid \Push and \Relabel operations, implementing S/P-PRD which takes $O(n^2)$ sweeps to terminate. The set of inter-region edges in this example is also constant, which shows that a better bound in terms of these characteristics is not possible. 
%
%
%
%
%
\subsection{Complexity of Sequential Push-Relabel Region Discharge}
Our proof follows the main idea of the similar result for parallel push-relabel in~\cite{Goldberg-PhD}. The main difference is that we try to keep \Discharge operation as abstract as possible. Indeed, it will be seen that proofs of termination of other variants follow the same pattern, using several important properties of the \Discharge operation, abstracted from the respective algorithm. Unfortunately, to this end we do not have a unified proof, so we will analyze all cases separately.
\par
%
%
%
\begin{statement}[Properties of PRD]\label{PRD properties}\ \\
Let $(f',d') = \PRD(G^R,d)$, then
\begin{compactenum}
\item\label{RD-flow} there are no active nodes in $R$ \wrt $(f',d')$ \hfill (optimality)
\item\label{RD-monotone} $d' \geq d$; $d'|_{B^R} = d|_{B^R}$ \hfill (labeling monotony)
\item\label{RD-LB} $d'$ is valid in $G^R_{f'}$ \hfill (labeling validity)
\item\label{RD-flow-dir} $f'(u,v) > 0 \Rightarrow d'(u)> d(v)$, $\forall (u,v)\in E^R$ \hfill (flow direction)
\end{compactenum}
\end{statement}
\begin{proof}
\begin{enumerate}
\item[1.] Optimality. This is the stopping condition of \PRD.
\item[2,3.] Labeling validity and monotony: labels are never decreased and the \Push operation preserves labeling validity~\cite{GT88}. Labels not in $R$ are not modified.
%
%
\item[4.] Flow direction: let $f'(u,v)>0$, then there was a push operation from $u$ to $v$ in some step. Let $\tilde d$ be the labeling on this step. We have $\tilde d(u) = \tilde d(v)+1$. Because labels never decrease, $d'(u)\geq \tilde d(u)> \tilde d(v) \geq d(v)$.
%
\qed
\end{enumerate}
\end{proof}
These properties are sufficient to prove correctness and the complexity bound of S-PRD. They are abstract from the sequence of \Push and \Relabel operation done by \PRD and for a given pair $(f',d')$ they are easy to verify. For correctness of S-PRD we need to verify that it maintains a labeling, which is globally valid.
\begin{statement}\label{labeling-extension}
Let $d$ be a valid labeling in $G$. Let $f'$ be a preflow in $G^R$ and $d'$ a labeling in $G^R_{f'}$ satisfying properties~\ref{RD-monotone} and~\ref{RD-LB} of Statement~\ref{PRD properties}. Extend $f'$ to $E$ by letting $f'|_{E\backslash E^R}=0$ and extend $d'$ to $V$ by letting $d'|_{V\backslash R} = d$. Then $d'$ is valid in $G_{f'}$.
\end{statement}
\begin{proof}
We have that $d'$ is valid in $G^R_{f'}$. For edges $(u,v)\in(V\backslash R,V\backslash R)$ labeling $d'$ coincides with $d$ and $f'(u,v)=0$. It remains to verify validity on edges $(v,u)\in (B^R,R)$ in the case $c^R_f(v,u)=0$ and $c_f(v,u) > 0$. (These are the incoming boundary edges which are zeroed in the network $G^R$). Because $0 = c^R_f(v,u)= c^R(v,u) - f(v,u) = -f(v,u)$, we have $c_f(v,u)=c(v,u)$. 
Since $d$ was valid in $G$,  $d(v) \leq d(u)+1$. The new labeling $d'$ satisfies $d'(u) \geq d(u)$ and $d'(v) = d(v)$. It follows that $d'(v) = d(v)\leq d(u)+1 \leq d'(u)+1$. Hence $d'$ is valid in $G_{f'}$.
\qed
\end{proof}
We can now state the termination.
\begin{theorem}\label{T1}
Sequential PRD terminates in at most $2 n^2$ sweeps.
\end{theorem}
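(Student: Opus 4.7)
The plan is to control Sequential PRD through the potential $\Phi = \sum_{v \in V} d(v)$. By Property~\ref{RD-monotone} of Statement~\ref{PRD properties}, each call to PRD only raises labels, so $\Phi$ is non-decreasing across the entire run. Each label is an integer in $\{0,\dots,d^\infty\}$ with $d^\infty = n$, hence $\Phi \le n \cdot n = n^2$; right after Init we have $\Phi = n$ (only $d(s) = n$). If each sweep that does not satisfy the termination test strictly increases $\Phi$ by at least one, the total number of sweeps is at most $n^2 - n + 1 \le 2n^2$.

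The first auxiliary step is to check that at the start of every sweep the labeling $d$ is valid in the current residual graph $G_f$. This is proved by induction through the inner \textbf{for} loop: validity holds after Init, and if it holds before the $k$-th region discharge, then PRD on $G^{R_k}$ returns $(f',d')$ whose Properties~\ref{RD-monotone} and~\ref{RD-LB} from Statement~\ref{PRD properties} exactly match the hypotheses of Statement~\ref{labeling-extension}, preserving global validity after the update. Iterating over $k$ extends the invariant through the full sweep.

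For the progress step the plan is to argue the contrapositive: if a sweep performs no Relabel, the algorithm terminates during or by the end of that sweep. When $d$ is unchanged throughout a sweep, every push happens on an edge with $d(u) = d(v) + 1$, because Property~\ref{RD-flow-dir} gives $d(u) > d(v)$ and the maintained global validity gives $d(u) \le d(v) + 1$; hence the cumulative flow pushed in the sweep lives on the acyclic steepest-descent subgraph of the fixed $d$. The per-region stopping condition (Property~\ref{RD-flow}) ensures every region exits its discharge with no active vertex, and a no-Relabel sweep cannot encounter an active vertex at level $0$ (such a vertex would immediately require a Relabel), so the only label-$0$ vertex that can absorb pushes is $t$. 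Consequently, every chain of pushes in the sweep ends either at $t$ or at a stuck vertex of label $d^\infty$.

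The main obstacle will be to formalise that this per-sweep descent really eliminates every active vertex, given that a later region's discharge can re-inject excess into an already-processed region through boundary edges. I plan to handle this by induction on the label value: for the maximum label $M$ over active vertices at the start of the sweep, show that by the time the owning region has been processed all excess originally at level $M$ has been pushed strictly below $M$, and iterate downward along the fixed descent structure. Since labels do not change within a no-Relabel sweep and the descents terminate only at $t$ or at vertices of label $d^\infty$, the final state carries positive excess only at inactive stuck vertices, so no active vertex survives and the termination test is met. Combined with the potential bound, this yields the claimed bound of $2n^2$ sweeps.
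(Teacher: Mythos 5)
Your argument for maintaining global validity of the labeling (via Statement~\ref{labeling-extension} and Properties~\ref{RD-monotone}, \ref{RD-LB}) and your bound of $n^2$ on the total label increase are both fine and match the paper. The fatal step is the progress claim: \emph{``if a sweep performs no Relabel, the algorithm terminates during or by the end of that sweep.''} This is false. The obstacle you yourself name --- a later region's discharge re-injecting excess into an already-processed region --- is not overcome by your downward induction on labels, because once region $R_j$ has been discharged in the current sweep it is \emph{not revisited}: any excess pushed into $R_j$ (or into a boundary vertex owned by $R_j$) by a subsequent region $R_k$, $k>j$, simply sits there, active, until the next sweep. Concretely, take two singleton regions $R_1=\{a\}$, $R_2=\{b\}$ with $d(b)=2$, $d(a)=1$, a residual edge $(b,a)$ and a residual edge $(a,t)$, and excess only at $b$. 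The sweep processes $R_1$ (nothing to do), then $R_2$ pushes to $a$ with no Relabel anywhere; the sweep ends with $a$ active, so $\Phi=\sum_v d(v)$ is unchanged and the algorithm has not terminated. Hence your potential can stall on non-terminal sweeps and the count $n^2-n+1$ does not follow.

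This is precisely why the paper uses a different potential, $\Phi=\max\{d(v)\mid v \mbox{ active}\}$, which is \emph{not} monotone: a relabel-free sweep is only shown to decrease this maximum by $1$ (excess at the top active level cannot be replenished from below, by the flow direction property), while sweeps that do relabel may increase it, but by no more than the total label increase in that sweep. Bounding the total rise by $n^2$ then bounds the total number of unit falls by $n^2$ as well, giving $2n^2$. If you want to salvage a sum-of-labels argument you would have to show that some label strictly increases in every non-terminal sweep, which the example above refutes; so the two-sided accounting on the maximum active label (or an equivalent device) appears essential.
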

\begin{proof}
\begin{itemize}
\item Value of $d$ does not exceed $n$ for every node.
\item Because there are $n$ nodes, $d$ can be increased at most $n^2$ times.
\item Let $\Phi = \max \{d(v)\mid v\in V,\ \mbox{ $v$ is active in $G$ } \}$. This value may go up and down during the algorithm, but the total number of times it can change is bounded.
\end{itemize}
\begin{enumerate}
\item Each sweep the increase of $\Phi$ is no more than the total increase of $d$.
\begin{quote}
Let us consider a discharge of region $R_k$. 
Let $(f',d')$ be the the flow and labeling computed by the discharge. Let $f'$ be extended to $E$ by setting $f'|_{E\backslash E^R} = 0$ and $d'$ be extended to $V$ by setting $d'|_{V\backslash R} = d|_{V\backslash R}$.
Let $G'=G_{f'}$ and $\Phi'$ be the new function after applying the discharge. We need to show that
\begin{equation}
\Phi' - \Phi \leq \sum_{v\in R_k} [d'(v)-d(v)]
\end{equation}
Let the maximum in the definition of $\Phi'$ be achieved at a node $v$, so $\Phi' = d'(v)$. Then either $v\notin R_k\cup B^{R_k}$, in which case $\Phi' \leq \Phi$ (because the label and the excess of $v$ in $G$ and $G'$ are the same), or $v\in R_k \cup B^{R_k}$ and there exists a path $(v_0,v_1,\dots v_l)$, $v_l=v$, $v_0,\dots v_{l-1}\in R_k$, such that $f'(v_{i-1},v_{i}) > 0$, $i=1\dots l$ and $v_0$ is active in $G$. We have $\Phi \geq d(v_0)$, therefore
\begin{equation}\label{T1eq1}
\begin{split}
\Phi' - \Phi \leq d'(v_l) - d(v_0) = \sum_{i=1}^l[d'(v_i) - d'(v_{i-1})] + [d'(v_0) - d(v_0)] \\
\stackrel{(a)}{\leq} \sum_{i=0}^l[d'(v_i) - d(v_{i})]\\
\stackrel{(b)}{\leq} \sum_{v\in R_k \cup B^{R_k}} [d'(v)-d(v)] 
\stackrel{(c)}{=} \sum_{v\in R_k} [d'(v)-d(v)],
\end{split}
\end{equation}
where inequality (a) is due to the flow direction property (statement~\ref{PRD properties}.\ref{RD-flow-dir}) which implies $d'(v_{i-1})>d(v_{i})$, the inequality (b) is due to monotony property (statement~\ref{PRD properties}.\ref{RD-monotone}) and to $v_i \subset R_k \cup B^{R_k}$; and the equality (c) is due to $d'|_{B^{R_k}} = d|_{B^{R_k}}$.
\par
Summing over all regions, which are disjoint, we obtain the claim.
\end{quote}
\item If $d$ has not increased during a sweep ($d'=d$) then $\Phi$ decreases at least by 1. Indeed, let us consider the set of vertices having the label greater or equal to the label of the highest active node, $H=\{v\mid d(v)\geq \Phi\}$. These vertices do not receive flow during all discharge operations due to the flow direction property. After discharging $R_k$, there are no active vertices in $R_k \cap H$ (statement~\ref{PRD properties}.1). Therefore, there are no active vertices in $H$ after the full sweep.
\end{enumerate}
\par
In the worst case, $\Phi$ can increase by one $n^2$ times and decrease by one $n^2$ times. In at most $2 n^2$ sweeps there is no active excess.
\qed
\end{proof}
Once the algorithm terminated with network $G$, equivalent to the initial one, we have that labeling $d$ is valid in $G$ and there are no active vertices. Hence the cut $(\bar T,T)$, with $T=\{v\mid v\rightarrow t \mbox{ in } G\}$ is a minimum cut.
%
%

%
%
\subsection{Complexity of Parallel Push-Relabel Region Discharge}\label{sec:PRDparallel}
Let us show that the following properties hold for a sweep of P-PRD. 
\begin{statement}
Let $d$ be a valid labeling in the beginning of a sweep of P-PRD. Then the pair of fused flow and labeling $(f',d')$ satisfies:
\begin{enumerate}
\item $d' \geq d$; \hfill {\em (labeling monotony)}
\item $d'$ is valid in $G_{f'}$; \hfill {\em (labeling validity)}
\item $f'(u,v) > 0 \Rightarrow d'(u)> d(v)$, $\forall (u,v)\in E$ ; \hfill {\em (flow direction)}
\end{enumerate}
\end{statement}
\begin{proof}
\begin{enumerate}
\item We have $d'_{R^k} \geq d|_{R^k}$ for all $k$.
\item We have to prove validity for the boundary edges, where the flow and the labeling are fused from different regions. It is sufficient to study the two regions case. Denote the regions $R_1$ and $R_2$. The situation is completely symmetric \wrt orientation of a boundary edge $(u,v)$. Let $u\in R_1$ and $v\in R_2$. Let only $d'(v) \leq d'(u)+1$ be satisfied and not $d'(u) \leq d'(v)+1$. By the construction in step~\ref{fuse-flow-construction} of Alg.~\ref{alg:parallel} flow $f_2$ is canceled and $f'(u,v) = f'_{1}(u,v) \geq 0$. Suppose $c_{f_1'}(u,v) >0$, then we have 
that $d_1'(u) \leq d_1'(v)+1$, because $d_1'$ is valid in $G^{R_1}_{f'_1}$. It follows that $d'(u) = d_1'(u) \leq d_1'(v)+1 = d(v)+1 \leq d_2'(v)+1 = d'(v)+1$, where we also used labeling monotonicity property. The inequality $d'(u) \leq d'(v)+1$ is a contradiction, therefore it must be that $c_{f'}(u,v) = 0$. The labeling $d'$ is valid on $(u,v)$ in this case. Note that 
inequalities $d'(v) \leq d'(u)+1$ and $d'(u) \leq d'(v)+1$ cannot be violated simultaneously. In the remaining case, when both are satisfied, the labeling is valid for arbitrary flow on $(u,v)$, so no flow is canceled.
%
\item If $f'(u,v)>0$ then $f'_k(u,v)>0$ and there was a push operation from $u$ to $v$ in the discharge of region $R_k\ni u$. Let $\tilde d_k$ be the labeling in $G^R_k$ on this step. We have $d'(u) \geq \tilde d_k(u) = \tilde d_k(v)+1 \geq d(v)+1 > d(v)$. 
\end{enumerate}
\qed
\end{proof}
\begin{theorem}
Parallel PRD terminates in at most $2n^2$ sweeps.
\end{theorem}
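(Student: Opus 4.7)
The plan is to mimic the proof of Theorem~\ref{T1} essentially line by line, using the three properties of a P-PRD sweep (monotony, validity, flow direction) just established in place of the analogous per-region properties of Statement~\ref{PRD properties}. As before, define $\Phi = \max\{d(v) \mid v\in V,\ v \text{ active in } G\}$, note that $d$ is bounded by $d^\infty = n$ so the cumulative label increase over the entire run is at most $n^2$, and establish: (i) on every sweep $\Phi' - \Phi \leq \sum_{v\in V}[d'(v) - d(v)]$, and (ii) on any sweep in which $d$ does not change, $\Phi$ strictly decreases. Combining (i) and (ii) yields the $2n^2$ bound exactly as at the end of the proof of Theorem~\ref{T1}.

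For (i), the argument is actually simpler than in the sequential case because the flow-direction property now holds globally on $E$ rather than per region. Let $v_l$ attain $\Phi' = d'(v_l)$. Either $v_l$ was already active in $G$ and we are done, or I trace back a path $(v_0, \ldots, v_l)$ of $f'$-positive edges to some $v_0$ active at the start of the sweep (such $v_0$ exists because any excess at $v_l$ absent in $G$ must have propagated along $f'$-edges from initially active sources). Telescoping $d'(v_l) - d(v_0)$ and applying $d'(v_{i-1}) > d(v_i)$ together with the monotony $d'(v_i) \geq d(v_i)$ reproduces the chain of inequalities~\eqref{T1eq1} without the need to restrict the sum to a single region.

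For (ii), assume $d' = d$ and set $H = \{v \mid d(v) \geq \Phi\}$. The global flow-direction property gives $f'(u,v) > 0 \Rightarrow d(u) > d(v)$; tracing excess back to an initially active source (label $\leq \Phi$) shows that no $v\in H$ receives flow under $f'$. The only new concern compared to the sequential case is that the fusion step might cancel a boundary push and thereby re-create excess in some $u\in H\cap R_k$ that PRD$(G^{R_k},\cdot)$ had emptied; this is the main obstacle of the proof. I would rule it out by observing that cancellation on $(u,v)\in(R_k,R_j)$ with $f'_k(u,v)>0$ requires $\alpha(v,u) = 0$, i.e.\ $d'(v) > d'(u)+1$, whereas the instant when this push was performed inside $R_k$ had $d_k(u) = d_k(v)+1$; under $d'=d$ the labels of $u,v$ coincide throughout $R_k$'s discharge with the initial labels $d$, forcing $d'(u) = d'(v)+1$ and contradicting $d'(v) \geq d'(u)+2$. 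Hence no boundary cancellation occurs, every $u\in H$ ends the sweep with zero excess, and $\Phi$ drops by at least one, completing the proof.
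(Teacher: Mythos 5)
Your proposal is correct and follows essentially the same route as the paper: treat the fused pair $(f',d')$ as a single global discharge satisfying monotony, validity and flow direction, reuse the telescoping argument of Theorem~\ref{T1} for the increase case, and show that when $d'=d$ no flow is canceled at the fusion step so the highest-active-label argument goes through. The only nuance you leave implicit is that cancellation can also arise from the additive opposite term $\alpha(u,v)f'_j(u,v)$ (not just from $\alpha(v,u)=0$), but your own observation that a push over $(u,v)$ under $d'=d$ forces $d'(u)=d'(v)+1$ rules that out as well, exactly as the paper notes.
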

\begin{proof}
As before, total increase of $d$ is at most $n^2$. Let us verify that if $d$ has increased during a sweep, then increase in $\Phi$ is no more than total increase of $d$. Consider the pair $(f',d')$ of fused flow and labeling, constructed by Parallel PRD in a sweep. 
As shown above, this pair satisfies properties 2-4 of statement~\ref{PRD properties} for region $R=V\backslash\{s,t\}$ and may be considered as a single Discharge operation on this region. Part 1 of the proof of Theorem~\ref{T1} can be applied, considering region $R$ and $(f',d')$ on it.
\par
If $d$ has not increased during a sweep ($d'=d$) then no relabel operation has occurred and $\alpha(u,v)=1$ for all $(u,v)\in \B$. Moreover, for each $(u,v)\in (R_k, R_j)$ either $f'_k(u,v) = 0$ or $f'_j(u,v) = 0$ so no flow is canceled by $\alpha$ or by the opposite flow on the fusing step. Let $H=\{v\mid d(v)\geq \Phi\}$. These vertices do not receive flow during all elementary push operations. After the sweep, all active vertices which were in $H$ are discharged. Because there is no active vertices with label $\Phi$ or above left, it is $\Phi' < \Phi$.
%
By the same argument as in Theorem~\ref{T1}, the algorithm will terminate in at most $2n^2$ sweeps.
\qed
\end{proof}
\section{Augmented Path Region Discharge}\label{sec:ARD}
We will now use the same setup of the problem distribution, but replace the discharge operation and the labeling function. 
%
\subsection{New Distance Function}
\par
Let the {\em boundary} \wrt partition $(R_k)_{k=1}^{K}$ be the set $\B = \bigcup_k B^{R_k}$.
The {\em region distance}
$d^{*\B}(u)$ in $G$ is the minimal number of inter-region edges contained in a path from $u$ to $t$, or $|\B|$ if no such path exists:
\begin{equation}
d^{*\B}(u) = \begin{cases}
\min\limits_{P = ((u,u_1),\dots, (u_r,t)) } |P \cap (\B,\B)| & \IF u\rightarrow t, \\
|\B| & \IF u\nrightarrow t.
\end{cases}
\end{equation}
This distance corresponds well to the number of region discharge operations required to transfer the excess to the sink (see Figure~\ref{fig:distance-ARD}(a)).
\begin{figure}[!ht]
\centering
\begin{tabular}{ccc}
\begin{tabular}{c}
\includegraphics[width=0.25\linewidth]{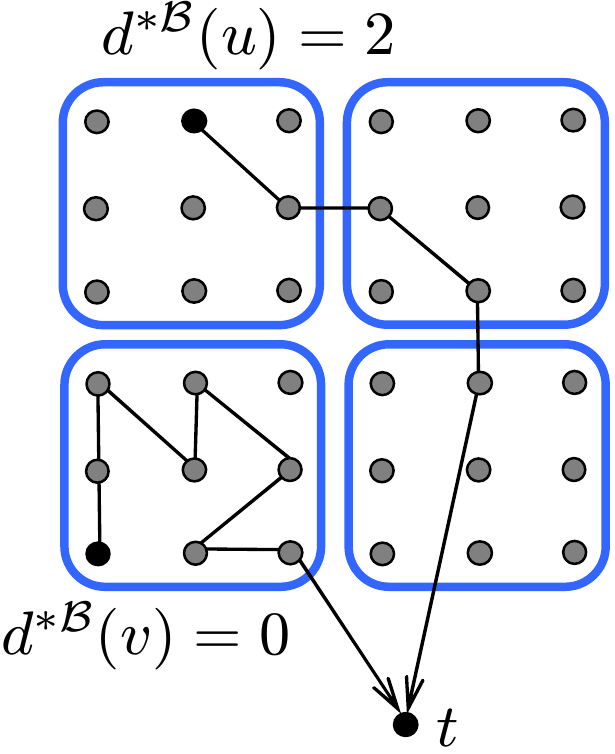}
\end{tabular} & \parbox{0.1\linewidth}{\ } &
\begin{tabular}{c}
\includegraphics[width=0.35\linewidth]{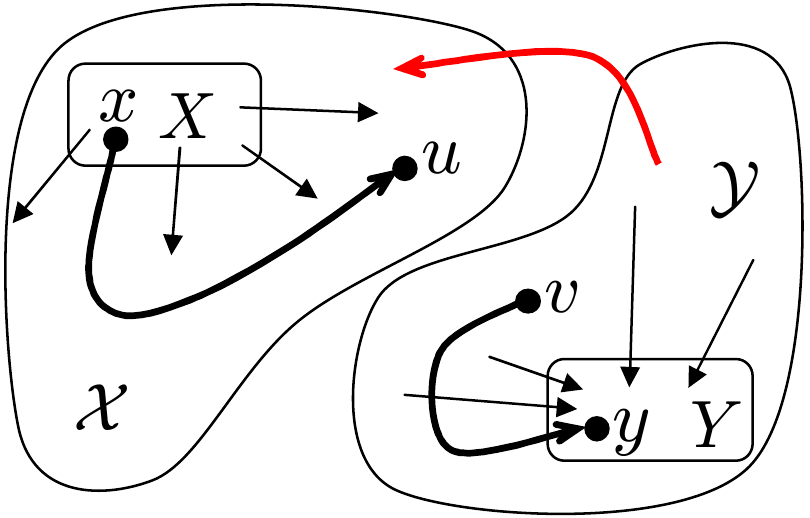}
\end{tabular}\\
(a) & & (b)
\end{tabular}
\caption{(a) Illustration of region distance. (b) Illustration of Lemma~\ref{ARD lemma}: augmentation on pathes from $x$ to $u$ or from $v$ to $y$ preserves $X\nrightarrow Y$, but not the augmentation on the red path.}
\label{fig:distance-ARD}
\end{figure}
\begin{statement}
If $u\rightarrow t$ then $d^{*\B}(u) < |\B|$.
\end{statement}
\begin{proof}
Let $P$ be a path from $u$ to $t$ given as a sequence of edges. If $P$ contains a loop then it can be removed from $P$ and $|P \cap (\B,\B)|$ will not increase. A path without loops goes through each vertex at most once. For $\B\subset V$ there is at most $|\B|-1$ edges in the path which have both endpoints in $\B$.
\qed
\end{proof}
We now let $d^{\infty} = |\B|$ and redefine a valid labeling \wrt to the new distance.
A labeling $d\colon V\to \{0,\dots,d^{\infty}\}$ is {\em valid} in $G$ if $d(t) = 0$ and for all $(u,v)\in E$ such that $c(u,v)>0$:
\begin{align}
&d(u) \leq d(v)+1 &\mbox{ if } (u,v)\in (\B,\B),\\
&d(u) \leq d(v) &\mbox{ if } (u,v)\notin (\B,\B).
\end{align}
\begin{statement}\label{LB property} A valid labeling $d$ is a lower bound on $d^{*\B}$.
\end{statement}
\begin{proof}
If $u\nrightarrow t$ then $d(u)\leq d^{*\B}$. Otherwise, let $P = ((u,v_1),\dots, (v_l,t))$ be a shortest path \wrt $d^{*\B}$, \ie $d^{*\B}(u) = |P \cap (\B,\B)|$. Applying the validity property to each edge in this path, we have $d(u) \leq d(t) + |P \cap (\B,\B)| = d^{*\B}(u)$.
\qed
\end{proof}
%
%
%
%
%
%
%
%
\subsection{New Region Discharge}\label{sec:ARDalg}
In this subsection, reachability relations ``$\rightarrow$'', ``$\nrightarrow$'', residual paths, and labeling validity will be understood in the region network $G^R$ or its residual $G^R_f$.
\par
The new \Discharge operation, called Augmented Path Region Discharge (\ref{ARD}), works as follows. 
It first pushes excess to the sink along augmenting paths inside the network $G^R$. When it is no longer possible, it continues to augment paths to nodes in the region boundary, $B^R$, in the order of their increasing labels. This is represented by the sequence of nested sets $T_0 = \{t\}$, $T_1 = \{t\}\cup\{v\in B^R \mid d(v)=0\}$, \dots, $T_{d^{\infty}} = \{t\}\cup \{v\in B^R \mid d(v)<d^{\infty}\}$. Set $T_k$ is the destination of augmentations in stage $k$. As we prove below, in stage $k>0$ residual paths may exist only to the set $T_k \backslash T_{k-1} = \{v\mid d(v)=k-1\}$.
\par
\begin{procedure}[H]
\SetKwFunction{Augment}{Augment}
\SetKwFunction{ARD}{ARD}
\SetKwFunction{RegionRelabel}{RegionRelabel}
\tcc{assume $d\colon V^R\to \{0,\dots, d^{\infty}\}$ valid in $G^R$}
\For(\tcc*[f]{stage $k$}){$k=0,1,\dots, d^{\infty}$}{
	$T_k = \{t\}\cup \{v\in B^R \mid d(v) < k\}$\;
	$\Augment(R,T_k)$\;
}
\tcc{Region-relabel}
$
d(u) := \begin{cases}
\min \{k \mid u\rightarrow T_k \} & \ u\in R, u\rightarrow T_{d^{\infty}}, \\
d^{\infty} & \ u\in R, u\nrightarrow T_{d^{\infty}}, \\
d(u) & \ u\in B^R.
\end{cases}
$%
\DontPrintSemicolon
\label{ARD:region-relabel}\;
\vspace{0.5ex}
\caption{ARD($G^R$,$d$)\label{ARD}}
\end{procedure}
\vspace{-2\myskip}\vspace{-2pt}
\begin{procedure}[H]
\While{there exist a path $(v_0,v_1,\dots, v_l)$, $c_f(v_{i-1},v_i)>0$, $e_f(v_0)>0$, $v_0\in X$, $v_l\in Y$}{
	augment $\Delta = \min (e_f(v_0), \min\limits_{i} c_f(v_{i-1},v_i))$ units along the path.
}
\caption{Augment($X$,$Y$)\label{Augment}}
\end{procedure}

The labels on the boundary, $d|_{B^R}$, remain fixed during the algorithm and the labels $d|_{R}$ inside the region do not participate in augmentations and therefore are updated only in the end.
\par
%
%
We claim that \ref{ARD} terminates with no active nodes inside the region, preserves validity and monotonicity of the labeling, and pushes flow from higher labels to lower labels \wrt the new labeling. These properties will be required to prove finite termination and correctness of S-ARD. Before we prove them (Statement~\ref{ARD properties}) we need the following intermediate results: 
\begin{itemize}
\item Properties of the network $G^R_f$ maintained by the algorithm (Statement~\ref{ARD property}, Corollaries~\ref{C1} and~\ref{C2}).
\item Properties of valid labellings in the network $G^R_f$ (Statement~\ref{labeling property}).
\item Properties of the labeling constructed by region-relabel (line~\ref{ARD:region-relabel} of \ARD) in the network $G^R_f$ (Statement~\ref{Relabel properties}).
\end{itemize}
\begin{lemma}\label{ARD lemma}
Let $X,Y\subset V^R$, $X\cap Y=\emptyset$, $X\nrightarrow Y$. Then $X\nrightarrow Y$ is preserved after i) augmenting a path $(x,\dots,v)$ with $x\in X$ and $v\in V^R$; ii) augmenting a path $(v,\dots,y)$ with $y\in Y$ and $v\in V^R$.
\end{lemma}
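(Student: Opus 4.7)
The proof will proceed by contradiction. Assume $X \nrightarrow Y$ in $G^R_f$, and suppose that after augmenting along the stated path $P_{\text{aug}} = (v_0, v_1, \dots, v_l)$ we have $X \rightarrow Y$ in the new residual network. The only new residual edges created by augmenting $P_{\text{aug}}$ are the reverse edges $(v_i, v_{i-1})$ for $i = 1, \dots, l$ (the forward edges $(v_{i-1}, v_i)$ only lose capacity). So if $Q = (u_0, u_1, \dots, u_s)$ is a new residual path with $u_0 \in X$ and $u_s \in Y$, it must either consist entirely of old residual edges --- immediately contradicting $X \nrightarrow Y$ --- or use at least one of the new reverse edges.

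For case (i), where $P_{\text{aug}}$ starts at $x \in X$, I pick the \emph{last} new edge used by $Q$, say $(u_j, u_{j+1}) = (v_i, v_{i-1})$. Then the suffix $(u_{j+1}, \dots, u_s) = (v_{i-1}, \dots, u_s)$ uses only old residual edges and exhibits $v_{i-1} \rightarrow Y$ in the old graph. On the other hand, the prefix $(v_0, v_1, \dots, v_{i-1})$ of $P_{\text{aug}}$ itself consists of old residual edges, so $x \rightarrow v_{i-1}$ in the old graph. Concatenating gives a residual path from $x \in X$ to $Y$ before augmentation, contradicting $X \nrightarrow Y$.

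For case (ii), where $P_{\text{aug}}$ ends at $y \in Y$, I symmetrically pick the \emph{first} new edge along $Q$, say $(u_j, u_{j+1}) = (v_i, v_{i-1})$. Then the prefix $(u_0, \dots, u_j) = (u_0, \dots, v_i)$ uses only old residual edges, so $X \rightarrow v_i$ in the old graph, and the suffix $(v_i, v_{i+1}, \dots, v_l) = (v_i, \dots, y)$ of $P_{\text{aug}}$ gives $v_i \rightarrow y \in Y$ in the old graph. Concatenating again produces an old-graph residual path from $X$ to $Y$, a contradiction.

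The only subtle point is the choice of ``last'' vs.\ ``first'' new edge in the two cases, which guarantees that the portion of $Q$ outside the chosen new edge is realisable in the old network on the side where we need to connect to $P_{\text{aug}}$. This is the only real content; once that is set up correctly, the figure in \Figure{fig:distance-ARD}(b) makes it clear that an augmentation on a path ``through the interior'' (as in the red path, touching neither $X$ nor $Y$ as endpoint) is the scenario that the lemma rules out, since in that case neither endpoint of $P_{\text{aug}}$ sits in $X$ or $Y$ to patch the contradiction path through.
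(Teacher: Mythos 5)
Your proof is correct, but it takes a different route from the paper's. You argue by contradiction and path surgery: the only residual arcs created by an augmentation are the reversals of the augmented arcs, so any new $X$-to-$Y$ path must use one of them, and by choosing the \emph{last} such arc in case (i) (resp.\ the \emph{first} in case (ii)) you splice the untouched portion of the offending path together with a prefix (resp.\ suffix) of the augmenting path to exhibit an $X$-to-$Y$ residual path that already existed before the augmentation. The paper instead exhibits an invariant: letting $\mathcal{X}$ be the set of vertices reachable from $X$, the cut $(\mathcal{X},\bar{\mathcal{X}})$ has zero residual capacity, and an augmenting path that starts in $X$ stays inside $\mathcal{X}$ while one that ends in $Y$ stays inside the co-reachable set of $Y$, which is disjoint from $\mathcal{X}$; either way no edge of the cut is modified, so the cut stays at zero and continues to separate $X$ from $Y$. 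The two arguments are essentially dual: the paper's is shorter and handles both cases (i) and (ii) uniformly in one sentence, while yours is more elementary and makes explicit exactly which new residual arc would have to be blamed, at the cost of a separate (symmetric) treatment of the two cases. Your closing remark about the red path in Figure~3(b) is slightly misphrased --- that path is not ``ruled out'' by the lemma but rather excluded by its hypotheses, and it shows why the hypothesis that one endpoint lies in $X$ or $Y$ cannot be dropped --- but this does not affect the validity of the proof.
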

\begin{proof}
(See Figure~\ref{fig:distance-ARD}(b)).
Let $\mathcal{X}$ be the set of vertices reachable from $X$. Let $\mathcal{Y}$ be the set of vertices from which $Y$ is reachable. Clearly $\mathcal{X} \cap \mathcal{Y} = \emptyset$, otherwise $X\rightarrow Y$. We have that $(\mathcal{X}, \bar{\mathcal{X}})$ is a cut separating $X$ and $Y$ and having all edge capacities zero.
Any residual path starting in $X$ or ending in $Y$ cannot cross the cut its augmentation change the edges of the cut. Hence, $X$ and $Y$ will stay separated.
\qed
\end{proof}
\begin{figure}[!ht]
\centering
\begin{tabular}{ccc}
\begin{tabular}{c}
\includegraphics[width=0.35\linewidth]{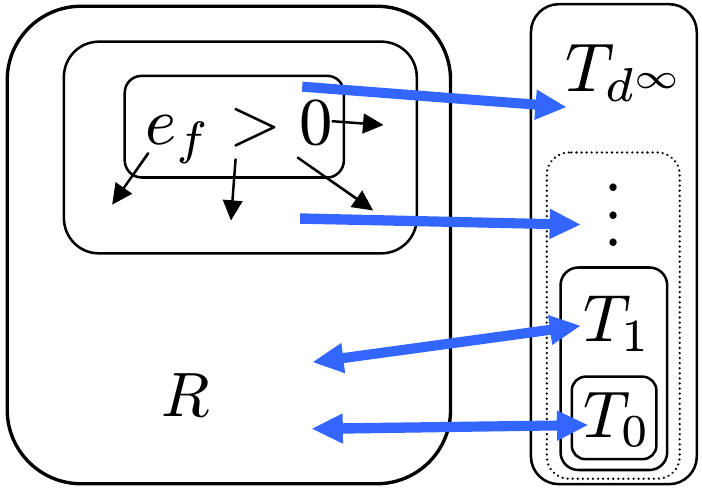}
\end{tabular} & \parbox{0.1\linewidth}{\ } &
\begin{tabular}{c}
\includegraphics[width=0.25\linewidth]{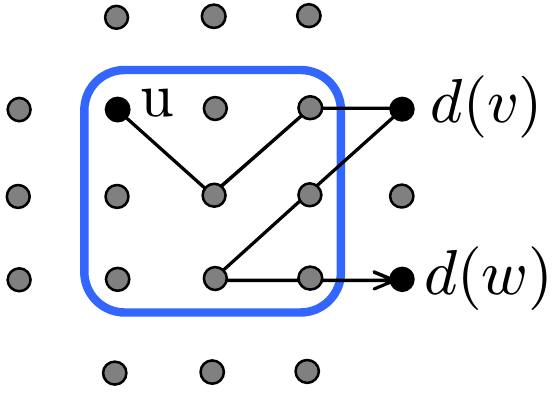}
\end{tabular}\\
(a) & & (b)
\end{tabular}
\caption{(a) Reachability relations in the network $G^R_f$ at the end of stage 1 of ARD: $\{v\mid e_f(v)>0\}\nrightarrow T_{1}$;  $T_{d^{\infty}}\backslash T_{1} \nrightarrow R$. (b) Example of a path which may exist in the network $G^R_f$, by Corollary~\ref{C2} it must be $d(v)\leq d(w)$.}
\label{fig:ARD}
\end{figure}
\begin{statement}\label{ARD property}
Let $v\in V^R$ and $v\nrightarrow T_a$ in $G_f$ in the beginning of stage $k_0$, where $a, k_0\in \{0,1,\dots d^{\infty}\}$. Then $v\nrightarrow T_a$ holds until the end of the algorithm.
\end{statement}
\begin{proof}
We need to show that $v\nrightarrow T_a$ is not affected by augmentations performed by the algorithm. If $k_0 \leq a$, we first prove $v\nrightarrow T_a$ holds during stages $k_0 \leq k\leq a$.
Consider augmentation of a path $(u_0,u_1,\dots, u_l)$, $u_0\in R$, $u_l \in T_k \subset T_a$, $e_f(u_0)>0$. Assume $v\nrightarrow T_a$ before augmentation. By Lemma~1 with $X = \{v\}$, $Y=T_{a}$ (noting that $X\nrightarrow Y$ and the augmenting path ends in $Y$), after the augmentation $v\nrightarrow T_a$. By induction, it holds till the end of stage $a$ and hence in the beginning of stage $a+1$.
\par
We can assume now that $k_0 > a$. Let $A = \{u\in R\mid e_f(u)>0\}$. At the end of stage $k_0-1$ we have $A \nrightarrow T_{k_0-1}\supset T_{a}$ by construction. Consider augmentation in stage $k_0$ on a path $(u_0,u_1\dots,u_l)$, $u_0\in R$, $u_l \in T_{k_0}$, $e_f(u_0)>0$.
By construction, $u_0\in A$. Assume $\{v\}\cup A\nrightarrow T_a$ before augmentation.
Apply Lemma~\ref{ARD lemma} with $X = \{v\}\cup A$, $Y=T_{a}$ (we have $X\nrightarrow Y$ and $u_0\in A\subset X$). After augmentation it is $X \nrightarrow T_a$. By induction, $X \nrightarrow T_a$ till the end of stage $k_0$. By induction on stages, $v\nrightarrow T_a$ until the end of the algorithm.
\qed
\end{proof}
\begin{corollary}\label{C1}If $w\in B^R$ then $w\nrightarrow T_{d(w)}$ throughout the algorithm.
\end{corollary}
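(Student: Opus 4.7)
The natural route is to reduce this directly to Statement~\ref{ARD property} with $v = w$, $a = d(w)$ and $k_0 = 0$. If I can establish the nonreachability $w \nrightarrow T_{d(w)}$ at the very beginning of \ref{ARD} (before any augmentation has taken place), Statement~\ref{ARD property} automatically propagates it through every subsequent stage of the algorithm.

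To verify the base case at $k_0 = 0$, I would lean on the specific construction of the region network $G^R$. At the start of \ref{ARD} the internal flow is $f = 0$, so the residual capacities in $G^R_f$ coincide with $c^R$. Now, because $w \in B^R$, the only edges of $E^R$ leaving $w$ lie in $(B^R, R)$: the definition of $E^R$ contains no edges from $B^R$ to $\{s,t\}$ and no edges within $B^R$ itself. By the definition of $c^R$, every such edge $(w, u) \in (B^R, R)$ has capacity $0$. Hence $w$ has no outgoing edge with positive residual capacity in $G^R$ at stage $0$, so $w \nrightarrow U$ for every $U \subseteq V^R \setminus \{w\}$.

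It remains to check that $w$ itself is not in $T_{d(w)}$, so the statement $w \nrightarrow T_{d(w)}$ is a nontrivial reachability claim. By definition $T_{d(w)} = \{t\} \cup \{v \in B^R \mid d(v) < d(w)\}$, and since $w \neq t$ (as $w \in B^R$) and $d(w) \not< d(w)$, we do have $w \notin T_{d(w)}$. Thus the previous paragraph yields $w \nrightarrow T_{d(w)}$ at the beginning of stage $0$, and Statement~\ref{ARD property} delivers the corollary.

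I do not anticipate a real obstacle: the argument is an immediate consequence of Statement~\ref{ARD property} together with the structural fact that the incoming-boundary edges $(B^R, R)$ are deliberately zeroed in $G^R$. The only sliver of care is the bookkeeping that confirms $w \notin T_{d(w)}$, so that reachability to $T_{d(w)}$ really must be witnessed by some outgoing edge that does not exist.
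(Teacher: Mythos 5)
Your proof is correct and follows exactly the paper's own argument: the paper likewise observes that the claim holds at initialization because $c^R(B^R,R)=0$ by construction of $G^R$, and then invokes Statement~\ref{ARD property} to propagate it through the algorithm. You have merely spelled out the two small details the paper leaves implicit (that $w$ has no other outgoing edges in $E^R$, and that $w\notin T_{d(w)}$), both of which check out.
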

\begin{proof}
At initialization, it is fulfilled by construction of $G^R$ due to $c^R(B^R,R)=0$. It holds then during the algorithm by Statement~\ref{ARD property}.\qed
\end{proof}
In particular, we have $B^R\nrightarrow t$ during the algorithm.
\begin{corollary}\label{C2}
Let $(u,v_1\dots v_l,w)$ be a residual path in $G^R_f$ from $u\in R$ to $w\in B^R$ and let $v_r\in B^R$ for some $r$. Then $d(v_r)\leq d(w)$.
\end{corollary}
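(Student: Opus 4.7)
The plan is to argue by contradiction, reducing the claim to an instance of Corollary~\ref{C1} applied at the boundary vertex $v_r$. The observation is that $v_r \in B^R$ has a residual tail of the path reaching $w \in B^R$, so $v_r \rightarrow w$ in $G^R_f$. If $d(v_r) > d(w)$, I would like to show that $w$ itself belongs to the forbidden set $T_{d(v_r)}$, contradicting the fact that $v_r$ cannot reach that set.

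Concretely, first I would peel off the suffix $(v_r,v_{r+1},\dots,v_l,w)$ of the given residual path; since each residual capacity along it is positive, this yields $v_r \rightarrow w$ in $G^R_f$. Then I would invoke Corollary~\ref{C1}, which gives $v_r \nrightarrow T_{d(v_r)}$ throughout the algorithm, and in particular at the current moment. Now suppose for contradiction that $d(w) < d(v_r)$. By the definition $T_k = \{t\}\cup\{v\in B^R\mid d(v)<k\}$ applied with $k=d(v_r)$, the condition $w\in B^R$ and $d(w)<d(v_r)$ imply $w \in T_{d(v_r)}$. Combined with $v_r \rightarrow w$ this yields $v_r \rightarrow T_{d(v_r)}$, contradicting Corollary~\ref{C1}. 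Hence $d(v_r)\leq d(w)$.

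There is essentially no obstacle here: the corollary is a direct bookkeeping consequence of Corollary~\ref{C1} together with the definition of $T_k$. The only subtlety worth double-checking is that $d|_{B^R}$ does not change during \ref{ARD} (the \RegionRelabel step only touches labels of vertices in $R$), so the value $d(v_r)$ used to form $T_{d(v_r)}$ in Corollary~\ref{C1} is the same $d(v_r)$ appearing in the statement. This is exactly the promise made in the description of \ref{ARD}, so the argument goes through without any additional effort.
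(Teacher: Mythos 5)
Your proof is correct and matches the paper's own argument exactly: both invoke Corollary~\ref{C1} at $v_r$, observe $v_r\rightarrow w$ along the path suffix, and derive a contradiction from $d(w)<d(v_r)$ via $w\in T_{d(v_r)}$. The extra remark that $d|_{B^R}$ is held fixed during \ref{ARD} is a valid (and correct) bit of bookkeeping the paper leaves implicit.
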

\begin{proof}
We have $v_r \nrightarrow T_{v_r}$. Suppose $d(w) < d(v_r)$, then $w\in T_{v_r}$ and because $v_r\rightarrow w$ it is $v_r \rightarrow T_{v_r}$ which is a contradiction.
\qed
\end{proof}
The properties of the network $G^R_f$ established by Statement~\ref{ARD property} and Corollary~\ref{C2} are illustrated in Figure~\ref{fig:ARD}.
%
\begin{statement}
\label{labeling property}
Let $d$ be a valid labeling, $d(u) \geq 1$, $u\in R$. Then $u\nrightarrow T_{d(u)-1}$. 
\end{statement}
\begin{proof}
Suppose $u\rightarrow T_0$. Then there exist a residual path $(u,v_1\dots v_l, t)$, $v_i\in R$ (by Corollary~\ref{C1} it cannot happen that $v_i\in B^R$). By validity of $d$ we have $d(u)\leq d(v_1)\leq\dots \leq d(v_l)\leq d(t) = 0$, which is a contradiction.
\par
Suppose $d(u)>1$ and $u\rightarrow T_{d(u)-1}$. Because $u\nrightarrow T_0$, it must be that 
$u\rightarrow w$, $w\in B^R$ and $d(w)<d(u)-1$. Let $(v_0\dots v_l)$ be a residual path with $v_0=u$ and $v_l=w$. Let $r$ be the minimal number such that $v_r\in B^R$. 
By validity of $d$ we have $d(u)\leq d(v_1)\leq\dots \leq d(v_{r-1})\leq d(v_r)+1$.
By corollary~\ref{C2} we have $d(v_r)\leq d(w)$, hence $d(u) \leq d(w)+1$ which is a contradiction.
\qed
\end{proof}
%
\begin{statement}
\label{Relabel properties}
For $d$ computed on line~\ref{ARD:region-relabel} and any $u\in R$ it holds:
\begin{enumerate}
\item $d$ is valid;
\item $u\nrightarrow T_a \Leftrightarrow d(u)\geq a+1$.
\end{enumerate}
\end{statement}
\begin{proof}
\begin{enumerate}
\item Let $(u,v)\in E^R$ and $c(u,v)>0$. Clearly $u\rightarrow v$. Consider four cases:
\begin{itemize}
\item case $u\in R$, $v\in B^R$: 
Then $u \rightarrow T_{d(v)+1}$, hence $d(u)\leq d(v)+1$.
\item case $u\in R$, $v \in R$: If $v\nrightarrow T_{d^{\infty}}$ then $d(v)= d^{\infty}$ and $d(u)\leq d(v)$. If $v\rightarrow T_{d^{\infty}}$, then $d(v) = \min\{k\mid v\rightarrow T_k\}$.
Let $k=d(v)$, then $v\rightarrow T_k$ and $u\rightarrow T_k$, therefore $d(u)\leq k = d(v)$.
\item case $u\in B^R$, $v \in R$: By Corollary~\ref{C1}, $u\nrightarrow T_{d(u)}$. Because $u\rightarrow v$, it is $v\nrightarrow T_{d(u)}$, therefore $d(v)\geq d(u)+1$ and $d(u)\leq d(v)-1 \leq d(v)+1$.
\item case when $u=t$ or $v=t$ is trivial.
\end{itemize}
%
\item[2.] The ``$\Leftarrow$'' direction follows by Statement~\ref{labeling property} applied to $d$, which is a valid labeling.
The ``$\Rightarrow$'' direction: we have $u\nrightarrow T_a$ and $d(u) \geq \min\{k \mid u\rightarrow T_k\} = \min\{k>a \mid u\rightarrow T_k\} \geq a+1$. 
%
\qed
\end{enumerate}
\end{proof}
\begin{statement}[Properties of ARD]\label{ARD properties} Let $d$ be a valid labeling in $G^R$. The output $(f',d')$ of \ref{ARD} satisfies:
\begin{enumerate}
\item There are no active vertices in $R$ \wrt $(f',d')$;\hfill ({\em optimality})
\item $d' \geq d$, $d'|_{B^R} = d|_{B^R}$; \hfill ({\em labeling monotonicity})
\item $d'$ is valid in $G^{R}_{f'}$; \hfill ({\em labeling validity})
\item 
$f'$ is a sum of path flows, where each path is from a vertex $u\in R$ to a vertex $v\in \{t\}\cup B^R$ and it is $d'(u)>d(v)$ if $v\in B^R$. \hfill ({\em flow direction})\par

\end{enumerate}
\end{statement}
\begin{proof}\begin{enumerate}
\item In the last stage, the algorithm augments all paths to $T_{d^{\infty}}$. After this augmentation a vertex $u\in R$ either has excess $0$ or there is no residual path to $T_{d^{\infty}}$ and hence $d'(u)=d^{\infty}$ by construction.
\item For $d(u)=0$, we trivially have $d'(u) \geq d(u)$. Let $d(u) = a+1 > 0$. By Statement~\ref{labeling property}, $u\nrightarrow T_{a}$ in $G^R$ and it holds also in $G^R_f$ by Statement~\ref{ARD property}. From Statement~\ref{Relabel properties}.2, we conclude that $d'(u) \geq a+1 = d(u)$. The equality $d'|_{B^R}=d|_{B^R}$ is by construction.
\item Proven by Statement~\ref{Relabel properties}.1. 
\item Consider a path from $u$ to $v\in B^R$, augmented in stage $k>0$. 
It follows that $k = d(v)+1$. At the beginning of stage $k$ it is $u\nrightarrow T_{k-1}$. By Statement~\ref{ARD property}, this is preserved till the end of the algorithm. By~Statement~\ref{Relabel properties}.2, $d'(u) \geq k = d(v)+1 > d(v)$. 
\qed
\end{enumerate}
\end{proof}
\par\noindent
Algorithm~\ref{alg:sequential} and~\ref{alg:parallel} for \Discharge being \ref{ARD} will be referred to as S-ARD and P-ARD, respectively.
\subsection{Complexity of Sequential Augmented Path Region Discharge}
Statement~\ref{labeling-extension} holds for S-ARD as well, so S-ARD maintains a valid labeling.
%
\begin{theorem}\label{T:S-ARD}
S-ARD terminates in at most $2|\B|^2+1$ sweeps.
\end{theorem}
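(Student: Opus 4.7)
The plan is to follow the potential-function pattern of the proof of Theorem~\ref{T1}, but with a potential that lives on the boundary $\B$ rather than on all of $V$. This is the natural move because the new distance function $d^{*\B}$ records only boundary-edge crossings, so the budget for label growth is controlled by $|\B|^2$ instead of $n^2$. The underlying machinery is in place: Statement~\ref{ARD properties} provides the ARD analogues of the four PRD properties used in the proof of Theorem~\ref{T1} (optimality, labeling monotonicity, labeling validity, and flow direction), and Statement~\ref{labeling-extension} (which, as noted just above, applies to S-ARD as well) guarantees that S-ARD maintains a globally valid labeling after every region discharge.

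The first ingredient specific to S-ARD is the observation that after the first sweep, every active vertex lies in $\B$. Indeed, by the optimality property of ARD no active vertex remains inside $R_k$ at the end of its own discharge, and by the flow-direction property any newly created excess outside $R_k$ lives in $B^{R_k}\subset\B$; so interior excess from the initialisation is cleared by the first pass through all regions. Accordingly I would define the potential
\[
\Phi_k \;=\; \max\{\, d(v) \mid v\in\B,\; e_f(v)>0,\; d(v)<d^{\infty}\,\}
\]
measured at the end of sweep $k$, with $\Phi_k:=-1$ when the set is empty (termination). Since only $\B$-labels can enter $\Phi$ and each is bounded by $d^{\infty}=|\B|$, we have $\Phi_k\in\{-1,\dots,|\B|-1\}$, while the total increase of labels on $\B$ over the entire run is at most $|\B|\cdot|\B|=|\B|^2$.

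The core lemma would be: for any sweep $k\ge 2$ in which no label on $\B$ increases, $\Phi_k\le\Phi_{k-1}-1$. I would prove it by tracing back an excess-producing chain: pick $v^*\in\B$ active at the end of sweep $k$ with $d(v^*)=\Phi_k$ and walk backwards through the in-sweep discharges that pushed excess into $v^*$. Each step is a flow path from some $u$ inside a discharging region to the boundary receiver $v$, on which the flow-direction clause of Statement~\ref{ARD properties} yields $d'(u)>d(v)$; under the no-relabel hypothesis this becomes a strict inequality between current $\B$-labels. The crucial sub-claim is that every vertex visited by the trace lies in $\B$: each intermediate receiver sits on the boundary of the discharging region by construction, and the earliest vertex of the trace was active at the start of sweep $k$ and therefore in $\B$ by the localisation step. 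The resulting strictly descending chain of $\B$-labels ends at a vertex of label $\le\Phi_{k-1}$, which forces $\Phi_k<\Phi_{k-1}$. The same trace, now allowing $\B$-label increments along the chain, shows in general that $\Phi_k-\Phi_{k-1}$ is bounded by the sum of $\B$-label increases in that sweep, so the total $\Phi$-increase over all sweeps is at most the total $\B$-label increase, $\le|\B|^2$.

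Combining: the number of sweeps with at least one $\B$-label increase is at most $|\B|^2$, and the number of sweeps without any is bounded by $\Phi_1$ plus the total $\Phi$-increase, again $O(|\B|^2)$; adding the initial sweep and reconciling constants gives the stated $2|\B|^2+1$ bound. The main obstacle I expect is the bookkeeping around mid-sweep label changes inside the trace argument: the flow-direction clause mixes pre-discharge and post-discharge labels ($d'(u)$ versus $d(v)$), and $\B$-labels can change during earlier discharges in the same sweep, so the ``strictly descending chain'' must be read with respect to the instantaneous labels at each push. Once this is handled, the remainder is straightforward counting in the style of Theorem~\ref{T1}.
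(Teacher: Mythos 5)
Your proposal is correct and takes essentially the same route as the paper's proof: the same potential $\Phi=\max\{d(v)\mid v\in\B,\ v \mbox{ active}\}$, the same localisation of all active vertices to $\B$ after the first sweep, and the same per-sweep dichotomy in which increases of $\Phi$ are charged against the total increase of $d|_{\B}$ (at most $|\B|^2$) and relabel-free sweeps decrease $\Phi$ by one. The only cosmetic difference is in the relabel-free case, where the paper argues that the set $H=\{v\mid d(v)\geq\Phi\}$ receives no flow and is fully discharged, rather than tracing a descending chain backwards from the highest active vertex.
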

\begin{proof} The value of $d(v)$ does not exceed $|\B|$ and $d$ is non-decreasing. The total increase of $d|_{\B}$ during the algorithm is at most $|\B|^2$.
\par 
After the first sweep, active vertices are only in $\B$. Indeed, discharging region $R_k$ makes all vertices $v\in R_k$ inactive and only vertices in $\B$ may become active. So by the end of the sweep, all vertices $V\backslash \B$ are inactive.
\par
Let us introduce the quantity
\begin{align}
\Phi = \max \{d(v)\mid v\in\B,\ v \mbox{ is active in $G$ }\}.
\end{align}
We will prove the following two cases for each sweep but the first one: 
\begin{enumerate}
\item
If $d|_\B$ is increased then the increase in $\Phi$ is no more than total increase in $d|_\B$.
Consider discharge of $R_k$. Let $\Phi$ be the value before ARD on $R_k$ and $\Phi'$ the value after. Let $\Phi' = d'(v)$. It must be that $v$ is active in $G'$. If $v\notin V^{R_k}$, then $d(v)=d'(v)$ and $e(v)=e_{f'}(v)$ so  $\Phi \geq d(v) = \Phi'$.
\par
Let $v\in V^{R_k}$. After the discharge, vertices in $R_k$ are inactive, so $v\in B^{R_k}$ and it is $d'(v)=d(v)$.
If $v$ was active in $G$ then $\Phi \geq d(v)$ and we have $\Phi'-\Phi \leq d'(v)-d(v)=0$.
If $v$ was not active in $G$, there must exist an augmenting path from a vertex $v_0$ to $v$ such that $v_0\in R_k\cap \B$ was active in $G$. For this path, the flow direction property implies $d'(v_0) \geq d(v)$. We now have $\Phi'-\Phi \leq d'(v)-d(v_0) = d(v)-d(v_0) \leq d'(v_0)-d(v_0) \leq \sum_{v\in R_k \cap \B}[d'(v)-d(v)]$. Summing over all regions, we get the result.
\item 
If $d|_\B$ is not increased then $\Phi$ is decreased at least by 1.
We have $d'=d$. Let us consider the set of vertices having the highest active label or above, $H=\{v\mid d(v)\geq \Phi\}$. These vertices do not receive flow during all discharge operations due to the flow direction property. After the discharge of $R_k$ there are no active vertices left in $R_k \cap H$ (property~\ref{ARD properties}.1). After the full sweep, there are no active vertices in $H$.
\end{enumerate}
\par
In the worst case, starting from sweep 2, $\Phi$ can increase by one $|\B|^2$ times and decrease by one $|\B^2|$ times. In at most $2 |\B|^2+1$ sweeps, there are no active vertices left.
\qed
\end{proof}
On termination we have that the labeling is valid and there are no active vertices in $G$. 
%
\subsection{Complexity of Parallel Augmented Path Region Discharge}\label{sec:ARDparallel}
\begin{statement}[Properties of Parallel ARD]\label{ARDP properties}
Let $d$ be a valid labeling in the beginning of a sweep of P-ARD. Then the pair of fused flow and labeling $(f',d')$ satisfies:
\begin{enumerate}
\item Vertices in $V\backslash \B$ are not active in $G_{f'}$. \hfill ({\em optimality})
\item $d' \geq d$; \hfill ({\em labeling monotony})
\item $d'$ is valid; \hfill ({\em labeling validity})
\item $f'$ is the sum of path flows, where each path is from a vertex $u\in V$ to a vertex $v\in \B$, satisfying $d'(u)\geq d(v)$.\hfill ({\em weak flow direction})
\end{enumerate}
\end{statement}
%
\begin{proof}
\begin{enumerate}
\item For each $k$ there are no active vertices in $R_k$ \wrt $(f'_k, d'_k)$. The fused preflow $f'$ may differ from $f'_k$ only on the boundary edges $(u,v)\in (\B,\B)$. So there are no active vertices in $V\backslash \B$ \wrt $(f',d')$. 

\item By construction.
\item Same as in P-PRD.
\item Consider the augmentation of a path from  $u\in R_k$ to $v\in B^{R_k}$ during \ARD on $G^{R_k}$ and canceling of the flow on the last edge during the flow fusion step. Let the last edge of the path be $(w,v)$. We need to prove that $d'(u)\geq d(w)$.
Let $\tilde d$ be the labeling in $G^R_k$ right before augmentation, as if it was computed by region-relabel. Because $\tilde d$ is valid it must be that $\tilde d(w)\leq \tilde d(v)+1$.
%
We have $d'_k(u) > d(v) \geq \tilde d(v) \geq \tilde d(w)-1 \geq d(w)-1$. So $d'(u) \geq d(w)$.
\end{enumerate}
\end{proof}
%
%
%
%
%
%
\begin{theorem}
Parallel algorithm~\ref{alg:parallel} with ARD terminates in $2|\B|^2+1$ sweeps.
\end{theorem}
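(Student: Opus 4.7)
The plan is to mirror the proof of Theorem~\ref{T:S-ARD}, invoking Statement~\ref{ARDP properties} on the fused pair $(f',d')$ in place of Statement~\ref{ARD properties}. Since $d^\infty=|\B|$, $d$ is non-decreasing (property~2), and $d(v)\leq|\B|$ for every $v\in\B$, the total increase of $d|_\B$ over the whole algorithm is at most $|\B|^2$. Property~1 of Statement~\ref{ARDP properties} ensures that after the first sweep all active vertices lie in $\B$, so from sweep~2 onward I set $\Phi=\max\{d(v)\mid v\in\B,\ v\text{ active in }G\}$ and split each sweep into two cases.

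If $d|_\B$ strictly increases during the sweep, I bound $\Phi'-\Phi$ by the sweep's total increase of $d|_\B$: let $\Phi'=d'(v^*)$ for an active $v^*\in\B$ at the end. If $v^*$ already had excess, $\Phi\geq d(v^*)$ and $\Phi'-\Phi\leq d'(v^*)-d(v^*)$; otherwise $v^*$ acquired excess via some path in the decomposition of property~4 of Statement~\ref{ARDP properties}, whose source $u$ is active at the start (so $u\in\B$ and $d(u)\leq\Phi$) and satisfies the weak direction $d'(u)\geq d(v^*)$, yielding $\Phi'-\Phi\leq d'(v^*)-d(u)\leq [d'(v^*)-d(v^*)]+[d'(u)-d(u)]$, which is at most the total increase of $d|_\B$ in the sweep.

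The main obstacle is the case $d|_\B$ unchanged: the weak fused direction $d'(u)\geq d(v)$ does not by itself forbid horizontal flow at level $\Phi$, so I descend to the individual region discharges, where property~4 of Statement~\ref{ARD properties} is strict. With $H=\{v\mid d(v)\geq\Phi\}$, if some path of $f'_k$ terminated at $v\in H\cap B^{R_k}$ its source $u_0\in R_k$ would satisfy $d'_k(u_0)>d(v)\geq\Phi$; but (post first sweep) $u_0\in\B$, and with $d|_\B$ unchanged this forces $d'(u_0)=d'_k(u_0)=d(u_0)\leq\Phi$, a contradiction. Hence no $f'_k$ sends any flow into $H$. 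Since $f'_k(u,v)\geq 0$ while $f'_j(u,v)\leq 0$ on any inter-region edge $(u,v)\in(R_k,R_j)$ with $k\neq j$, the fusion formula can only cancel, never create, positive inflow to $v$; combined with $e_{f'_m}(v)=0$ after the region $R_m\ni v$ has been discharged, every $v\in H$ is inactive at the end of the sweep, giving $\Phi'<\Phi$. Summing, $\Phi$ grows at most $|\B|^2$ times and shrinks at most $|\B|^2$ times (since $0\leq\Phi\leq|\B|$), so including the first sweep the algorithm terminates in at most $2|\B|^2+1$ sweeps.
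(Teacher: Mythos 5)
Your proof follows essentially the same route as the paper's: the same potential $\Phi$ over active boundary vertices, the same observation that after sweep~1 all active vertices lie in $\B$, the same two cases per sweep, and the same use of the weak flow direction property of Statement~\ref{ARDP properties} when $d|_\B$ increases. For the case where $d|_\B$ is unchanged, the paper simply notes that then no flow is cancelled at the fusion step (every $\alpha(u,v)=1$ and opposite flows on a boundary edge cannot both be positive), so $f'$ inherits the \emph{strict} flow direction property from the individual discharges and the sequential argument applies verbatim; your region-by-region argument reaches the same conclusion by a slightly more explicit path. One caution on that step: your claim that fusion ``can only cancel, never create, positive inflow to $v$'' is not by itself sufficient, because zeroing the \emph{outflow} term $\alpha(u,v)f'_j(u,v)\le 0$ of a vertex $v\in R_j$ strands excess at $v$ exactly as extra inflow would, even though $e_{f'_j}(v)=0$ after $R_j$'s discharge. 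This cannot happen here only because $d'=d$ forces $\alpha\equiv 1$ on all boundary edges (and, given your conclusion that no $f'_k$ pushes into $H$, opposite-flow cancellation is also absent on edges into $H$); adding that one observation makes your case-2 argument complete and equivalent to the paper's.
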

\begin{proof}
\par\noindent
As before, total increase of $d|_{\B}$ is at most $|\B|^2$.
\par\noindent
After the first sweep, active vertices are only in $\B$ by Statement~\ref{ARDP properties}.1.
\par\noindent
For each sweep after the first one:
\begin{itemize}
\item If $d|_{\B}$ is increased then increase in $\Phi$ is no more than the total increase of $d|_{\B}$.
\begin{quote}
Let $\Phi'$ be the value in the network $G'= G_{f'}$. Let $\Phi' = d'(v)$. It must be that $v$ is active in $G'$ and $v\in \B$.
\par
If $v$ was active in $G$ then $\Phi \geq d(v)$ and we have $\Phi'-\Phi \leq d'(v)-d(v)$.
\par
If $v$ was not active in $G$ then there must exist a path flow in $f'$ from a vertex $v_0$ to~$v$ such that $v_0\in \B$ was active in $G$. For this path, the weak flow direction property implies $d'(v_0) \geq d(v)$. We have $\Phi'-\Phi \leq d'(v)-d(v_0)  = d'(v)-d(v)+d(v)-d(v_0) \leq d'(v)-d(v)+d'(v_0)-d(v_0) \leq \sum_{v\in \B}[d'(v)-d(v)]$.
\end{quote}
\item If $d|_{\B}$ is not increased then $\Phi$ is decreased at least by 1. 
\begin{quote}
	In this case, $f'$ satisfies the strong flow direction property and the proof of Theorem~\ref{T:S-ARD} applies.
\end{quote}
\end{itemize}
After total of $2 |\B|^2+1$ sweeps, there are no active vertices left.
\end{proof}
\section{Implementation}\label{sec:implementation}
In this section we first discuss heuristics commonly used in the push-relabel framework. They are essential for the practical performance of the algorithms. We then describe our base implementations of S-ARD/S-PRD and the solvers they rely on. In the next section we describe an efficient implementation of ARD, which is more sophisticated but has a much better practical performance.
\subsection{Heuristics}
{\bf Region-relabel} heuristic computes labels $d|_{R}$ of the region vertices, given the distance estimate on the boundary, $d|_{B^R}$. There is a slight difference between PRD and ARD variants (using distance $d^*$ and $d^{*\B}$, resp.), displayed by the corresponding if conditions.
\par
\begin{algorithm}[H]
$d(t):=0$; $O:=\{t\}$; $d|_{R} := d^\infty$; $d^{\rm current}:=0$\tcc*{init}
\lIf{ARD}{$d|_{B^R}:=d|_{B^R}+1$}\tcc*{(for ARD)}
\tcc{$O$ is a list of open vertices, having the current label $d^{\rm current}$}
$d^{\rm max} := \max \{d(w)\mid w\in B^R,\, d(w)<d^{\infty}\}$\;
\While{$O\neq \emptyset$ or $d^{\rm current}< d^{\rm max}$}{
\tcc{if $O$ is empty raise $d^{\rm current}$ to the next seed}
\lIf{$O=\emptyset$}{$d^{\rm current} := \min \{d(w) \mid w\in B^R,\, d(w)>d^{\rm current},\, d(w)<d^{\infty}\}$}\;
\tcc{add seeds to the open set}
$O := O\cup\{w\in B^R \mid d(w) = d^{\rm current} \}$\;
\tcc{find unlabeled vertices from which $O$ can be reached}
$O := \{u\in R \mid (u,v)\in E^R,\, v\in O,\, c(u,v)>0,\, d(u)=d^\infty\}$\;
\lIf{PRD}{$d^{\rm current}\leftarrow d^{\rm current}+1$}\tcc*{(for PRD)}
$d|_O := d^{\rm current}$\tcc*{label them}
}
\lIf{ARD}{$d|_{B^R}:=d|_{B^R}-1$}\tcc*{(for ARD)}
\caption{Region-relabel$(G^R,d|_{B^R})$}\label{alg:region-relabel}
\end{algorithm}
%
In the implementation, the set of boundary vertices is sorted in advance, so the algorithm runs in $O(|E^R|+ |V^R|+|B^R|\log |B^R|)$ time and uses $O(|V^R|)$ space. 
The resulting labeling $d'$ is valid and satisfies $d'\geq d$ for arbitrary valid $d$. 
\par
{\bf Global gap heuristic}. Let us briefly explain the global gap heuristic~\cite{Cherkassky-94}. It is a sufficient condition to identify that the sink is unreachable from a set of nodes. Let there be no nodes with label $g>0$: $\forall v\in V$ $d(v)\neq g$, and let $d(u)>g$. For a valid labeling $d$, it follows that there is no node $v$ for which $c(u,v)>0$ and $d(v)<g$. Assuming there is, we will have $d(u) \leq d(v)+1 \leq g$, which is a contradiction. Therefore the sink is unreachable from all nodes $\{u \mid d(u)>g\}$ and their labels may be set to $d^\infty$.
\par
{\bf Region gap heuristic}~\cite{Delong08} detects if there is no nodes inside region $R$ having label $g>0$. Such nodes can be connected to the sink in the whole network only through one of the boundary nodes, so they may be relabeled up to the closest boundary label. Here is the algorithm\footnote{Region-gap-relabel in~\cite[fig.\,10]{Delong08} seems to contain an error: only nodes above the gap should be processed in step 3.}:
\begin{algorithm}[H]
\label{Region-gap}
\tcc{Input: region network $G^R$, labeling $d$, gap $g$: $\forall v\in R$ $d(v)\neq g$}
$d^{\rm next} := \min\{d(w)\mid w\in B^R, d(w)>g \}$\;
\For{$v \in R$ such that $g<d(v)<d^{\rm next}$}{
$d(v) := d^{\rm next}$+1
}
\caption{Region-gap$(G^R,d,g)$}
\end{algorithm}
If no boundary vertex is above the gap, then $d^{\rm next}=d^\infty$ in step 1 and all nodes above the gap are disconnected from the sink in the network $G$. Interestingly, this sufficient condition does not imply a global gap. In PRD, we detect the gap efficiently after each node relabel operation, by discovering an empty bucket (see below).
%
%
\subsection{Referenced Implementations}\label{sec:ref_impl}
\mypar{Boykov-Kolmogorov (BK)} The reference augmenting path implementation~\cite{BK-maxflow}. There is only a trivial $O(m n^2|C|)$ complexity bound known for this algorithm\footnote{The worst-case complexity of breads-first search shortest path augmentation algorithm is just $O(m |C|)$. Tree adaptation step, introduced in~\cite{BK-maxflow} to speed-up the search, does not have a good bound and introduces additional $O(n^2)$ factor.}, where $C$ is the minimum cut value.
%
\mypar{Highest level Push-Relabel (HIPR)} is reference push-relabel implementation~\cite{Cherkassky-94} (v3.6, \url{http://www.avglab.com/andrew/soft.html}). This implementation has two stages: finding the maximum preflow / minimum cut and upgrading the maximum preflow to a maximum flow. Only the first stage was executed and benchmarked. We tested two variants with frequency of the global relabel heuristic equal to 0.5 (the default value in HIPR v3.6) and equal to 0. These variants will be denoted HIPR$0.5$ and HIPR$0$ respectively. HIPR$0$ executes only one global update at the beginning. Global updates are essential for difficult problems. However, HIPR0 was always faster than HIPR$0.5$ in our experiments with real test instances. This is a good indication that global relabel and region-relabel heuristics are not essential for PRD as well. The worst case complexity is $O(n^2\sqrt{m})$.
\subsection{S/P-ARD implementation}\label{ARD_basic}
The basic implementation of \ARD simply invokes BK solver as follows. On stage $0$ we compute the maximum flow on the $G^R$ by BK, augmenting pathes from source to the sink. On the stage $k$, infinite capacities are added from the boundary nodes having label $k-1$ to the sink, using the possibility of dynamic changes in BK. The flow augmentation to the sink is then continued, reusing the search trees. The relabel procedure is implemented as Alg.~\ref{alg:region-relabel}. In the beginning of next discharge we clear the infinite link from boundary to the sink and continue as above. Some part of the sink search tree, linked through the boundary nodes, get destroyed, but the larger part of it and the source search tree are reused. A more efficient implementation is described in Sect~\ref{sec:ARD_efficient}. It includes additional heuristics and maintenance of separate boundary search trees.
\par
{\bf S-ARD}. In the streaming mode we keep only one region in the memory at a time. After a region is processed by \ARD all the internal data structures have to be saved to the disk and cleared from memory until the region is discharged next time. We manage this by allocating all the region's data into a fixed page in the memory, which can be saved and loaded, preserving the pointers. By doing the load/unload manually (rather than relying on the system swapping mechanism), we can accurately measure the pure time needed for computation (CPU) and the amount of disk I/O. We also can use 32bit pointers with larger problems.
\par
When the sequential algorithm~\ref{alg:sequential} terminates, it has found an optimal preflow. However, we still need to do some extra work to find an optimal cut. On termination the labeling $d$ is only a lower bound on the distance to the sink. Therefore if $d(v)=d^{\infty}$ we are sure that $v\nrightarrow t$ in $G$ and hence $v$ must be in the source set, but if $d(v)<d^{\infty}$ it is still possible that $v\nrightarrow t$ in $G$. Therefore we execute several extra sweeps, performing only region-relabel and gap heuristic, until labels stop changing. In practice it takes from 0 to 2 extra sweeps.
\par
A region with no active vertices is skipped. The global gap heuristic is executed after each region discharge. Because it is based on labels of boundary nodes only, it is sufficient to maintain a label histogram with $|\B|$ bins to implement it. S-ARD uses $O(|\B| + |(\B,\B)|)$ ``shared'' memory and $O(|V^R+E^R|)$ ``region'' memory, to which regions are loaded one at a time.
\par
To solve large problems, which do not fit in the memory, we have to create region graphs without ever loading the full problem. We implemented a tool called {\bf splitter}, which reads the problem from a file and writes edges corresponding to the same region to the region's separate ``part'' file. Only the boundary edges (linking different regions) are withheld to the memory.
\par
{\bf P-ARD}. We implemented this algorithm for a shared-memory system using OpenMP language extension. All regions are kept in memory, the discharges are executed concurrently in separate threads, while the gap heuristic and messages exchange are executed synchronously by the master thread.
\subsection{S/P-PRD implementation}\label{PRD_impl}
To solve region discharge subproblems in PRD in the highest label first fashion we designed a special reimplementation of HIPR, which will be denoted {\bf HPR}. We intended to use the original HIPR implementation to make sure that PRD relies on the state-of-the art core solver. It was not possible directly. 
A subproblem in PRD is given by a region network with fixed distance labels on the boundary (let us call them {\em seeds}). Distance labels in PRD may go up to $n$ in the worst case. The same applies to region subproblems as well. Therefore, keeping an array of buckets corresponding to possible labels (like in HIPR), would not be efficient. It would require $O(|V|)$ memory and an increased complexity. However, because a region has only $|V^R|$ nodes, there are no more than $|V^R|$ distinct labels at any time. This allows to keep buckets as a doubly-linked list with at most $|V^R|$ entries. Highest label selection rule and the region-gap heuristic can then be implemented efficiently with just a small overhead. We tried to keep other details similar to HIPR (current arc data structure, etc.). HPR with arbitrary seeds has the worst case complexity $O(|V^R|^2 \sqrt{|E^R|})$ and uses $O(|V^R|+|V^E|)$ space.
When the whole problem is taken as a single region then HPR should be equivalent to HIPR0. Though the running time on the real instances can be somewhat different.

\mypar{S-PRD} Our reimplementation of Delong and Boykov~\cite{Delong08} for an arbitrary graph and a fixed partition, using HPR as a core solver. It uses the same memory model, paging mechanism and the splitter tool as S-ARD. The region discharge is always warm-started. 
We found it inefficient to run the region-relabel after every discharge. In the current experiments, we run it once at the beginning and then only when a global gap is discovered. To detect a global gap, we keep a histogram of all labels, $O(n)$ memory, and update it after each region discharge (in $O(|V^R|)$ time). In practice, this $O(n)$ memory is not a serious limitation -- labels are usually well below $n$. If they are not then we should consider a weaker gap heuristic with a smaller number of bins. Applying the gap (raising the corresponding nodes to $d^\infty$) for all regions is delayed until they are loaded. So we keep the track of the best global gap detected for every region. Similar to how the sequential algorithm~\ref{alg:sequential} represents both S-ARD and P-ARD, it constitutes a piece of generic code in our implementation, where the respective discharge procedure and gap heuristics are plugged.
\mypar{P-PRD} Implementation of parallel \PRD for shared-memory system with OpenMP.
\section{Efficient Implementation of ARD}\label{sec:ARD_efficient}
The basic implementation of S-ARD, as described in the previous section, worked reasonably fast (comparable to BK) on simple problems like 2D stereo and 2D random segmentation (Sec.~\ref{sec:exp_rand}). However, on some 3D problems the performance was unexpectedly bad. For example, to solve \verb=LB07-bunny-lrg= instance (Sec.~\ref{sec:exp_streaming}) the basic implementation required 32 minutes of CPU time. In this section we describe an efficient implementation which is more robust and is comparable in speed with BK on all tested instances. In particular, to solve \verb=LB07-bunny-lrg= it takes only 15 seconds of CPU time. The problem why the basic implementation is so slow is in the nature of the algorithm: sometimes it has to augment the flow to the boundary, without knowing of whether it is a useful work or not. If the particular boundary was selected wrongly the work is wasted. This happens in \verb=LB07-bunny-lrg= instance, where the data seeds are sparse. The huge work is performed on pushing the flow around in the first few iterations, before a reasonable labeling is established. We introduce two heuristics how to overcome this problem: the boundary-relabel heuristic and partial discharges. An additional speed-up is obtained by dynamically maintaining boundary search trees and the current labeling.
\subsection{Boundary Relabel Heuristic}
We would like to have a better distance estimate, but we cannot run a global relabel because implementing it in a distributed fashion would take several full sweeps, which would be too wasteful. Instead, we go for the following cheaper lower bound.
Our implementation keeps all the boundary edges (including their flow and distance labels of the adjacent vertices) in the shared memory. Fig.~\ref{fig:bnd_relabel}(a) illustrates this boundary information. We want to improve the labels by analyzing only this boundary part of the graph, not looking inside the regions. Since we don't know how the vertices are connected inside the regions, we have to assume that every boundary vertex might be connected to any other within the region, except of the following case. If $u$ and $v$ are in the same region $R$ and $d(u)>d(v)$ then we know for sure that $u\nrightarrow v$ in $R$. It follows from validity of labeling $d$. We now can calculate a lower bound on the distance $d^{*\B}$ in $G$ assuming that all the rest of the nodes are potentially connected within the regions.
\begin{figure}[ht]
\begin{center}
\setlength{\tabcolsep}{0pt}
\begin{tabular}{c}\includegraphics[width=0.25\linewidth]{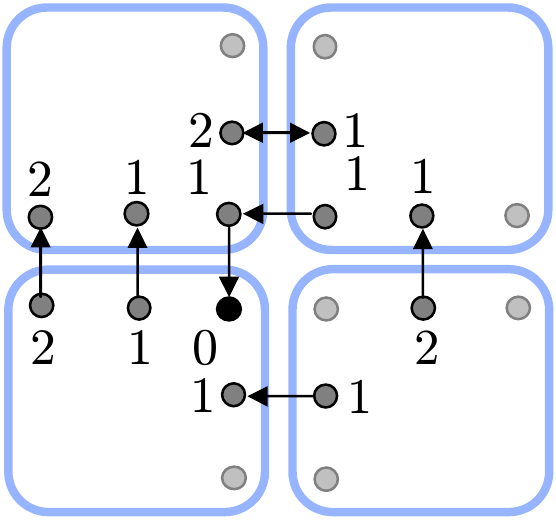}\\
(a)
\end{tabular}\hspace{1cm} 
\begin{tabular}{c}\includegraphics[width=0.25\linewidth]{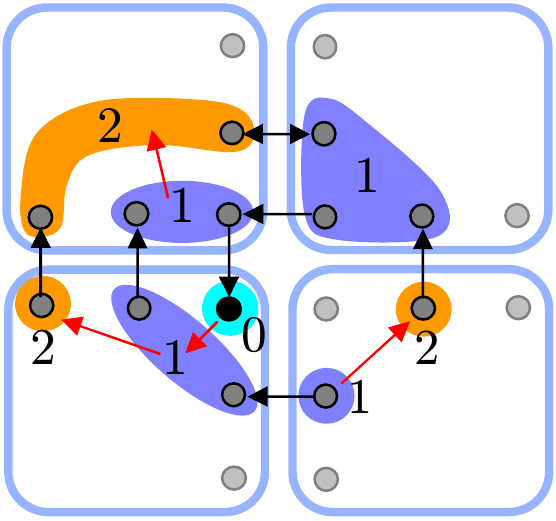}\\
(b)
\end{tabular}\hspace{1cm} 
\begin{tabular}{c}\includegraphics[width=0.25\linewidth]{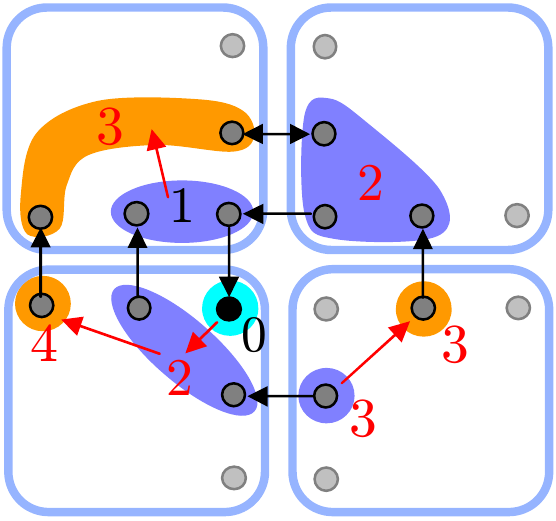}\\
(c)
\end{tabular}
\end{center}
\caption{Boundary relabel heuristic: (a) Boundary vertices of the network and a valid labeling. Directed arcs correspond to non-zero residual capacities. Vertices without numbers have label $d^{\infty}$ and do not participate in the construction. (b) Vertices having the same label are grouped together within each region and arcs of zero length (of red color) are added from a group to the next label's group. It is guaranteed that \eg vertices with label 1 are not reachable from vertices with label 2 within the region, hence there is no arc $2{\to}1$. Black arcs have the unit length. (c) The distance in the auxiliary graph is a valid labeling and a lower bound on the distance in the original network.}
\label{fig:bnd_relabel}
\end{figure}
\par
If $d(v) = d(u)$ we have to assume that $v\rightarrow u$ and $u\rightarrow v$ in $R$, therefore the new lower bound for $u$ and $v$ will coincide. Hence we group vertices having the same label within a region together as shown in Fig.~\ref{fig:bnd_relabel}(b). In the case $d(v) < d(u)$ we know that $u\nrightarrow v$ but have to assume $v\rightarrow u$ in $R$. We thus add a directed arc of length zero from the group of $v$ to the group of $u$ (Fig.~\ref{fig:bnd_relabel}(b)). Let $d_1<d_2<d_3$ be labels of groups within one region. There is no need to create an arc $d_1{\to}d_3$, because two arcs $d_1{\to}d_2$ and $d_2{\to}d_3$ of length zero are an equivalent representation. Therefore it is sufficient to connect only groups having consequent labels. Let us denote thus constructed graph $\bar G$. We can calculate the distance to nodes with label 0 in $\bar G$ by running Dijkstra's algorithm in $\bar G$. Let this distance be denoted $d'$. We then update the labels as
\begin{equation}
d(u) := \max\{d(u), d'(u)\}.
\end{equation}
We have to prove the following two points:
\begin{enumerate}
\item $d'$ is a valid labeling;
\item If $d$ and $d'$ are valid labellings, then $\max(d,d')$ is a valid.
\end{enumerate}
\begin{proof}
\begin{enumerate}
\item Let $c(u,v)>0$. Let $u$ and $v$ be in the same region. It must be that $d(u)\leq d(v)$. Therefore either $u$ and $v$ are in the same group or there is an arc of length zero from group of $u$ to group of $v$. In any case it must be $d'(u)\leq d'(v)$. If $u$ and $v$ are in different regions, there is an arc of length 1 from group of $u$ to group of $v$ and therefore $d'(u)\leq d'(v)+1$.
\item Let $l(u,v) = 1$ if $(u,v)\in (\B,\B)$ and $l(u,v)=0$ otherwise. We have to prove that if $c(u,v)>0$ then
\begin{equation}\label{eq:bnd_relabel 1}
\max\{d(u),d'(u)\} \leq \max\{d(v),d'(v)\} + l(u,v).
\end{equation}
Let $\max\{d(u),d'(u)\} = d(u)$. From validity of $d$ we have $d(u) \leq d(v) + l(u,v)$.
If $d(v)\geq d'(v)$, then $\max\{d(v),d'(v)\} = d(v)$ and~\eqref{eq:bnd_relabel 1} holds.
If $d(v) < d'(v)$, then $d(u) \leq d(v) + l(u,v) < d'(v)+l(u,v)$ and~\eqref{eq:bnd_relabel 1} holds again.
\end{enumerate}
\end{proof}
The complexity of this algorithm is $O(|(\B,\B)|)$. It is relatively inexpensive and can be run after each sweep.
%
\subsection{Partial Discharges}
Another heuristic which proved very efficient was simply to postpone path augmentations to higher boundary vertices to further sweeps. In combination with boundary-relabel this allows to save a lot of unnecessary work. More precisely, on sweep $s$ the algorithm~\ARD is allowed to execute only stages up to $s$. This way, in sweep $0$ only paths to the sink are augmented and not any path to the boundary. Nodes which cannot reach the sink (but can potentially reach the boundary) get label 1. These initial labels may already be improved by boundary-relabel. In sweep $1$ paths to the boundary with label $0$ are allowed to be augmented and so on.
\subsection{Boundary Search Trees}\label{sec:ARD_efficient3}
\begin{figure}[ht]
\begin{center}
\setlength{\tabcolsep}{0pt}
\hfill (a) \hfill \ \ \ (b) \hfill \ \ \ (c) \hfill \ \hfill \\ 
\begin{tabular}{c}\includegraphics[width=0.75\linewidth]{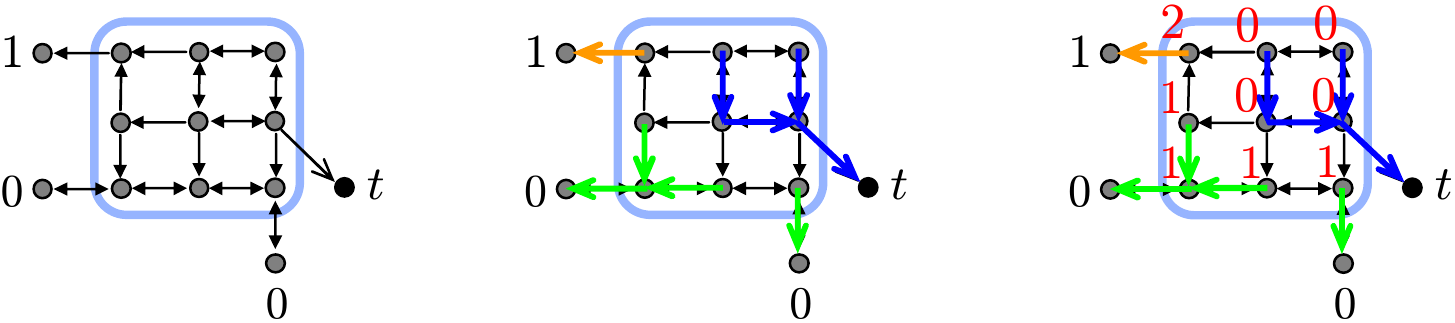}
\end{tabular}
\end{center}
\caption{Search trees. (a) A region with some residual arcs. The region has only 3 boundary nodes, for simplicity. (b) Search trees of the sink and boundary nodes: vertex $v$ is in the tree with the lowest possible label of the root. The sink is assigned a special label $-1$. The source search tree is empty in this example. (c) Labels of the inner vertices are determined as the tree's root label$+1$.}
\label{fig:ARD_trees}
\end{figure}
We now redesign the algorithm so that not only the sink and source search trees are maintained but also the search trees of boundary nodes. This allows to save computation when the labeling of many boundary nodes remains constant during the consequent sweeps, but only a small fraction is changed. Additionally, knowing the search tree for each inner vertex of the region determines its actual label, so the region-relabel procedure becomes obsolete. The design of the search tree data structures, their updates and other detail are the same as in~\cite{Kolmogorov-04-PhD}, only few changes to the implementation are necessary. For each vertex $v\in R$ we introduce a mark $\tilde d(v)$ which corresponds to the root label of its tree or is set to a special {\em free} mark if $v$ is not in any tree. For each tree we keep a list of open vertices (called active in~\cite{Kolmogorov-04-PhD}). A vertex is {\em open} if it is not blocked by the vertices of the trees with the same or lower root label. The trees may grow only at the open vertices.
\par
Figure~\ref{fig:ARD_trees} shows the correspondence between search trees and the labels. The sink search tree is assigned label $-1$. In the stage $0$ of \ARD we grow the sink tree and augment all found paths if it touches the source search tree. Vertices, which are added to the sink tree are marked with label $\tilde d = -1$. In stage $k+1$ of \ARD we grow trees with root at a boundary vertices $w$ with label $d(w)=k$, all vertices added to the tree are marked with $\tilde d = k$. When the tree touches the source search tree, the found path is augmented. If the tree touches a vertex $u$ with label $\tilde d(u) < k$, it means that $u$ is already in the search tree with a lower root and no action is taken. It cannot happen that during growth of a search tree with root label $k$ a vertex is reached with label $\tilde d>k$, this would contradict to properties of \ARD. The actual label of a vertex $v$ at any time is determined as $\tilde d(v)+1$ if $v\in R$ and $d(v)$ if $v\in B^R$.
\par
Let us now consider the situation when region $R$ has build some search trees and the label of a boundary vertex $w$ is risen from $d(w)$ to $d'(w)$ (as a result of update from the neighboring region or one of the heuristics). All the vertices in the search tree starting from $w$ were previously marked with $d(w)$ and have to be declared as free vertices or adopted to any other valid tree with root label $d(w)$. The adaptation is performed by the same mechanism as in BK. The situation when a preflow is injected from the neighboring region and (a part of) a search tree becomes disconnected is also handled by the orphan adaptation mechanism.
\par
The combination of the above improvements allows S-ARD to run in about the same time as BK on all tested vision instance (Sect.~\ref{sec:exp_streaming}), sometimes being even significantly faster (154 s. vs. 245 s. on \verb=BL06-gargoyle-lrg=).
%
%
%
\section{Experiments}\label{sec:experiments}
All experiments are conducted on a system with Intel Core 2 Quad CPU@2.66Hz, 4GB memory, Windows XP 32bit and Microsoft VC compiler. There is 3 series of experiments:
\begin{itemize}
\item Synthetic experiments, where we observe general dependencies of the algorithms, with some statistical significance, \ie not being biased to a particular problem instance. It also serves as an empirical validation, as thousands of instances are solved. Here, the basic implementation of S-ARD was used.
\item Sequential competition. We study sequential versions of the algorithms, running them on real vision instances. Only a single core of the CPU is utilized. We fix the region partition and study how much disk I/O it would take to solve each problem when only one region can be loaded in the memory at a time. In this and the next experiment we used the efficient implementation of ARD. Note, in~\cite{mixed_maxflow-11} we reported worse results with the earlier implementation.
\item Parallel competition. Parallel algorithms are tested on the instances which can fully fit in 2GB of memory. All 4 cores of the CPU are allowed to be used. We compare our algorithms with two other state-of-the-art distributed implementations.
\end{itemize}
\subsection{General Dependences: Synthetic Problems}\label{sec:exp_rand}
We generated simple synthetic problems to validate the algorithms. The network is constructed as a 2D grid with a regular connectivity structure. Figure~\ref{fig:graph}(a) shows an example of such a network. 
\begin{figure}[ht]
\begin{center}
\setlength{\tabcolsep}{0pt}
\begin{tabular}{cc}
\begin{tabular}{c}\includegraphics[width=0.25\linewidth]{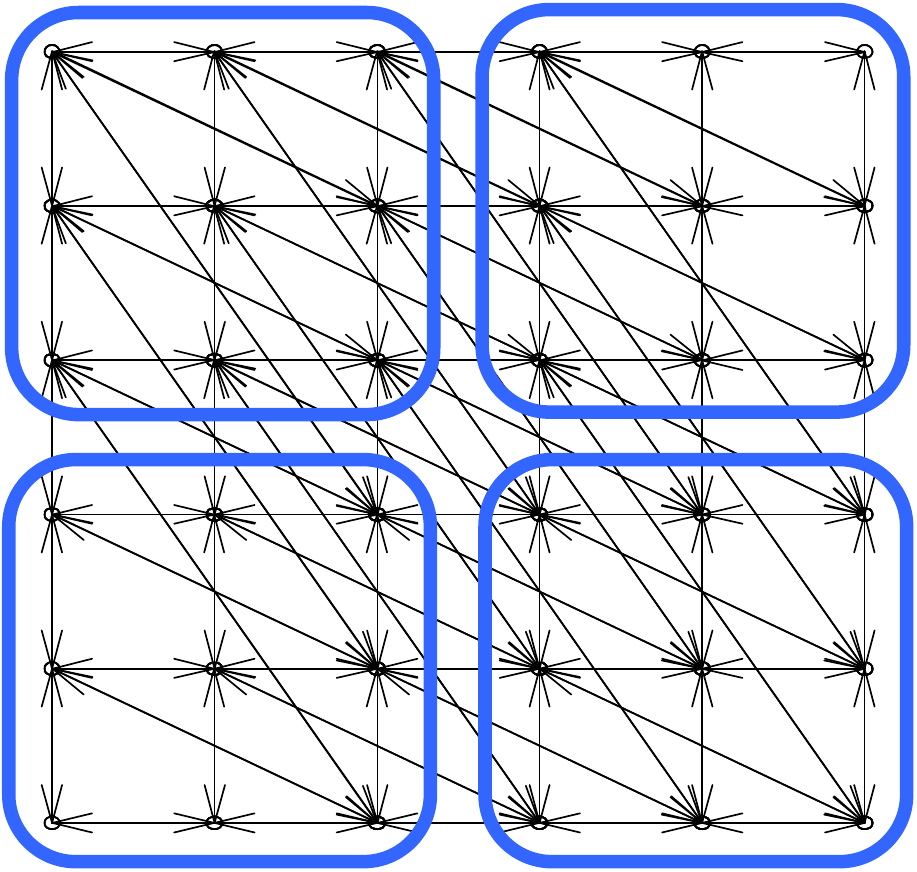}\\
(a)
\end{tabular}\ \ \ &
\begin{tabular}{c}\includegraphics[width=0.6\linewidth]{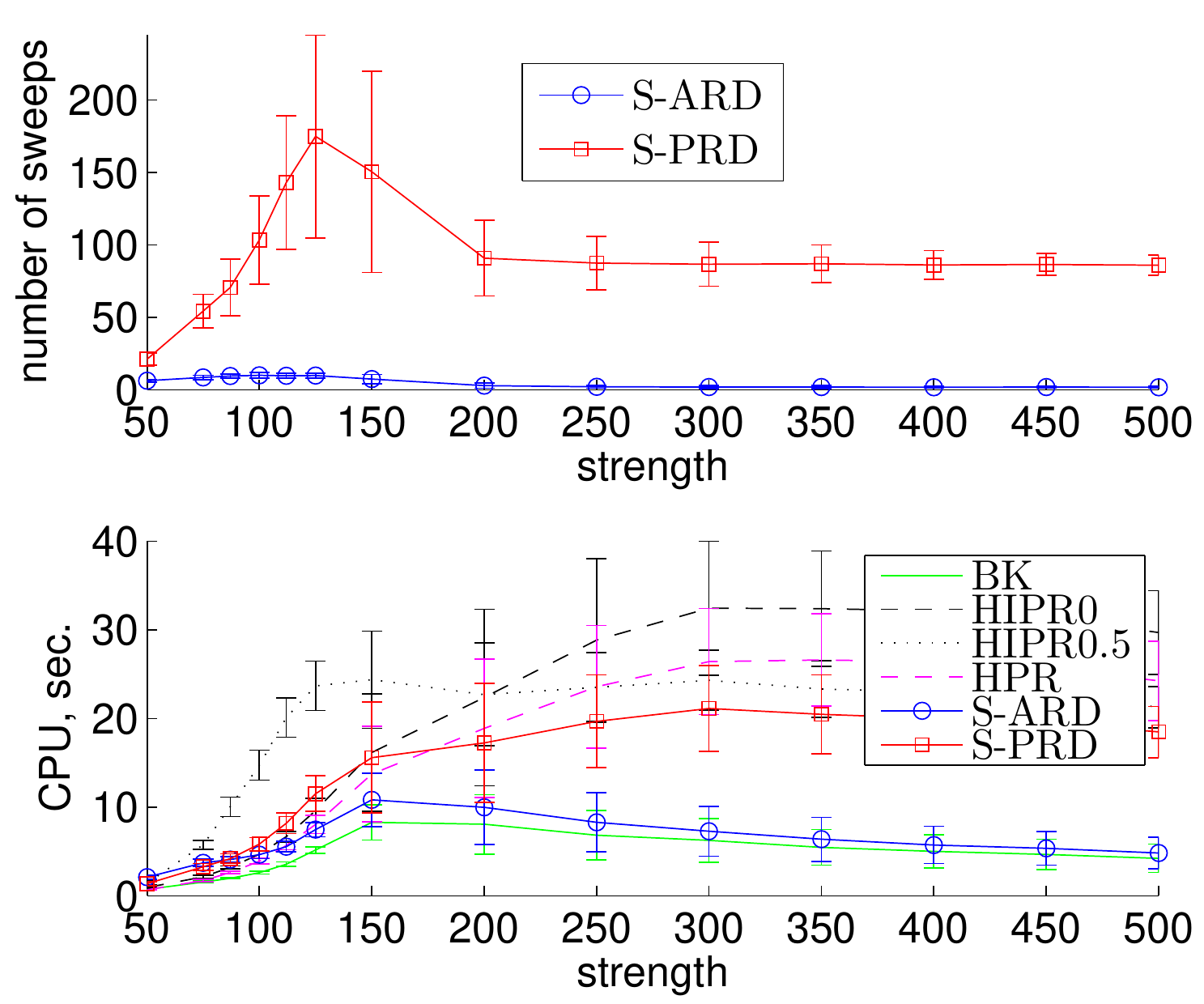}\\
(b)
\end{tabular}
\end{tabular}
\end{center}
\caption{(a) Example of a synthetic problem: a network of the size $6{\times}6$, connectivity 8, partitioned into 4 regions. The source and sink are not shown. (b) Dependence on the interaction strength, for size $1000{\times}1000$, connectivity 8 and 4 regions. Plots show mean values and intervals containing 70\% of the samples.}
\label{fig:graph}
\end{figure}
The edges are added to the nodes at the following relative displacements {\small $(0,1)$, $(1,0)$, $(1,2)$, $(2,1)$, $(1,3)$, $(3,1)$, $(2,3)$, $(3,2)$, $(0,2)$, $(2,0)$, $(2,2)$, $(3,3)$, $(3,4)$, $(4,2)$}. By {\em connectivity} we mean the number of edges incident to a node far enough from the boundary. Adding pairs {\small $(0,1)$, $(1,0)$} results in connectivity 4 and so on. Each node is given an integer excess/deficit distributed uniformly in the interval $[-500\  500]$. A positive number means a source link and a negative number a sink link. All edges in the graph are assigned a constant capacity, called~{\em strength}. %
 The network is partitioned into regions by slicing it in $s$ equal parts in both dimensions. Thus we have 4 parameters: the number of nodes, the connectivity, the strength and the number of regions. We generate 100 instances for each value of the parameters.
\par
\begin{figure}[ht]
\begin{center}
\includegraphics[width=0.6\linewidth]{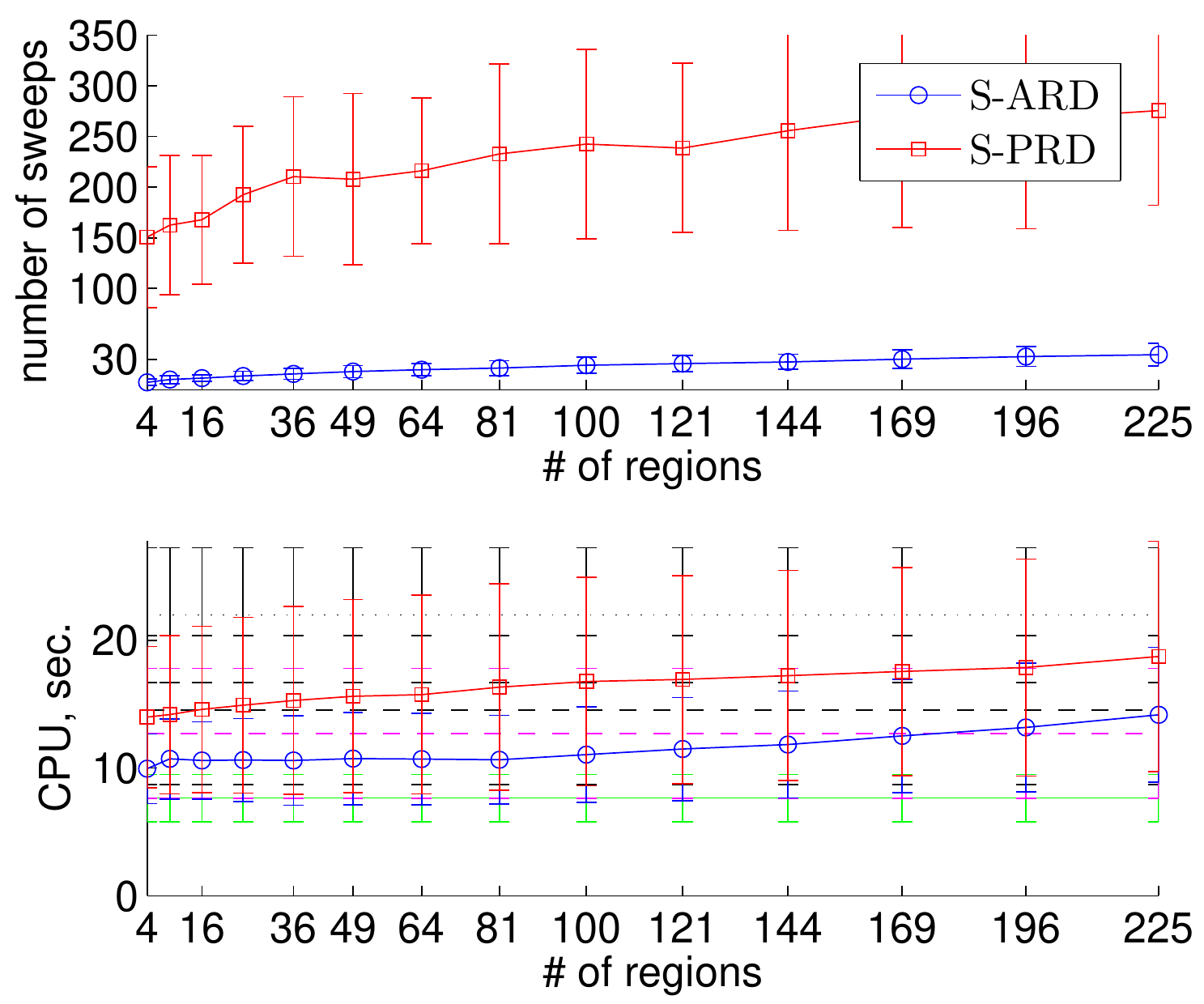}
\end{center}
\caption{Dependence on the number of regions, for size $1000{\times}1000$, connectivity 8, strength 150.}
\label{fig:regions}
\end{figure}
\begin{figure}[ht]
\begin{center}
\includegraphics[width=0.6\linewidth]{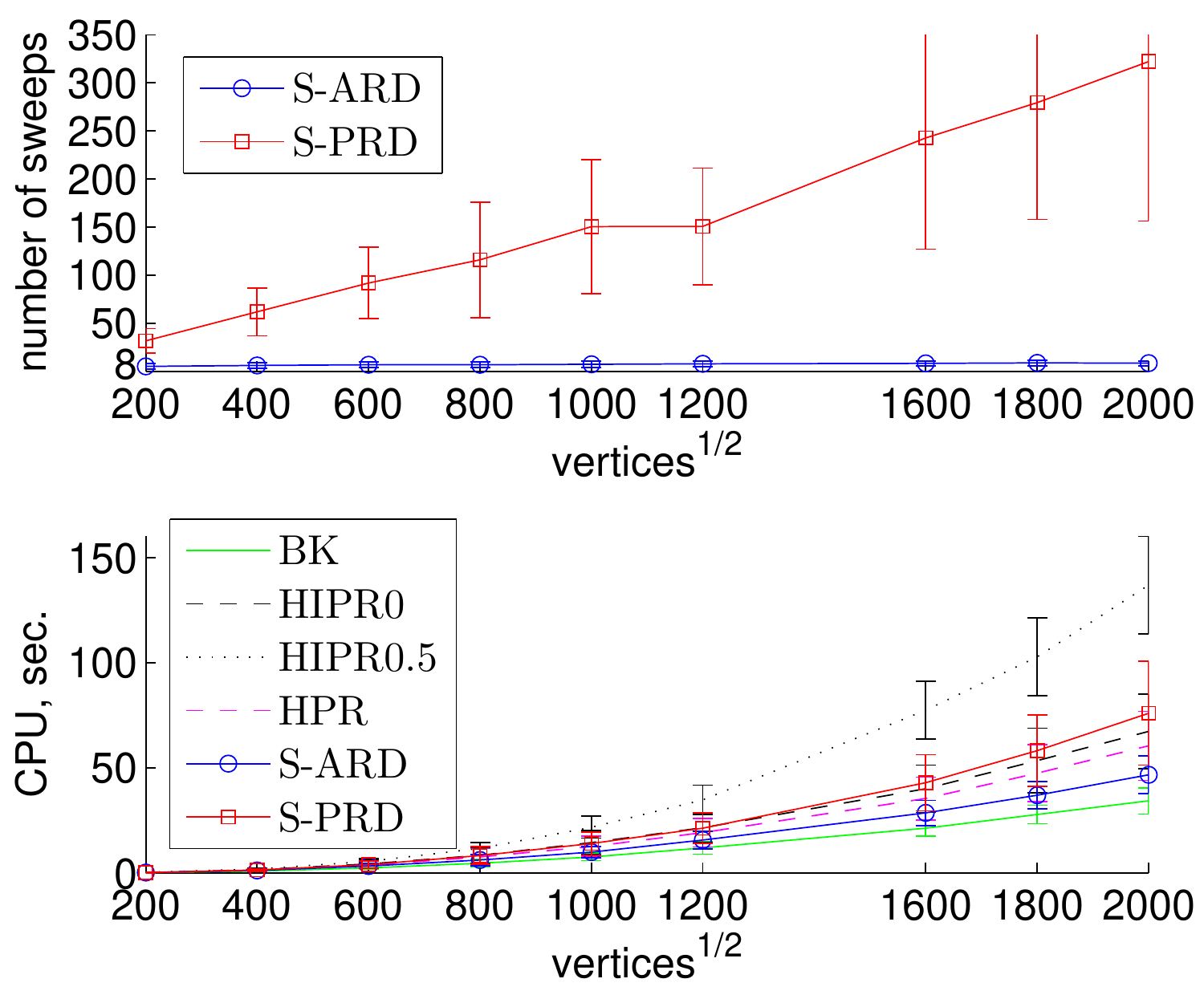}
\end{center}
\caption{Dependence on the problem size, for connectivity 8, strength 150, 4 regions.}
\label{fig:size}
\end{figure}
\begin{figure}[ht]
\begin{center}
\includegraphics[width=0.6\linewidth]{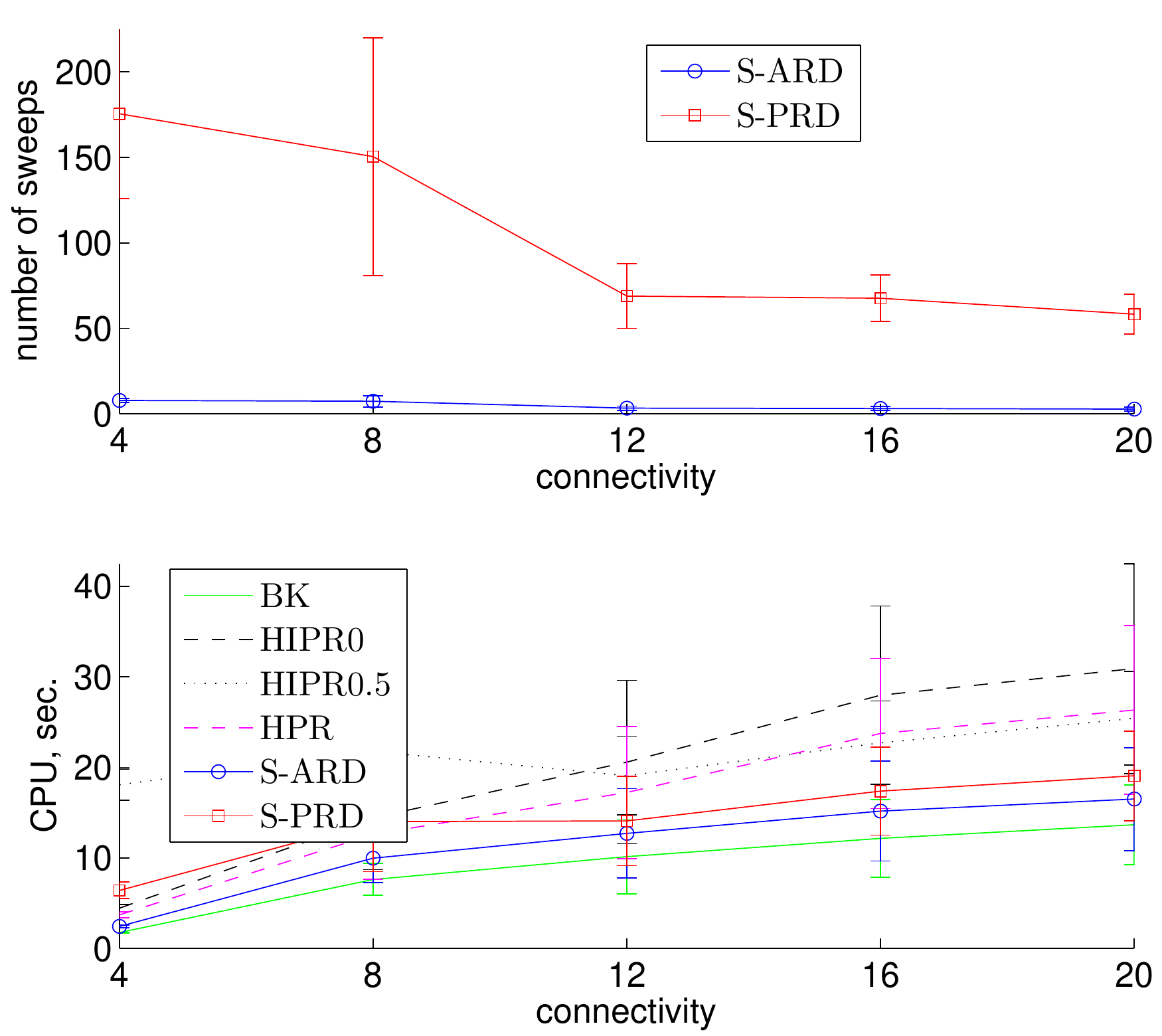}
\end{center}
\caption{
Dependence on the connectivity, for size $1000{\times}1000$, strength = $(150\cdot 8)$/connectivity, 4 regions.
}
\label{fig:connectivity}
\end{figure}
Let us first look at the dependence on the strength shown in Figure~\ref{fig:graph}(b).
Problems with small strength are easy, because they are very local -- long augmentation paths do not occur. For problems with large strength long paths needs to be augmented. However, finding them is easy because bottlenecks are unlikely. Therefore BK and S-ARD have a maximum in the computation time somewhere in the middle. It is more difficult to transfer the flow over long distances for push-relabel algorithms. This is where the global relabel heuristic becomes efficient and HIPR$0.5$ outperforms HIPR$0$. The region-relabel heuristic of S-PRD allows it to outperform other push-relabel variants.
\par
In general, we think all such random 2D networks are too easy. Nevertheless, they are useful and instructive to show basic dependences. We now select the ``difficult'' point for BK with the strength~150 and study other dependencies:
\begin{itemize}
\item The number of regions (Figure~\ref{fig:regions}). For this problem family both the number of sweeps and the computation time grows slowly with the number of regions.
\item The problem size (Figure~\ref{fig:size}). Computation efforts of all algorithms grow proportionally. However, the number of sweeps shows different asymptotes. It is almost constant for S-ARD but grows significantly for S-PRD.
\item Connectivity (Figure~\ref{fig:connectivity}). Connectivity is not independent of the strength. Roughly, $4$ edges with capacity $100$ can transmit as much flow as $8$ edges with capacity $50$. Therefore while increasing the connectivity we also decrease the  strength as $(150\cdot 8)$/connectivity in this plot. 
\item  Workload (Figure~\ref{fig:workload}). This plot shows how much time each of the algorithms spends performing different parts of computation. Note that the problems are solved on a single computer with all regions kept in memory, therefor the time on sending messages should be understood as updates of dynamic data structure of the region \wrt the new labeling and flow on the boundary.
For S-PRD more sweeps are needed, so the total time spent in messages and gap heuristic is increased. Additionally, the gap heuristic has to take into account all nodes, unlike only the boundary nodes in S-ARD.
\end{itemize}
\begin{figure}[ht]\label{fig:workload}
\begin{center}
\includegraphics[width=0.75\linewidth]{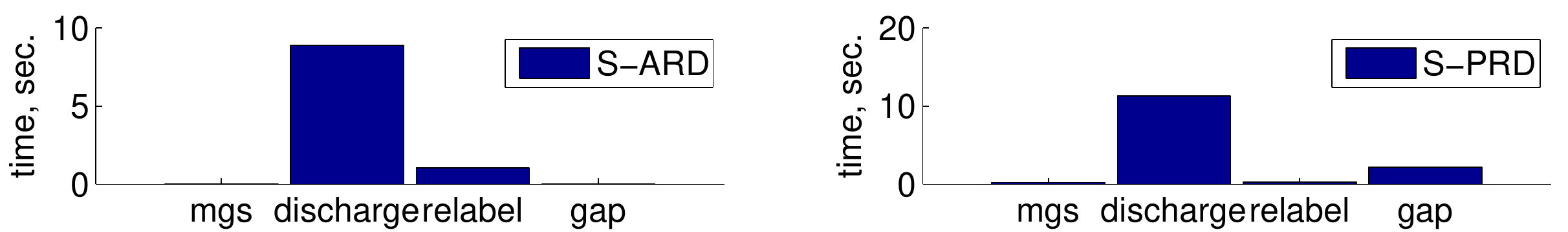}
\end{center}
\caption{Workload distribution, for size $1000{\times}1000$, connectivity 8, 4 regions, strength 150. {\em msg} -- passing the messages (updating flow and labels on the boundary), {\em discharge} -- work done by the core solver (BK for S-ARD and HPR for S-PRD), {\em relabel} -- the region-relabel operation, {\em gap} -- the global gap heuristic.}
\end{figure}
\subsection{Sequential Competition}\label{sec:exp_streaming}
We tested our algorithms on the {\sc maxflow} problem instances published by the Computer Vision Research Group at the University of Western Ontario\footnote{ \url{http://vision.csd.uwo.ca/maxflow-data/}}. The data consist of typical max-flow problems in computer vision, graphics, and biomedical image analysis.
{\bf Stereo} instances are sequences of subproblems (arising in the expansion move algorithm) for which the total time should be reported. There are two models: BVZ~\cite{BVZ}, in which the graph is a 4-connected 2D grid, and KZ2~\cite{KZ2}, in which there are additional long-range links. 
{\bf Multiview} 3D reconstruction models LB06~\cite{LB06} and BL06~\cite{BL06}. Graphs of these problems are cellular complexes subdividing the space into 3D cubes and each cube into 24 smaller cells. {\bf Surface} fitting instances LB07~\cite{LB07} are 6-connected 3D grid graphs.
And finally, there is a collection of volumetric {\bf segmentation} instances BJ01~\cite{BJ01}, BF06~\cite{BF06}, BK03~\cite{BK03} with 6-connected and 26-connected 3D grid graphs.
\par
To test our streaming algorithms, we used the \verb=regulargrid= hint
available in the definition of the problems to select the regions by slicing the problem into $4$ parts in each dimension -- into $16$ regions for 2D BVZ grids and into $64$ regions for 3D segmentation instances. Problems KZ2 do not have such a hint (they are not regular grids), so we sliced them into 16 pieces just by the node number. The same we did for the multiview LB06 instances. Though they have a \verb=size= hint, we failed to interpret the node layout correctly (the separator set, $\B$, was unexpectedly large when trying to slice along the dimensions). So we sliced them purely by the node number.
\par
One of the problems we faced is pairing the arcs which are reverse of each other. While in stereo, surface and multiview problems, the reverse arcs are consequent in the files, and can be easily paired, in 3D segmentation they are not. For a generic algorithm, not being aware of the problem's regularity structure, it is actually a non-trivial problem requiring at least the memory to read all of the arcs first. Because our goal is a relative comparison, we did not pair the arcs in 3D segmentation. This means we kept twice as many arcs than necessary for those problems. In Table~\ref{table1} this is seen, \eg for \verb=babyface.n26c100=, which is 26-connected, but we construct a multigraph (has parallel arcs) with average node degree of 49. For some other instances, however, this is not visible, because there could be many zero arcs, \eg \verb=liver.n26c10= which is a 26-connected grid too, but has the average node degree of 10.4 with unpaired arcs. The comparison among different methods is correct, since all of them are given exactly the same multigraph.
\par
The results are presented in Table~\ref{table1}. 
We did measure the time of disk I/O, however it depends on the hard drive performance, other concurrently running processes as well as on system file caching (has effect for small problems) and therefore we report only bytes written/loaded. Note that disk I/O is not proportional to the number of sweeps, because some regions may be inactive during a sweep and thus skipped. 
For HIPR we do not monitor the memory usage. It is slightly higher than that of HPR, because of keeping initial arc capacities. 
\par
For verification of solvers, we compared the flow values to the ground truth solution provided in the dataset. Additionally, we saved the cut output from each solver and checked its cost independently. Verifying the cost of the cut is relatively easy: the cut can be kept in memory and the edges can be processed form the DIMACS problem definition file on-line. An independent check of (pre-)flow feasibility would be necessary for full verification of a solver. However, it requires storing the full graph in memory and was not implemented. 
\par
{\scriptsize
\setlength{\tabcolsep}{2pt}
\begin{longtable}{|lcc|c|c|c|c|ccc|ccc|}
\multicolumn{13}{p{\linewidth}}{{\small\bfseries\tablename\ \thetable{}.} \small Sequential Competition. CPU -- the time spent purely for computation, excluding the time for parsing, construction and disk I/O. The total time to solve the problem is not shown. $K$ -- number of regions. RAM -- memory taken by the solver; for BK in the case it exceeds 2GB limit, the expected required memory; for streaming solvers the sum of shared and region memory. I/O -- total bytes read or written to the disk. \vspace{1mm}} \label{table1}\\
\par
\hline
\multicolumn{3}{|c|}{problem} & BK & HIPR0 & HIPR0.5 & HPR & \multicolumn{3}{|c|}{S-ARD} & \multicolumn{3}{|c|}{S-PRD}\\* 
\hline
\rowcolor{blue!10}
 name & n($10^6)$ & m/n & CPU & CPU & CPU & CPU & CPU & \hskip-1mm sweeps\hskip-1mm &  $K$  & CPU & \hskip-1mm sweeps\hskip-1mm &   $K$ \\* 
\rowcolor{blue!10}
 size &         &     & RAM & RAM & RAM & RAM & RAM &    & I/O & RAM &    & I/O\\* 
\endfirsthead
\multicolumn{13}{c}%
{{\bfseries \tablename\ \thetable{} -- continued from previous page}} \\
\hline
\endhead
\hline \multicolumn{13}{|r|}{{Continued on next page}} \\ \hline
\endfoot
\hline \hline
\endlastfoot
\hline
\multicolumn{13}{|c|}{\bf stereo}\\* 
\hline 
\showrowcolors
\rowcolor{blue!10}
BVZ-sawtooth(20) & $0.2$ & $4.0$ & $0.68$s & $3.0$s & $7.7$s & $3.8$s & $0.63$s & $  6$ & $ 16$ & $3.7$s & $ 32$ & $ 16$ \\* 
\showrowcolors
\rowcolor{blue!10}
\multicolumn{3}{|>{\columncolor[rgb]{0.9,0.9,1}}l|}{434$\times$380, \ \ $ 14$MB} & $ 14$MB &  &  & $ 17$MB & $0.3{+}0.9$MB & & $114$MB & $0.8{+}1.1$MB & & $0.7$GB\\ 
\hiderowcolors
BVZ-tsukuba(16) & $0.1$ & $4.0$ & $0.36$s & $1.9$s & $4.9$s & $2.6$s & $0.35$s & $  5$ & $ 16$ & $2.1$s & $ 29$ & $ 16$ \\* 
\hiderowcolors
\multicolumn{3}{|l|}{384$\times$288, \ \ $8.6$MB} & $9.7$MB &  &  & $ 11$MB & $0.2{+}0.6$MB & & $ 71$MB & $0.5{+}0.8$MB & & $373$MB\\ 
\showrowcolors
\rowcolor{blue!10}
BVZ-venus(22) & $0.2$ & $4.0$ & $1.2$s & $5.7$s & $15$s & $6.2$s & $1.1$s & $  6$ & $ 16$ & $6.6$s & $ 36$ & $ 16$ \\* 
\showrowcolors
\rowcolor{blue!10}
\multicolumn{3}{|>{\columncolor[rgb]{0.9,0.9,1}}l|}{434$\times$383, \ \ $ 14$MB} & $ 15$MB &  &  & $ 17$MB & $0.3{+}0.9$MB & & $119$MB & $0.8{+}1.1$MB & & $0.9$GB\\ 
\hiderowcolors
KZ2-sawtooth(20) & $0.3$ & $5.8$ & $1.8$s & $7.1$s & $22$s & $6.1$s & $2.2$s & $  6$ & $ 16$ & $7.4$s & $ 23$ & $ 16$ \\* 
\hiderowcolors
\multicolumn{3}{|l|}{$ 38$MB} & $ 33$MB &  &  & $ 36$MB & $1.2{+}2.0$MB & & $280$MB & $1.8{+}2.5$MB & & $1.2$GB\\ 
\showrowcolors
\rowcolor{blue!10}
KZ2-tsukuba(16) & $0.2$ & $5.9$ & $1.1$s & $5.3$s & $20$s & $4.4$s & $1.4$s & $  6$ & $ 16$ & $5.9$s & $ 18$ & $ 16$ \\* 
\showrowcolors
\rowcolor{blue!10}
\multicolumn{3}{|>{\columncolor[rgb]{0.9,0.9,1}}l|}{$ 26$MB} & $ 23$MB &  &  & $ 25$MB & $1.1{+}1.4$MB & & $186$MB & $1.4{+}1.7$MB & & $0.7$GB\\ 
\hiderowcolors
KZ2-venus(22) & $0.3$ & $5.8$ & $2.8$s & $13$s & $39$s & $10$s & $3.4$s & $  8$ & $ 16$ & $14$s & $ 36$ & $ 16$ \\* 
\hiderowcolors
\multicolumn{3}{|l|}{$ 38$MB} & $ 34$MB &  &  & $ 37$MB & $1.2{+}2.1$MB & & $330$MB & $1.9{+}2.5$MB & & $1.8$GB\\ 
\hline
\multicolumn{13}{|c|}{\bf multiview}\\* 
\hline 
\showrowcolors
\rowcolor{blue!10}
BL06-camel-lrg & $18.9$ & $4.0$ & $81$s &  &  &  & $63$s & $ 11$ & $ 16$ & $308$s & $418$ & $ 16$ \\* 
\showrowcolors
\rowcolor{blue!10}
\multicolumn{3}{|>{\columncolor[rgb]{0.9,0.9,1}}l|}{100$\times$75$\times$105$\times$24, \ \ $2.0$GB} & $1.6$GB &  &  &  & $ 19{+}103$MB & & $ 28$GB & $ 86{+}122$MB & & $0.6$TB\\ 
\hiderowcolors
BL06-camel-med & $9.7$ & $4.0$ & $25$s & $29$s & $77$s & $59$s & $20$s & $ 12$ & $ 16$ & $118$s & $227$ & $ 16$ \\* 
\hiderowcolors
\multicolumn{3}{|l|}{80$\times$60$\times$84$\times$24, \ \ $1.0$GB} & $0.8$GB &  &  & $1.0$GB & $ 31{+} 53$MB & & $ 16$GB & $ 46{+} 63$MB & & $225$GB\\ 
\showrowcolors
\rowcolor{blue!10}
BL06-camel-sml & $1.2$ & $4.0$ & $0.98$s & $1.5$s & $6.3$s & $1.8$s & $0.96$s & $  9$ & $ 16$ & $4.2$s & $ 47$ & $ 16$ \\* 
\showrowcolors
\rowcolor{blue!10}
\multicolumn{3}{|>{\columncolor[rgb]{0.9,0.9,1}}l|}{40$\times$30$\times$42$\times$24, \ \ $115$MB} & $106$MB &  &  & $124$MB & $8.0{+}7.0$MB & & $1.4$GB & $6.9{+}8.2$MB & & $9.1$GB\\ 
\hiderowcolors
BL06-gargoyle-lrg & $17.2$ & $4.0$ & $245$s &  &  & $91$s & $154$s & $ 21$ & $ 16$ & $318$s & $354$ & $ 16$ \\* 
\hiderowcolors
\multicolumn{3}{|l|}{80$\times$112$\times$80$\times$24, \ \ $1.8$GB} & $1.5$GB &  &  & $1.7$GB & $ 23{+} 95$MB & & $ 35$GB & $ 82{+}112$MB & & $0.8$TB\\ 
\showrowcolors
\rowcolor{blue!10}
BL06-gargoyle-med & $8.8$ & $4.0$ & $115$s & $17$s & $58$s & $37$s & $73$s & $ 16$ & $ 16$ & $143$s & $340$ & $ 16$ \\* 
\showrowcolors
\rowcolor{blue!10}
\multicolumn{3}{|>{\columncolor[rgb]{0.9,0.9,1}}l|}{64$\times$90$\times$64$\times$24, \ \ $0.9$GB} & $0.8$GB &  &  & $0.9$GB & $ 37{+} 50$MB & & $ 14$GB & $ 44{+} 58$MB & & $235$GB\\ 
\hiderowcolors
BL06-gargoyle-sml & $1.1$ & $4.0$ & $6.1$s & $1.2$s & $3.0$s & $1.7$s & $3.9$s & $ 10$ & $ 16$ & $4.4$s & $ 55$ & $ 16$ \\* 
\hiderowcolors
\multicolumn{3}{|l|}{32$\times$45$\times$32$\times$24, \ \ $106$MB} & $ 97$MB &  &  & $114$MB & $9.3{+}6.6$MB & & $1.3$GB & $6.9{+}7.7$MB & & $9.4$GB\\ 
\hline
\multicolumn{13}{|c|}{\bf surface}\\* 
\hline 
\showrowcolors
\rowcolor{blue!10}
LB07-bunny-lrg & $49.5$ & $6.0$ &  &  &  &  & $15$s & $  6$ & $ 64$ & $416$s & $ 43$ & $ 64$ \\* 
\showrowcolors
\rowcolor{blue!10}
\multicolumn{3}{|>{\columncolor[rgb]{0.9,0.9,1}}l|}{401$\times$396$\times$312, \ \ $6.6$GB} & $5.7$GB &  &  &  & $130{+} 87$MB & & $ 49$GB & $226{+} 99$MB & & $276$GB\\ 
\hiderowcolors
LB07-bunny-med & $6.3$ & $6.0$ & $1.6$s & $20$s & $41$s & $26$s & $2.1$s & $ 10$ & $ 64$ & $16$s & $ 27$ & $ 64$ \\* 
\hiderowcolors
\multicolumn{3}{|l|}{202$\times$199$\times$157, \ \ $0.8$GB} & $0.7$GB &  &  & $0.8$GB & $ 33{+} 12$MB & & $6.5$GB & $ 43{+}863$MB & & $0.0$MB\\ 
\showrowcolors
\rowcolor{blue!10}
LB07-bunny-sml & $0.8$ & $5.9$ & $0.17$s & $0.80$s & $1.8$s & $1.1$s & $0.32$s & $  9$ & $ 64$ & $0.86$s & $ 19$ & $ 64$ \\* 
\showrowcolors
\rowcolor{blue!10}
\multicolumn{3}{|>{\columncolor[rgb]{0.9,0.9,1}}l|}{102$\times$100$\times$79, \ \ $ 94$MB} & $ 95$MB &  &  & $101$MB & $8.2{+}1.6$MB & & $0.8$GB & $7.9{+}1.9$MB & & $2.0$GB\\ 
\hline
\multicolumn{13}{|c|}{\bf segm}\\* 
\hline 
\hiderowcolors
liver.n26c10 & $4.2$ & $10.4$ & $6.4$s & $18$s & $18$s & $34$s & $14$s & $ 13$ & $ 64$ & $39$s & $157$ & $ 64$ \\* 
\hiderowcolors
\multicolumn{3}{|l|}{170$\times$170$\times$144, \ \ $2.1$GB} & $0.8$GB &  &  & $0.7$GB & $ 36{+} 12$MB & & $ 13$GB & $ 30{+} 13$MB & & $ 82$GB\\ 
\showrowcolors
\rowcolor{blue!10}
liver.n26c100 & $4.2$ & $11.1$ & $12$s & $26$s & $28$s & $39$s & $24$s & $ 15$ & $ 64$ & $35$s & $ 98$ & $ 64$ \\* 
\showrowcolors
\rowcolor{blue!10}
\multicolumn{3}{|>{\columncolor[rgb]{0.9,0.9,1}}l|}{170$\times$170$\times$144, \ \ $2.1$GB} & $0.8$GB &  &  & $0.7$GB & $ 38{+} 13$MB & & $ 16$GB & $ 30{+} 14$MB & & $ 66$GB\\ 
\hiderowcolors
liver.n6c10 & $4.2$ & $9.8$ & $7.2$s & $17$s & $25$s & $40$s & $14$s & $ 16$ & $ 64$ & $36$s & $151$ & $ 64$ \\* 
\hiderowcolors
\multicolumn{3}{|l|}{170$\times$170$\times$144, \ \ $498$MB} & $0.7$GB &  &  & $0.7$GB & $ 33{+} 12$MB & & $ 15$GB & $ 28{+} 13$MB & & $ 79$GB\\ 
\showrowcolors
\rowcolor{blue!10}
liver.n6c100 & $4.2$ & $10.5$ & $15$s & $30$s & $34$s & $44$s & $19$s & $ 17$ & $ 64$ & $32$s & $ 94$ & $ 64$ \\* 
\showrowcolors
\rowcolor{blue!10}
\multicolumn{3}{|>{\columncolor[rgb]{0.9,0.9,1}}l|}{170$\times$170$\times$144, \ \ $512$MB} & $0.8$GB &  &  & $0.7$GB & $ 35{+} 12$MB & & $ 14$GB & $ 29{+} 13$MB & & $ 70$GB\\ 
\hiderowcolors
babyface.n26c10 & $5.1$ & $47.3$ &  &  &  &  & $179$s & $ 38$ & $ 64$ & $222$s & $169$ & $ 64$ \\* 
\hiderowcolors
\multicolumn{3}{|l|}{250$\times$250$\times$81, \ \ $2.5$GB} & $3.7$GB &  &  &  & $156{+} 56$MB & & $102$GB & $173{+} 58$MB & & $0.6$TB\\ 
\showrowcolors
\rowcolor{blue!10}
babyface.n26c100 & $5.1$ & $49.0$ &  &  &  &  & $231$s & $ 44$ & $ 64$ & $262$s & $116$ & $ 64$ \\* 
\showrowcolors
\rowcolor{blue!10}
\multicolumn{3}{|>{\columncolor[rgb]{0.9,0.9,1}}l|}{250$\times$250$\times$81, \ \ $2.7$GB} & $3.8$GB &  &  &  & $156{+} 56$MB & & $115$GB & $180{+} 57$MB & & $0.6$TB\\ 
\hiderowcolors
babyface.n6c10 & $5.1$ & $11.1$ & $6.8$s & $38$s & $51$s & $68$s & $20$s & $ 17$ & $ 64$ & $100$s & $275$ & $ 64$ \\* 
\hiderowcolors
\multicolumn{3}{|l|}{250$\times$250$\times$81, \ \ $0.6$GB} & $1.0$GB &  &  & $0.9$GB & $ 22{+} 16$MB & & $ 19$GB & $ 37{+} 17$MB & & $261$GB\\ 
\showrowcolors
\rowcolor{blue!10}
babyface.n6c100 & $5.1$ & $11.5$ & $13$s & $71$s & $65$s & $87$s & $24$s & $ 19$ & $ 64$ & $74$s & $191$ & $ 64$ \\* 
\showrowcolors
\rowcolor{blue!10}
\multicolumn{3}{|>{\columncolor[rgb]{0.9,0.9,1}}l|}{250$\times$250$\times$81, \ \ $0.6$GB} & $1.0$GB &  &  & $0.9$GB & $ 22{+} 16$MB & & $ 18$GB & $ 37{+} 17$MB & & $189$GB\\ 
\hiderowcolors
adhead.n26c10 & $12.6$ & $31.5$ &  &  &  &  & $128$s & $ 17$ & $ 64$ & $224$s & $109$ & $ 64$ \\* 
\hiderowcolors
\multicolumn{3}{|l|}{256$\times$256$\times$192, \ \ $6.5$GB} & $6.2$GB &  &  &  & $153{+} 83$MB & & $ 84$GB & $195{+} 86$MB & & $0.8$TB\\ 
\showrowcolors
\rowcolor{blue!10}
adhead.n26c100 & $12.6$ & $31.6$ &  &  &  &  & $174$s & $ 21$ & $ 64$ & $269$s & $129$ & $ 64$ \\* 
\showrowcolors
\rowcolor{blue!10}
\multicolumn{3}{|>{\columncolor[rgb]{0.9,0.9,1}}l|}{256$\times$256$\times$192, \ \ $6.7$GB} & $6.3$GB &  &  &  & $153{+} 83$MB & & $ 90$GB & $196{+} 86$MB & & $0.8$TB\\ 
\hiderowcolors
adhead.n6c10 & $12.6$ & $11.6$ &  &  &  &  & $36$s & $ 13$ & $ 64$ & $119$s & $161$ & $ 64$ \\* 
\hiderowcolors
\multicolumn{3}{|l|}{256$\times$256$\times$192, \ \ $1.6$GB} & $2.5$GB &  &  &  & $ 34{+} 36$MB & & $ 31$GB & $ 77{+} 39$MB & & $372$GB\\ 
\showrowcolors
\rowcolor{blue!10}
adhead.n6c100 & $12.6$ & $11.7$ &  &  &  &  & $49$s & $ 20$ & $ 64$ & $121$s & $165$ & $ 64$ \\* 
\showrowcolors
\rowcolor{blue!10}
\multicolumn{3}{|>{\columncolor[rgb]{0.9,0.9,1}}l|}{256$\times$256$\times$192, \ \ $1.6$GB} & $2.5$GB &  &  &  & $ 34{+} 36$MB & & $ 36$GB & $ 77{+} 39$MB & & $354$GB\\ 
\hiderowcolors
bone.n26c10 & $7.8$ & $32.3$ &  &  &  &  & $25$s & $ 12$ & $ 64$ & $96$s & $148$ & $ 64$ \\* 
\hiderowcolors
\multicolumn{3}{|l|}{256$\times$256$\times$119, \ \ $3.9$GB} & $4.0$GB &  &  &  & $122{+} 61$MB & & $ 35$GB & $147{+} 63$MB & & $470$GB\\ 
\showrowcolors
\rowcolor{blue!10}
bone.n26c100 & $7.8$ & $32.4$ &  &  &  &  & $29$s & $ 14$ & $ 64$ & $68$s & $124$ & $ 64$ \\* 
\showrowcolors
\rowcolor{blue!10}
\multicolumn{3}{|>{\columncolor[rgb]{0.9,0.9,1}}l|}{256$\times$256$\times$119, \ \ $4.1$GB} & $4.0$GB &  &  &  & $122{+} 61$MB & & $ 39$GB & $147{+} 63$MB & & $321$GB\\ 
\hiderowcolors
bone.n6c10 & $7.8$ & $11.5$ & $7.7$s & $5.7$s & $17$s & $12$s & $7.2$s & $  9$ & $ 64$ & $37$s & $195$ & $ 64$ \\* 
\hiderowcolors
\multicolumn{3}{|l|}{256$\times$256$\times$119, \ \ $0.9$GB} & $1.5$GB &  &  & $1.4$GB & $ 62{+} 23$MB & & $ 13$GB & $ 52{+} 25$MB & & $188$GB\\ 
\showrowcolors
\rowcolor{blue!10}
bone.n6c100 & $7.8$ & $11.6$ & $9.1$s & $9.1$s & $22$s & $14$s & $8.7$s & $ 10$ & $ 64$ & $23$s & $ 65$ & $ 64$ \\* 
\showrowcolors
\rowcolor{blue!10}
\multicolumn{3}{|>{\columncolor[rgb]{0.9,0.9,1}}l|}{256$\times$256$\times$119, \ \ $1.0$GB} & $1.6$GB &  &  & $1.5$GB & $ 62{+} 23$MB & & $ 13$GB & $ 52{+} 25$MB & & $104$GB\\ 
\hiderowcolors
bone\_subx.n6c100 & $3.9$ & $11.8$ & $7.1$s & $6.3$s & $12$s & $6.4$s & $5.5$s & $ 12$ & $ 64$ & $9.4$s & $ 42$ & $ 64$ \\* 
\hiderowcolors
\multicolumn{3}{|l|}{128$\times$256$\times$119, \ \ $495$MB} & $0.8$GB &  &  & $0.7$GB & $ 39{+} 12$MB & & $7.1$GB & $ 29{+} 13$MB & & $ 42$GB\\ 
\showrowcolors
\rowcolor{blue!10}
bone\_subxy.n26c100 & $1.9$ & $32.2$ & $5.9$s & $3.9$s & $6.1$s & $4.6$s & $7.3$s & $ 13$ & $ 64$ & $8.7$s & $ 33$ & $ 64$ \\* 
\showrowcolors
\rowcolor{blue!10}
\multicolumn{3}{|>{\columncolor[rgb]{0.9,0.9,1}}l|}{128$\times$128$\times$119, \ \ $1.0$GB} & $1.0$GB &  &  & $0.8$GB & $ 92{+}851$MB & & $0.0$MB & $ 50{+} 16$MB & & $ 39$GB\\ 
\hiderowcolors
abdomen\_long.n6c10 & $144.4$ & $11.8$ &  &  &  &  & $179$s & $ 11$ &  &  & $>35$ &  \\* 
\hiderowcolors
\multicolumn{3}{|l|}{512$\times$512$\times$551, \ \ $ 19$GB} & $ 29$GB &  &  &  & $410{+}403$MB & & $196$GB &  & & ${>}1$TB\\ 
\showrowcolors
\rowcolor{blue!10}
abdomen\_short.n6c10 & $144.4$ & $11.8$ &  &  &  &  & $82$s & $ 11$ &  &  &  &  \\* 
\showrowcolors
\rowcolor{blue!10}
\multicolumn{3}{|>{\columncolor[rgb]{0.9,0.9,1}}l|}{512$\times$512$\times$551, \ \ $ 19$GB} & $ 29$GB &  &  &  & $410{+}403$MB & & $138$GB &  & & \\ 
\hline 
\end{longtable} 

}
\par
Our new algorithms computed flow values for all problems matching those provided in the dataset, except for the following cases:
\begin{itemize}
\item \verb=LB07-bunny-lrg=: no ground truth solution available (we found flow/cut of cost 15537565).
\item \verb=babyfacen26c10= and \verb=babyfacen26c100=: we found higher flow values than those which were provided in the dataset (we found flow/cut of cost 180946 and 1990729 resp.).
\end{itemize}
The latter problems appear to be the most difficult for S-ARD in terms of both time and number of sweeps. Despite this, S-ARD requires much fewer sweeps, and consequently much less disk I/O operations than the push-relabel variant. This means that in the streaming mode, where read and write operations take a lot of time, S-ARD is clearly superior. Additionally, we observe that the time it spends for computation is comparable to that of BK, sometimes even significantly smaller.
\par
Next, we studied the dependency of computation time and number of sweeps on the number of regions in the partition. We selected 3 representative instances of different problems and solved them using partitions into different number of regions. The results are presented in the Fig.~\ref{fig:sweeps_vs_regions}. The instance \verb=BL06-gargoyle-sml= was partitioned by the node number and the rest two problems were partitioned in 3D according to their grid sizes using variable number of slices in each dimension. These results shows that the computation time required is stable over a large range of partitions and the number of sweeps does not grow rapidly. Therefore, the partition for S-ARD can be selected to meet other requirements: memory consumption, number of computation units, \etc. We should note however, that with refining the partition the amount of shared memory grows proportionally to the number of boundary edges. In the limit of single-vertex regions the algorithm will turn into a very inefficient implementation of pure push-relabel.
\begin{figure}[ht]
\begin{center}
\setlength{\tabcolsep}{0pt}
\begin{tabular}{c}\includegraphics[width=0.5\linewidth]{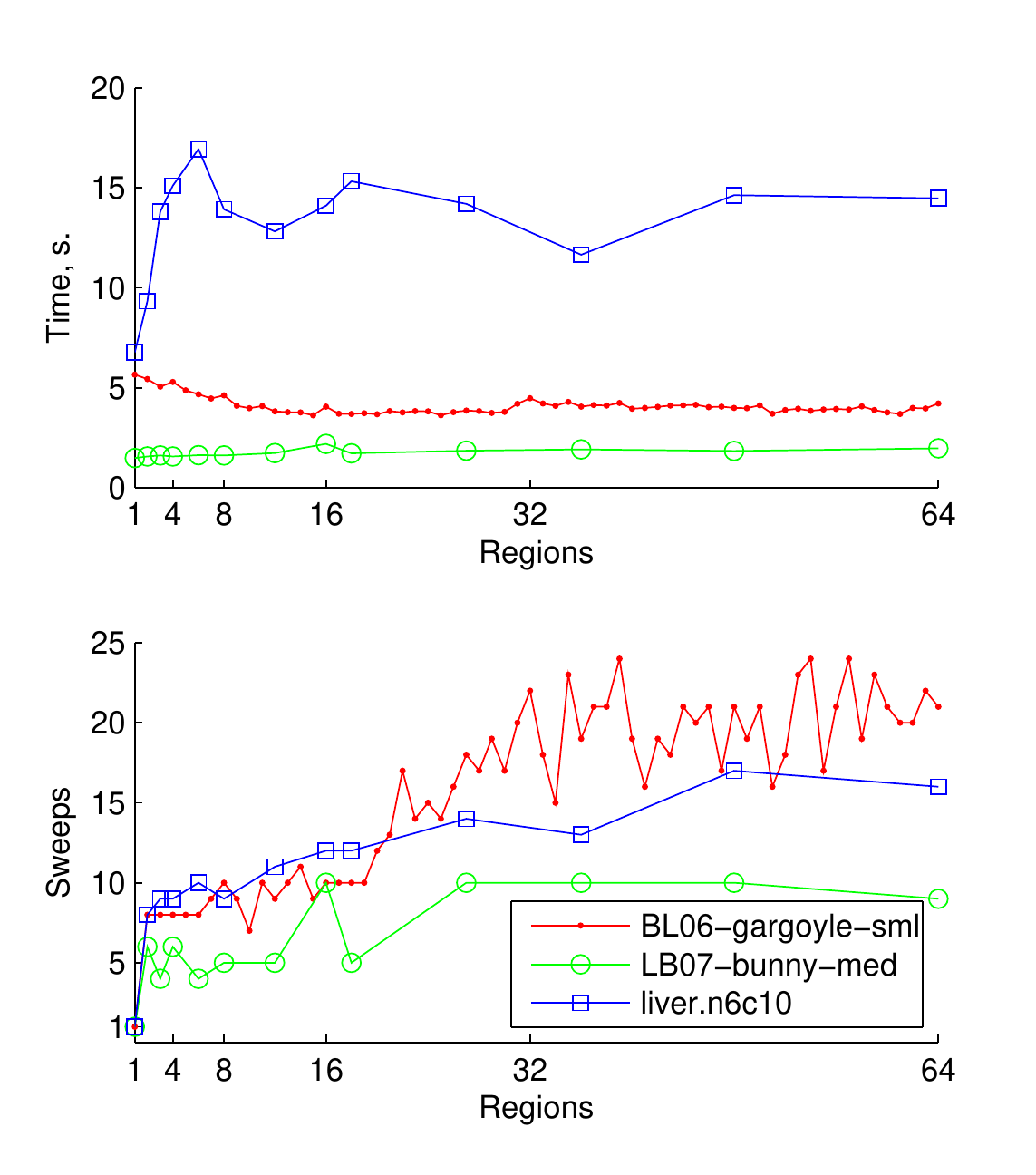}\\
\end{tabular}
\end{center}
\caption{Dependence on the number of regions for the representative instances of multiview, stereo and segmentation. Top: CPU time used. Bottom: number of sweeps.}
\label{fig:sweeps_vs_regions}
\end{figure}
\subsection{Parallel Competition}\label{sec:exp_parallel}
In this section we test parallel versions of our algorithms and compare them with two state-of-the-art methods. The experiments are conducted on the same machine as above (Intel Core 2 Quad CPU@2.66Hz) but allowing the use of all 4 CPUs. The goal is to see how the distributed algorithms perform in the simplified setting when they are run not in the network but on a single machine. For P-ARD/PRD we expect that the total required work would increase compared to the sequential versions because the discharges are executed concurrently. The relative speed-up therefore would be sublinear even if we managed to distribute the work between CPUs evenly. The tests are conducted on small and medium size problems (taking under 2GB of memory). For P-ARD and P-PRD we use the same partition into regions as in Table~\ref{table1}. For other solvers, discussed next, we tried to meet better their requirements.
\mypar{DD} The dual decomposition method\footnote{Multi-threaded maxflow library~\url{http://www.maths.lth.se/matematiklth/personal/petter/cppmaxflow.php}}~\cite{Strandmark10}. The algorithm for the integer maxflow/mincut problem presented in~\cite{Strandmark10} is a heuristic which has no guarantees. In particular, it is not guaranteed to terminate. The actual implementation uses additional randomization helping to ``guess'' the last bit of the solution. With this randomization switched off the algorithm did not terminate in 1000 iterations on a simple example of 4 nodes. The algorithm does not scale well with the number of regions in the partition, it performs better with fewer regions. We tested it with partitions into 2 regions and 4 regions (denoted {\bf DDx2} and {\bf DDx4} resp.). Naturally, with 2 regions it can utilize only 2 CPUs. To our surprise, the algorithm terminated on all of the stereo problems in a small number of iterations. However, on larger problems partitioned into 4 regions it exceeded its internal bound of iterations (1000) in many cases and returned without optima flow/cut. In such a case it provides only an approximate solution to the problem. Whether such a solution is of practical value is beyond us.
\mypar{RPR} A recently published implementation of Region Push Relabel~\cite{Delong08} by Sameh Khamis (v1.01,~\url{http://vision.csd.uwo.ca/code/}).
For RPR we constructed partition of the problem into smaller "blocks". Because regions in RPR are composed dynamically out of blocks (default is 8 blocks per region) we partitioned 2D problems into $64=8^2$ blocks and 3D problems into $512=8^3$ blocks. This partitioning was also empirically faster than a coarser one. The parameter \verb=DischargesPerBlock= was set by recommendation of authors to 500 for small problems (stereo) and to 15000 for big problems.
The implementation is specialized for regular grids, therefore multiview and KZ2 problems which do not have \verb=regulargrid= hint cannot be solved by this method. Because of the fixed graph layout in RPR, arcs which are reverse of each other are automatically grouped together, so RPR computes on a reduced graph compared to other methods. Let us also note that because of the dynamic regions, RPR is not fully suitable to run in a distributed system.
\par
The method~\cite{Liu10} (parallel, but not distributed) would probably be the fastest one in this competition (as could be estimated from the results reported in~\cite{Liu10}), however the implementation is not publicly available.
\par
{\scriptsize
\setlength{\tabcolsep}{5pt}
\setlength{\minrowclearance}{1pt}
\begin{longtable}{|l|c|lr|lr|lr|lr|lr|}
\caption[Parallel Competition]{Parallel Competition} \label{table_parallel} \\
\hline
\multicolumn{1}{|c|}{problem} & BK~\cite{BK-maxflow} & \multicolumn{2}{|c|}{DDx2~\cite{Strandmark10}} & \multicolumn{2}{|c|}{DDx4~\cite{Strandmark10}} & \multicolumn{2}{|c|}{P-ARD} & \multicolumn{2}{|c|}{P-PRD} & \multicolumn{2}{|c|}{RPR~\cite{Delong08}}\\*
\hline
\rowcolor{blue!10}
  & time & \multicolumn{10}{|>{\columncolor[rgb]{0.9,0.9,1}}c|}{time\ \ \ sweeps}\\*
\endfirsthead
\multicolumn{12}{c}%
{{\bfseries \tablename\ \thetable{} -- continued from previous page}} \\
\hline
\endhead
\hline \multicolumn{12}{|r|}{{Continued on next page}} \\ \hline
\endfoot
\hline \hline
\endlastfoot
\hline
\multicolumn{12}{|c|}{\bf stereo}\\* 
\hline 
\showrowcolors
\rowcolor{blue!10}
BVZ-sawtooth(20) & $0.68$s & $0.52$s & $  7$ & $0.37$s & $ 11$ & $0.30$s & $  7$ & $2.4$s & $ 31$ & $4.8$s & $274$\\* 
\hiderowcolors
BVZ-tsukuba(16) & $0.36$s & $0.28$s & $  6$ & $0.20$s & $  8$ & $0.17$s & $  5$ & $1.5$s & $ 33$ & $2.1$s & $197$\\* 
\showrowcolors
\rowcolor{blue!10}
BVZ-venus(22) & $1.2$s & $0.84$s & $  7$ & $0.59$s & $  9$ & $0.50$s & $  7$ & $4.9$s & $ 36$ & $8.0$s & $466$\\* 
\hiderowcolors
KZ2-sawtooth(20) & $1.8$s & $1.2$s & $ 11$ & $0.91$s & $ 16$ & $0.96$s & $  6$ & $4.9$s & $ 23$ &  & \\* 
\showrowcolors
\rowcolor{blue!10}
KZ2-tsukuba(16) & $1.1$s & $0.67$s & $  7$ & $0.52$s & $ 11$ & $0.70$s & $  8$ & $4.9$s & $ 22$ &  & \\* 
\hiderowcolors
KZ2-venus(22) & $2.8$s & $1.9$s & $  7$ & $1.3$s & $ 12$ & $1.5$s & $ 10$ & $10$s & $ 39$ &  & \\* 
\hline
\multicolumn{12}{|c|}{\bf multiview}\\* 
\hline 
\showrowcolors
\rowcolor{blue!10}
BL06-camel-med & $25$s & $18$s & $221$ & $13$s & $260$ & $8.7$s & $ 14$ & $81$s & $322$ &  & \\* 
\hiderowcolors
BL06-camel-sml & $0.98$s & $0.63$s & $ 11$ & $0.49$s & $ 27$ & $0.49$s & $ 10$ & $2.5$s & $ 70$ &  & \\* 
\showrowcolors
\rowcolor{blue!10}
BL06-gargoyle-lrg & $245$s & $120$s & $517$ & mem &  & $58$s & $ 23$ & mem &  &  & \\* 
\hiderowcolors
BL06-gargoyle-med & $115$s & $59$s & $ 20$ & $38$s & $ 50$ & $27$s & $ 21$ & $79$s & $219$ &  & \\* 
\showrowcolors
\rowcolor{blue!10}
BL06-gargoyle-sml & $6.1$s & $3.0$s & $ 19$ & $1.9$s & $ 19$ & $1.6$s & $ 10$ & $2.4$s & $ 52$ &  & \\* 
\hline
\multicolumn{12}{|c|}{\bf surface}\\* 
\hline 
\hiderowcolors
LB07-bunny-med & $1.6$s & $1.3$s & $ 11$ & $1.1$s & $ 11$ & $1.3$s & $ 13$ & $12$s & $ 35$ & $37$s & $349$\\* 
\showrowcolors
\rowcolor{blue!10}
LB07-bunny-sml & $0.17$s & $0.12$s & $ 11$ & $0.12$s & $ 11$ & $0.21$s & $  8$ & $0.58$s & $ 21$ & $3.5$s & $ 99$\\* 
\hline
\multicolumn{12}{|c|}{\bf segm}\\* 
\hline 
\hiderowcolors
liver.n6c10 & $7.2$s & {\red\bfseries\sffamily X }$7.6$s & $1000$ & {\red\bfseries\sffamily X }$22$s & $1000$ & $8.9$s & $ 23$ & $23$s & $164$ & $5.1$s & $1298$\\* 
\showrowcolors
\rowcolor{blue!10}
liver.n6c100 & $15$s & $17$s & $ 31$ & {\red\bfseries\sffamily X }$21$s & $1000$ & $12$s & $ 17$ & $23$s & $102$ & $7.3$s & $1722$\\* 
\hiderowcolors
babyface.n6c10 & $6.8$s & $8.8$s & $ 61$ & {\red\bfseries\sffamily X }$24$s & $1000$ & $12$s & $ 22$ & $61$s & $135$ & $17$s & $4399$\\* 
\showrowcolors
\rowcolor{blue!10}
babyface.n6c100 & $13$s & $16$s & $338$ & {\red\bfseries\sffamily X }$20$s & $1000$ & $17$s & $ 23$ & $61$s & $179$ & $22$s & $4833$\\* 
\hiderowcolors
bone.n6c10 & $7.7$s & $5.2$s & $ 22$ & {\red\bfseries\sffamily X }$8.2$s & $1000$ & $4.9$s & $ 17$ & $16$s & $182$ & $6.3$s & $918$\\* 
\showrowcolors
\rowcolor{blue!10}
bone.n6c100 & $9.1$s & $5.3$s & $ 12$ & $4.1$s & $ 17$ & $6.2$s & $ 13$ & $14$s & $ 70$ & $7.9$s & $1070$\\* 
\hiderowcolors
bone\_subx.n6c100 & $7.1$s & $6.3$s & $ 24$ & $5.2$s & $ 34$ & $3.9$s & $ 17$ & $5.8$s & $ 48$ & $1.5$s & $747$\\* 
\showrowcolors
\rowcolor{blue!10}
bone\_subxy.n26c100 & $5.9$s & $3.4$s & $ 11$ & $3.2$s & $ 12$ & $5.8$s & $ 16$ & $6.0$s & $ 37$ & hang & \\* 
\hline 
\end{longtable} 

}
\par The results are summarized in table~\ref{table_parallel}. The time reported is the wall clock time passed in the calculation phase, not including any time for graph construction. The number of sweeps for DD has the same meaning as for P-ARD/PRD, it is the number of times all regions are synchronously processed. RPR however is asynchronous and uses dynamic regions. For it we define sweeps = \verb=block_discharges=/\verb=number_of_blocks=.
\par
Comparing to Table~\ref{table1}, we see that P-ARD on 4 CPUs is about $1.5-2.5$ times faster than S-ARD. The speed-up over BK varies from 0.8 on \verb=livern6c10= to more than 4 on \verb=gargoyle=. 
\par
We see that DD gets lucky some times and solves the problem really quickly, but often it fails to terminate. We also observe that our variant of P-PRD (based on highest first selections rule) is a relatively slow, but robust distributed method. RPR, which is based on FIFO selection rule, is competitive on the 3D segmentation problems but is slow on other problems, despite its compile-time optimization for the particular graph structure. It is also uses relatively higher number of blocks, The version we tested always returned the correct flow value but often a wrong (non-optimal) cut. Additionally, for 26 connected \verb=bone_subxy.n26c100= it failed to terminated within 1 hour.

\section{Region Reduction}\label{sec:preprocessing}
Some vertices become disconnected from the sink in the course of the studied	algorithms. If they are still reachable from the source, they must belong to the source set of any optimal cut. Such vertices do not participate in further computations and the problem can be reduced by excluding them. Unfortunately, the opposite case, when a vertex must be strictly in the sink set is not discovered until the very end of the algorithm.
\par
The following algorithm attempts to identify as many nodes as possible for a given region. It is based on the following simple consideration: if a node is disconnected from the sink in $G^R$ as well as from the region boundary, $B^R$, then it is disconnected from the sink in $G$; if a node is not reachable from the source in $G^R$ as well as from $B^R$ then it is not reachable from the source in $G$.
\par 
Let us say that a node $v$ is a {\em strong source node} (resp. a {\em strong sink node}) if for any optimal cut $(C, \bar C)$ $v\in C$ (resp. $v\in \bar C$). Similarly, $v$ will be called a {\em weak source node} (resp. {\em weak sink node}), if there exist an optimal cut $(C, \bar C)$ such that $v\in C$ (resp. $v\in \bar C$).
\par
Kovtun~\cite{KovtunPhD} suggested to solve two auxiliary problems, modifying network $G^R$ by adding infinite capacity links from the boundary nodes to the sink and in the second problem adding infinite capacity links from the source to the boundary nodes. In the first case, if $v$ is a strong source node in the modified network $G^R$, it is also a strong source node in $G$. Similarly, the second auxiliary problem allows to identify strong sink nodes in $G$. It requires solving a maxflow problem on $G^R$ twice. We improve this construction by reformulating it as the following algorithm finding a single flow in $G^R$.
\par
\begin{algorithm}
\tcc{Input: network $G^R$, boundary $B^R$}
$\Augment(s,t)$\;\label{AP augment st}
$B^S := \{v \mid v\in B^R, s\rightarrow v\}$\tcc*{source boundary set}\label{AP boundary split}
$B^T := \{v \mid v\in B^R, v\rightarrow t\}$\tcc*{sink boundary set}
$\Augment(s,B^S)$\;\label{AP augment S}
$\Augment(B^T,t)$\;\label{AP augment T}
\ForEach{$v\in R$}{
	\lIf{$s\rightarrow v$}{$v$ is strong source node}\;
	\lIf{$v\rightarrow t$}{$v$ is strong sink node}\;
	\Other{
		\lIf{$v\nrightarrow B^R$}{$v$ is weak source node}\;
		\lIf{$B\nrightarrow v^R$}{$v$ is weak sink node}
	}
}
\caption{Region Reduction $(G^R,B^R)$}\label{alg:preprocessing}
\end{algorithm}

\begin{statement}
Sets $B^S$ and $B^T$ constructed in step~\ref{AP boundary split} are disjoint.
\end{statement}
\begin{proof}
We have $s\nrightarrow t$ after step~\ref{AP augment st}, hence there cannot exist simultaneously a path from~$s$ to $v$ and a path from~$v$ to $t$.
\qed
\end{proof}
After step~\ref{AP augment st} the network $G^R$ is split into two disconnected networks: with nodes reachable from $s$ and nodes from which $t$ is reachable. Therefore any augmentations occurring in steps \ref{AP augment S} and \ref{AP augment T} act on their respective subnetworks and can be carried independently of each other. On the output of \Algorithm{alg:preprocessing} we have: $s\nrightarrow B^R\cup\{t\}$ and $B^R\cup\{s\}\nrightarrow t$. The classification of nodes is shown in \Figure{fig:preproc}.
\begin{figure}[!ht]
\centering
\begin{tabular}{ccc}
\begin{tabular}{c}
\includegraphics[width=0.25\linewidth]{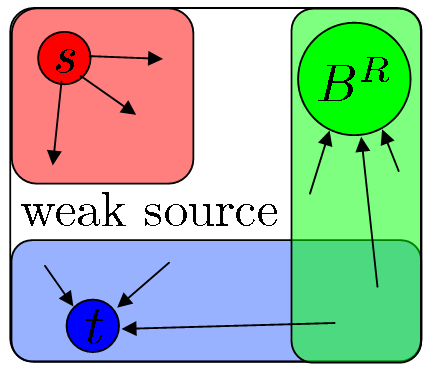}
\end{tabular} & \parbox{0.1\linewidth}{\ } &
\begin{tabular}{c}
\includegraphics[width=0.25\linewidth]{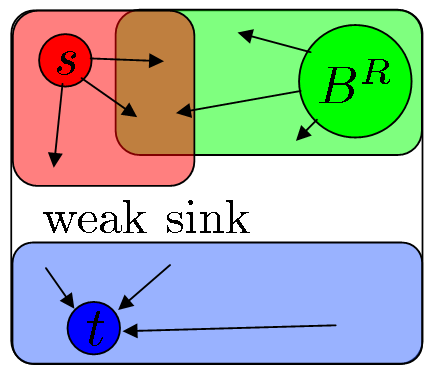}
\end{tabular}\\
(a) & & (b)
\end{tabular}
\caption{Classification of nodes in $V^R$ build by \Algorithm{alg:preprocessing}. Nodes reachable from $s$ are strong source nodes. Nodes from which $t$ is reachable are strong sink nodes. The remaining nodes can be classified as weak source (a) if they cannot reach boundary, or as weak sink (b) if they are not reachable from the boundary. Some nodes are both: weak source and weak sink, this means they can be on both sides of an optimal cut (but not independently).}
\label{fig:preproc}
\end{figure}
\par
Augmenting on $(s,t)$ in step~\ref{AP augment st} and on $(s,B^S)$ in the step~\ref{AP augment S} is the same work as done in ARD (where $(s,B^S)$ paths are augmented in the order of labels of $B^S$). This is not a coincidence, these algorithms are very much related. However, the augmentation on $(B^T,t)$ in step~\ref{AP augment T} cannot be executed during ARD. It would destroy validity of the labeling. It may only be executed during the first sweep of S-ARD, as an initialization. Otherwise, we may consider Algorithm~\ref{alg:preprocessing} as a general preprocessing.
\par
If $v$ is a weak source node, it follows that it is not a strong sink node. In the preflow pushing algorithms we find the cut $(\bar T,T)$, where $T$ is the set of all strong sink nodes in $G$. We consider that v is {\em decided} if it is a strong sink or a weak source node.
\begin{table}[!ht]
\scriptsize
\begin{center}
\setlength{\tabcolsep}{2pt}
\begin{tabular}{|lc|lc|lc|lc|}
\hline
BVZ-sawtooth(20) & 80.0\% & LB07-bunny-sml & 15.6\% & bone.n26c100 & 6.9\% & bone\_subxyz.n6c100 & 6.6\%\\
BVZ-tsukuba(16) & 72.8\% & liver.n26c10 & 7.1\% & bone.n6c10 & 8.8\% & bone\_subxyz\_subx.n26c10 & 7.9\%\\
BVZ-venus(22) & 70.2\% & liver.n26c100 & 5.3\% & bone.n6c100 & 7.0\% & bone\_subxyz\_subx.n26c100 & 6.6\%\\
KZ2-sawtooth(20) & 85.0\% & liver.n6c10 & 7.2\% & bone\_subx.n26c10 & 6.6\% & bone\_subxyz\_subx.n6c10 & 8.2\%\\
KZ2-tsukuba(16) & 69.9\% & liver.n6c100 & 5.3\% & bone\_subx.n26c100 & 6.6\% & bone\_subxyz\_subx.n6c100 & 6.6\%\\
KZ2-venus(22) & 75.8\% & babyface.n26c10 & 29.3\% & bone\_subx.n6c10 & 6.3\% & bone\_subxyz\_subxy.n26c10 & 11.3\%\\
BL06-camel-lrg & 2.0\% & babyface.n26c100 & 30.9\% & bone\_subx.n6c100 & 6.3\% & bone\_subxyz\_subxy.n26c100 & 9.5\%\\
BL06-camel-med & 2.3\% & babyface.n6c10 & 35.4\% & bone\_subxy.n26c10 & 6.6\% & bone\_subxyz\_subxy.n6c10 & 12.7\%\\
BL06-camel-sml & 4.6\% & babyface.n6c100 & 33.7\% & bone\_subxy.n26c100 & 6.6\% & bone\_subxyz\_subxy.n6c100 & 9.3\%\\
BL06-gargoyle-lrg & 6.0\% & adhead.n26c10 & 0.3\% & bone\_subxy.n6c10 & 6.4\% & abdomen\_long.n6c10 & 1.7\%\\
BL06-gargoyle-med & 2.4\% & adhead.n26c100 & 0.3\% & bone\_subxy.n6c100 & 6.3\% & abdomen\_short.n6c10 & 6.3\%\\
BL06-gargoyle-sml & 9.8\% & adhead.n6c10 & 0.2\% & bone\_subxyz.n26c10 & 6.6\% &  & \\
LB07-bunny-lrg & 11.4\% & adhead.n6c100 & 0.1\% & bone\_subxyz.n26c100 & 6.6\% &  & \\
LB07-bunny-med & 13.1\% & bone.n26c10 & 8.7\% & bone\_subxyz.n6c10 & 6.6\% &  & \\
\hline
\end{tabular}

\end{center}
\caption{Percentage of nodes which can be decided by preprocessing. The problems are partitioned into regions the same way as in Table~\ref{table1}. For stereo problems the average number over subproblems is shown.}\label{table2}
\end{table}
\par
Table~\ref{table2} gives the percentage of how many vertices are decided (and hence can be excluded from the problem) by Algorithm~\ref{alg:preprocessing} for computer vision problems. It is seen that in stereo problems, a large percent of vertices is decided. These problems are rather local and potentially can be fully solved by using Algorithm~\ref{alg:preprocessing} several times in overlapping windows. In contrast, only a small fraction can be decided locally for the other problems.
%

\section{Appendix A: Tightness of $O(n^2)$ bound for PRD.}
Here we give an example of a network, its partition into regions and a sequence of valid push and relabel operations, implementing PRD, such that sequential region discharge requires $O(n^2)$ sweeps. Because there is only one active node at any time, it also applies to parallel PRD.
\par
We start by an auxiliary example, in which the preflow is transfered from a vertex to a boundary vertex with a higher label. In this example some inner vertices of a region are relabeled, but not the boundary vertices. Therefore the total number of sweeps cannot be bounded by the number of relabellings of the boundary vertices only.
\begin{example}
Consider a network of 6 regular nodes in Figure~\ref{fig:example1}. Assume all edges have infinite capacity, so only non-saturating pushes occur. There are two regions $R_1 = \{1,2,3,4,5\}$ and $R_2 = \{6\}$. Figure~\ref{fig:example1} shows a sequence of valid push and relabel operations. We see that some nodes were raised due to relabel, but the net effect is that flow excess from node 1 was transfered to node 6, which has a higher label. Moreover none of the boundary nodes (nodes 1,5,6) were relabeled.
\end{example}
\begin{figure}[!ht]
\begin{center}
\begin{tabular}{c}
\includegraphics[width=\linewidth]{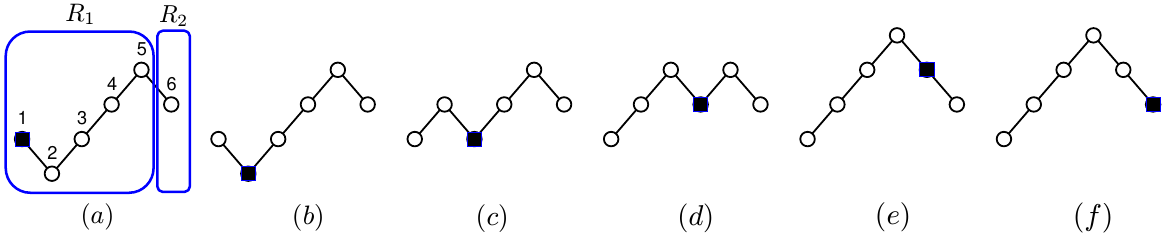}\\
\end{tabular}\\
\end{center}
\caption{Steps of Example 1. Node's height correspond to its label. Black box shows the node with excess. The source node and think node are not shown. (a) (b) flow excess is pushed to node 2; (c) node 2 is relabeled, so that two pushes are available and excess is pushed to node 3; (d-f) similar.
}
\label{fig:example1}
\end{figure}
\begin{figure}[ht]
\begin{center}
\includegraphics[width=\linewidth]{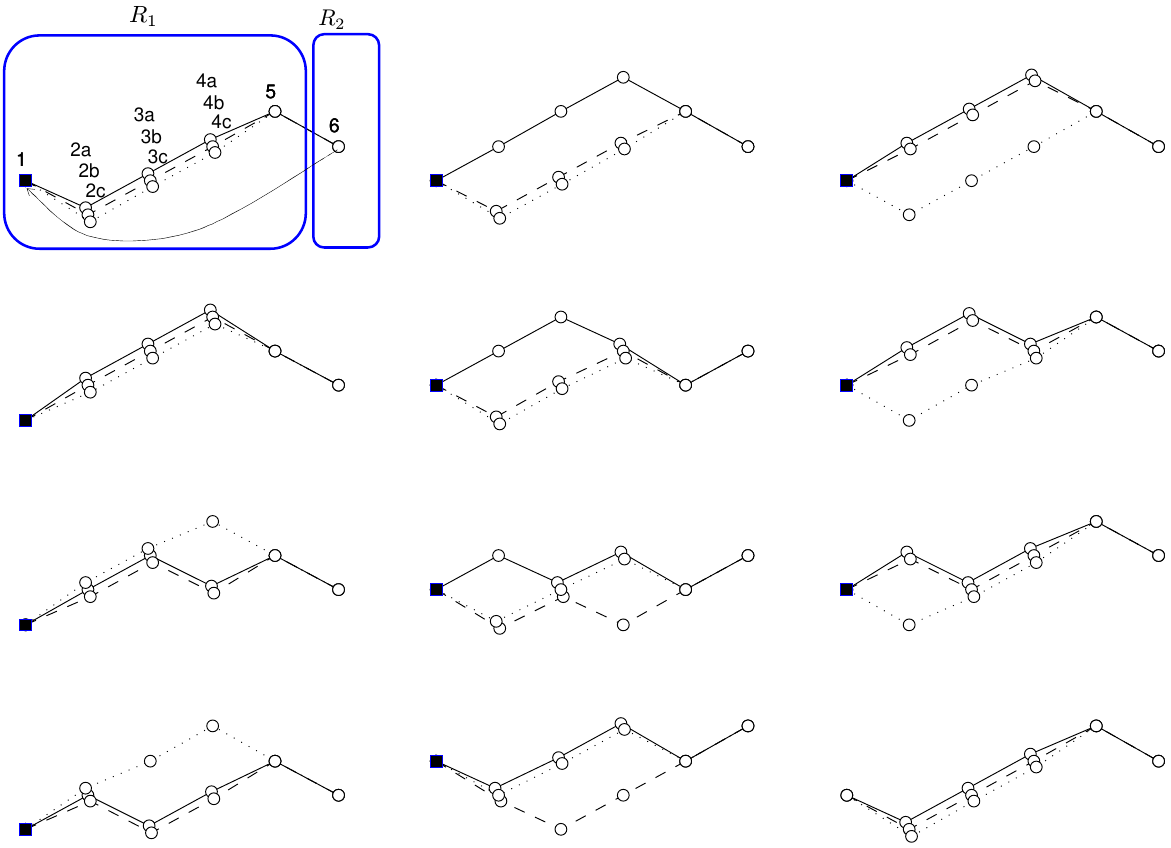}
\end{center}
\caption{Steps of Example 2. Top left: a network with several chains of nodes like in Example 1. Nodes 1,5,6 are common for all chains but there are separate copies of nodes 2,3,4 denoted by letters. In addition, there is a reverse arc from node 6 to node 1. From left to right, top to bottom: one step of transferring a flow from node 1 to node 6 using one of the chains and then pushing it through the arc (6,1), relabeling 6 when necessary.}
\label{fig:example2}
\end{figure}
\begin{example}
\par
Consider the network in Figure~\ref{fig:example2}. The first step corresponds to a sequence of push and relabel operations (same as in Figure~\ref{fig:example1}) applied to the chain $(1, 2a, 3a, 4a, 5, 6)$. Each next step starts with the excess at node 1. Chains are selected in turn in the order $a,b,c$. It can be verified from the figure that each step is a valid possible outcome of PRD applied first to $R_1$ and then to $R_2$. The last configuration repeats the first one with all labels raised by 2, so exactly the same loop may be repeated many times.
\par
It is seen that nodes $1,5,6$ are relabeled only during pushes on paths $a$ and $b$ and never during pushes on path $c$. If there were more paths like path $c$, it would take many iterations (= number of region discharge operations) before boundary vertices are risen.
Let there be $k$ paths in the graph ($d$, $e$\dots), handled exactly the same way as path $c$. 
The number of nodes in the graph is $n=O(k)$. It will take $O(n)$ region discharge operations to perform each loop, raising all nodes by 2. Therefore until node 1 reaches the maximal label it will take $O(n^2)$ steps.
\par
Because there is only one active node at any time, this example is independent of the rule used to select the active node (highest label selection rule or FIFO rule). Also, noting that the number of boundary nodes is 3, we see that our S-ARD algorithm will terminate in a constant number of sweeps for arbitrary $k$.
\end{example}

\section{Appendix B: Relation to Dual Decomposition}\label{sec:DD}
In our approach we partition the set of vertices into regions and couple the regions by sending the flow through the inter-region edges. In the dual decomposition for {\sc mincut}~\cite{Strandmark10} detailed below, a separator set of the graph is selected, each subproblem gets a copy of the separator set and the coupling is achieved via the constraint that the cut of the separator set must be consistent across the copies. We now show how the dual variables in~\cite{Strandmark10} can be interpreted as flow, thus relating approach~\cite{Strandmark10} to ours.
%
%
\begin{figure}[ht]
\begin{center}
\setlength{\tabcolsep}{0pt}
\begin{tabular}{c}\includegraphics[width=0.8\linewidth]{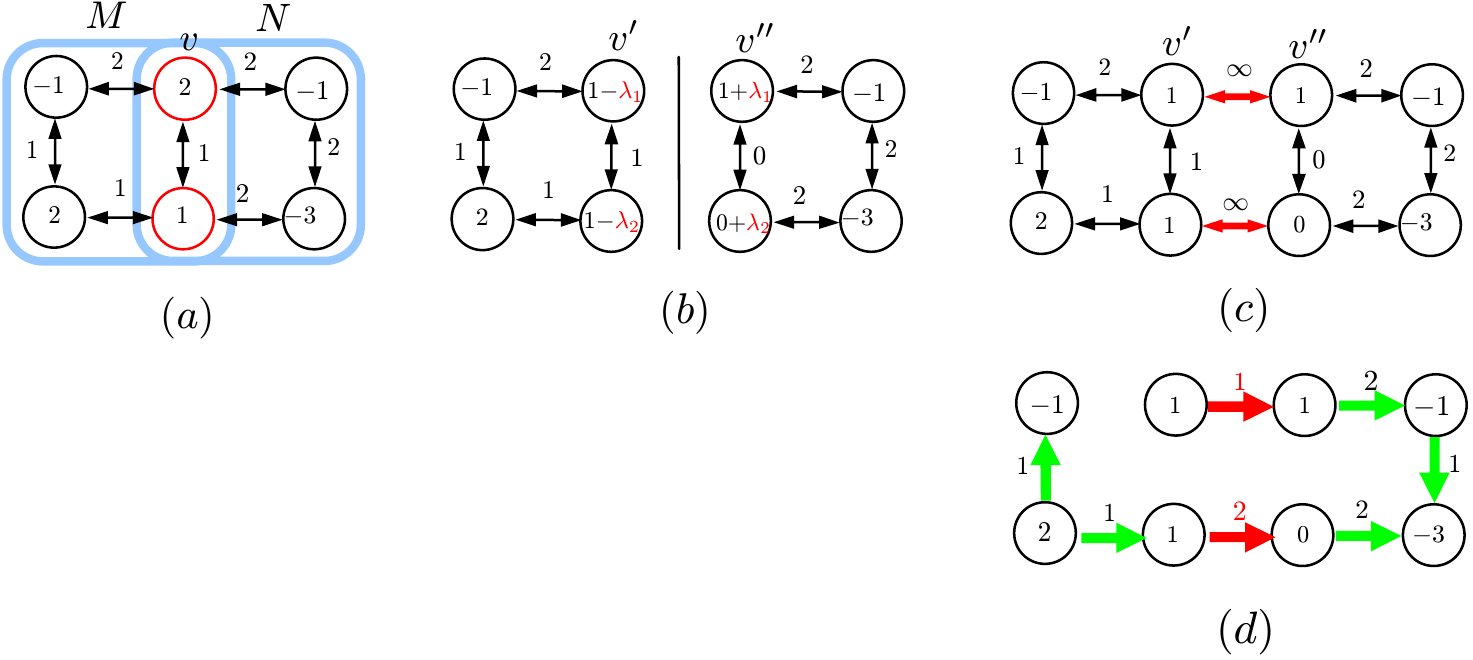}\\
\end{tabular}
\end{center}
\caption{Interpretation of dual decomposition. (a) Example of a network with denoted capacities. Terminal capacities are shown in circles, where ``$+$'' denotes $s$-link and ``$-$'' denotes $t$-link. $M\cap N$ is a separator set. (b) The network is decomposed into two networks holding copies of the separator set. The associated capacities are divided (not necessarily evenly) between two copies. The variable $\lambda_1$ is the Lagrangian multiplier of the constraint $x_v=y_v$. (c) Introducing edges of infinite capacity enforces the same constraint, that $v'$ and $v''$ are necessarily in the same cut set of any optimal cut. (d) A maximum flow in the network (c), the flow value on the red edges corresponds to the optimal value of dual variables $\lambda$.}
\label{fig:dd_graph}
\end{figure}
\par
Decomposition of the {\sc mincut} problem into two parts is formulated in~\cite{Strandmark10} as follows. Let $M,N\subset V$ are such that $M\cup N = V$, $\{s,t\}\subset M\cap N$ and there are no edges in $E$ from $M\backslash N$ to $N\backslash M$ and vice-versa. Let $x\colon M\to \{0,1\}$ and $y\colon N\to \{0,1\}$ be the indicator variables of the cut set, where $0$ corresponds to the source set. Then {\sc mincut} problem can be reformulated as:
\begin{equation}
\begin{aligned}
& \min_{x,y} C^M(x)+C^N(y),\\
& \st \begin{cases}
x_s = x_s = 0,\\
x_t = y_t = 1,\\
x_i = y_i,\tab \forall i\in M\cap N,
\end{cases}
\end{aligned}
\end{equation}
where 
\begin{subequations}
\begin{align}
& C^M(x) = \sum_{(i,j)\in E^M} c^M(i,j)(1-x_i)x_i,\\
& C^N(y) = \sum_{(i,j)\in E^N} c^N(i,j)(1-y_i)y_i;
\end{align}
\end{subequations}
\begin{subequations}
\begin{align}
& c^M(i,j)+c^N(i,j) = c(i,j),\\
& c^M(i,j) = 0 \tab \forall i,j\in N\backslash M,\\
& c^N(i,j) = 0 \tab \forall i,j\in M\backslash N,
\end{align}
\end{subequations}
$E^M = (M,M) = (M{\times}M)\cap E$ and $E^N = (N,N)$.
The minimization over $x$ and $y$ decouples once the constraint $x_i=y_i$ is absent. The dual decomposition approach is to solve the dual problem:
\begin{equation}
\max\limits_{\lambda} \Big[ \min\limits_{\substack{x\\ x_s=0\\ x_t=1}}\Big(C_M(x) +\sum_{i\in M\cup N}\lambda_i (1-x_s)x_i \Big) +  \min\limits_{\substack{x\\ x_s=0\\ x_t=1}}\Big(C_N(y) -\sum_{i\in M\cup N}\lambda_i (1-y_s)y_i \Big) \Big].
\end{equation}
\par
We observe that dual variables $\lambda$ correspond to flow on the artificial edges of infinite capacity between variable $x_i$ and $y_i$ like it is explained in Fig.~\ref{fig:dd_graph}. For a problem with integer capacities there exist an integer optimal flow. This observation provides an alternative proof of~\cite[Theorem 2]{Strandmark10}, stating that there exist an integer optimal $\lambda$.
\par
The algorithm we introduced could be applied to such a decomposition by running it on the extended graph (\cf Fig.~\ref{fig:dd_graph}(c)), where vertices of the separator set are duplicated and linked by additional edges of infinite capacity. It could be observed, however, that this construction does not allow to reduce the number of boundary vertices or the number of inter-region edges, while the size of the regions increases. Therefore it is not beneficial with our approach.
\par
\section{Conclusion}
We developed a new algorithm for {\sc mincut} problem on sparse graphs, which combines augmenting paths and push-relabel approaches. The main result of this work is the worst case complexity guarantee of $O(|\B|^2)$ sweeps for the sequential and parallel variants of the algorithm (S/P-ARD). We also gave a novel parallel version of the region push-relabel algorithm of~\cite{Delong08}, an improved algorithm for local problem reduction (Sect.~\ref{sec:preprocessing}) and a number of auxiliary results.
\par
Both is theory and practice (randomized tests) S-ARD has a better asymptote in the number of sweeps than the push-relabel variant. Experiments on real instances showed that when run on a single CPU and the whole problem fits into memory, S-ARD is comparable in speed with the non-distributed BK implementation, being even significantly faster in some cases. When only one region is loaded into memory at a time, S-ARD used much fewer disk I/O than S-PRD. We also demonstrated that the running time and the number of sweeps are very stable with respect to  the partition of the problem into up to $64$ regions. In the parallel mode, using 4 CPUs, P-ARD achieves a relative speedup of about $1.5-2.5$ times over S-ARD and uses just slightly larger number of sweeps. P-ARD compares favorably to other parallel algorithms, being a robust method suitable for a use in a distributed system.
\par
Our algorithms are implemented for generic graphs. Clearly, it is possible to specialize the implementation for grid graphs, which would reduce the memory consumption and might reduce the computation time as well. 
\par
A practically useful mode could be actually a combination of parallel and sequential, when several regions are loaded into the memory at once and processed in parallel. There are several particularly interesting combinations of algorithm parallelization and hardware, which may be exploited: 1) parallel on several CPUs, 2) parallel on several network computers, 3) sequential, using Solid State Drive, 4) sequential, using GPU for solving region discharge.
\par
There is the following simple way how to allow region overlaps in our framework. Consider a sequential algorithm, which is allowed to keep 2 regions in memory at a time. It can then load pairs of regions $(1,2)$, $(2,3)$, $(3,4)$\dots, and alternate between the regions is a pair until both are discharged. With PRD this is efficiently equivalent to discharging twice larger regions with a ${1/2}$ overlap and may significantly decrease the number of sweeps required. In the case of a 3D grid, it would take $8$ times more regions to allow overlaps in all dimensions. However, to meet the same memory limit, the regions have to be $8$ times smaller. It has to be verified experimentally whether it is beneficial. The RPR implementation of~\cite{Delong08} uses exactly this strategy, a dynamic region is composed out of a number of smaller blocks and blocks are discharged until the whole region is not discharged. It is likely that with this approach we could further reduce the disk I/O in the case of the streaming solver.



%

\nocite{GT88,
Delong08,
Delong09-segm,
Juan-06-active-cuts,
Strandmark10,
KovtunPhD,
Liu10,
Liu09,
Anderson95,
Goldberg91,
Goldberg-PhD,
BK-maxflow,
Cherkassky-94,
BVZ,
KZ2,
LB06,
BL06,
LB07,
BJ01,
BF06,
BK03,
Labutit09,
Jancosek11}
\clearpage

{\footnotesize
\bibliographystyle{abbrv}
\bibliography{../bib/references}
}
\end{document}